\def\a{{\alpha}}
\def\b{{\beta}}
\def\be{{\beta}}
\def\ga{\gamma}
\def\de{\delta}
\def\si{\sigma}
\def\om{\omega}
\def\Om{\Omega}
\def\th{\theta}
\def\ze{\zeta}
\def\varep{\varepsilon}
\def\pr{{\partial}}
\def\les{\lesssim}
\def\rh{{\rho}}
\def\et{{\eta}}
\newcommand{\ab}{{\underline{\alpha}}}
\newcommand{\beb}{{\underline{\beta}}}
\def\MMf{{\mathfrak{M}}}
\def\xdmf{{\dot{\mathfrak{X}}}}
\def\XX{{\mathcal{X}}}
\def\YY{{\mathcal{Y}}}
\def\ZZ{{\mathcal{Z}}}
\def\gdcd{{\dot{\gd_c}}}
\def\Omd{{\dot{\Om}}}
\def\etabd{{\dot{\etab}}}
\def\etad{{\dot{\eta}}}
\def\Kd{{\dot{K}}}
\def\Ldo{{\overset{\circ}{\Ld}}}
\def\phid{{\dot{\phi}}}
\def\omd{{\dot{\om}}}
\def\abd{{\dot{\ab}}}
\def\ombd{{\dot{\omb}}}
\def\ad{{\dot{\alpha}}}
\def\AA{{\mathcal A}}
\def\BB{{\mathcal B}}
\def\CC{{\mathcal C}}
\def\MM{{\mathcal M}}
\def\FF{{\mathcal F}}
\def\HH{{\mathcal H}}
\def\GG{{\mathcal G}}
\def\OO{{\mathcal O}}
\def\SS{{\mathcal S}}
\def\DD{{\mathcal D}}
\def\PP{{\mathcal P}}
\def\QQ{{\mathcal Q}}
\def\HHb{\underline{\mathcal H}}
\def\D{{\bf D}}
\def\g{{\bf g}}
\def\t{{\bf t}}
\def\SSS{{\mathbb{S}}}
\def\xd{{\dot{x}}}
\def\mfc{{\dot{\mathfrak{c}}}}
\def\mfm{\mathfrak{m}}
\def\mfx{\mathfrak{X}}
\def\mfxd{\dot{\mathfrak{X}}}
\DeclareMathOperator{\Div}{\mathrm{div}}
\DeclareMathOperator*{\Curl}{\mathrm{curl}}
\newcommand{\pd}{\pd \mkern-9mu/\ \mkern-7mu}
\newcommand{\Lied}{\mathcal{L} \mkern-9mu/\ \mkern-7mu}
\newcommand{\DDd}{\DD \mkern-10mu /\ \mkern-5mu}
\newcommand{\Db}{\underline{D}}
\newcommand{\RRRic}{\mathrm{Ric}}
\newcommand{\Divd}{\Div \mkern-17mu /\ }
\newcommand{\Divdo}{{\overset{\circ}{\Div \mkern-17mu /\ }}}
\newcommand{\Curld}{\Curl \mkern-17mu /\ }
\newcommand{\Nabs}{\nabla \mkern-13mu /\ }
\newcommand{\Ld}{\triangle \mkern-12mu /\ }
\newcommand{\iin}{\in \mkern-16mu /\ \mkern-5mu}
\newcommand{\gd}{{g \mkern-8mu /\ \mkern-5mu }}
\newcommand{\di}{\mbox{$d \mkern-9.2mu /$\,}}
\newcommand{\LIE}{\mathcal{L} \mkern-11mu /\  \mkern-3mu }
\newcommand{\Dh}{\widehat{D}}
\newcommand{\Dbh}{\widehat{\Db}}
\newcommand{\trchi}{{\tr \chi}}
\newcommand{\trchib}{{\tr \chib}}
\newcommand{\chihd}{{\dot{\chih}}}
\newcommand{\chibhd}{{\dot{\chibh}}}
\newcommand{\omtrchid}{\dot{(\Om \tr \chi)}}
\newcommand{\omtrchibd}{\dot{(\Om \tr \chib)}}
\def\ni{\noindent}
\def\Lb{{\,\underline{L}}}
\def\tr{\mathrm{tr}}
\def\chih{{\widehat \chi}}
\def\chib{{\underline \chi}}
\def\chibh{{\underline{\chih}}}
\def\etab{{\underline \eta}}
\def\omb{{\underline{\om}}}
\def\bb{{\underline{\b}}}
\def\aa{{\underline{\a}}}
\def\th{{\theta}}
\def\SSt{\tilde\SS}
\def\ut{\tilde u}
\def\vt{\tilde v}
\def\tht{\tilde \th}
\def\Lt{\tilde L}
\def\Lbt{\tilde \Lb}
\def\Omt{\tilde\Om}
\def\Omt{\tilde\Om}
\def\Rbf{{\mathbf{R}}}
\def\Gammad{{\Gamma \mkern-11mu /\,}}
\newcommand{\gac}{{\overset{\circ}{\ga}}}
\newtheorem{theorem}{Theorem}[section]
\newtheorem{lemma}[theorem]{Lemma}
\newtheorem{proposition}[theorem]{Proposition}
\newtheorem{definition}[theorem]{Definition}
\newtheorem{remark}[theorem]{Remark}
\numberwithin{equation}{section}
\begin{document}

\title[The $C^3$-null gluing problem]{The $C^3$-null gluing problem:\\ Linear and Nonlinear analysis}
\author{Robert Sansom} 
\address{Queen Mary University of London, Mile End Road, London, E1 4NS}
\email{r.sansom@qmul.ac.uk}

\begin{abstract}
In this paper, we investigate the $C^3$-null gluing problem for the Einstein vacuum equations, that is, we consider the null gluing of up to and including third-order derivatives of the metric. In the regime where the characteristic data is close to Minkowski data, we show that this $C^3$-null gluing problem is solvable up to a $20$-dimensional space of obstructions. The obstructions correspond to $20$ linearly conserved quantities: $10$ of which are already present in the $C^2$-null gluing problem analysed by Aretakis, Czimek and Rodnianski \cite{ACR1,ACR2,ACR3}, and $10$ are novel obstructions inherent to the $C^3$-null gluing problem. The $10$ novel obstructions are linearly conserved charges calculated from third-order derivatives of the metric.
\end{abstract}

\maketitle
\tableofcontents

\section{Introduction}\label{sec:intro}

\ni The Einstein Equations for a $4$-dimensional Lorentzian manifold $(\MM, \g)$ are
\begin{equation}\label{eq:EVE}
\RRRic(\g)=0,
\end{equation}
where $\RRRic$ denotes the Ricci tensor of $\g$. The trivial solution to \eqref{eq:EVE} is the \emph{Minkowski spacetime}, with $\MM \cong \mathbb{R}^4$ and $$g=-dt^2+dx^2+dy^2+dz^2$$ expressed in Cartesian coordinates. The Einstein equations \eqref{eq:EVE} can be cast as a hyperbolic system of equations and hence admit well-posed Cauchy problems for suitable initial data \cite{CB}, \cite{CBG}. The \emph{spacelike} Cauchy problem consists of initial data given by a triple $(\Sigma,g,k)$ where $(\Sigma, g)$ is a $3$-dimensional Riemannian manifold $k$ is a symmetric $2$-tensor, see Figure \ref{fig:IVP}$(a)$. 

On the other hand, the \emph{characteristic} Cauchy problem consists of initial data posed on two transversally intersecting \emph{null hypersurfaces}, an outgoing null hypersurface, $\HH$, and an ingoing null hypersurface, $\HHb$. The defining property of $\HH$ and $\HHb$ is that the metric $\g$ is degenerate along them. The null hypersurfaces intersect at, and are foliated by, compact $2$-dimensional objects. For the purposes of this paper, we will consider them to be spheres denoted by $S$, see Figure \ref{fig:IVP}$(b)$. The well-posedness of the characteristic initial value problem was established in \cite{rendall, Lukexistence}. 

\begin{figure}[h]
 \begin{center}
 \begin{tikzpicture}
\draw[darkgray, thick] [domain=-2.5:2.5,smooth,variable=\t]
plot ({\t},{.25*sin(\t*1.25 r)}) ;
\draw[darkgray, thick] [domain=-2.5:2.5,smooth,variable=\t]
plot ({1.5+\t},{1.5+.25*sin(\t*1.25 r)}) ;
\draw(-2.5,0)--(-1,1.5);
\draw(2.5,0)--(4,1.5);
\draw(0,0) node[node font =\small]{$$};
\draw(.5,-0.3) node[node font =\small]{$(a)$};
\draw(.5,.75) node[node font =\small]{$(\Sigma,g,k)$};
\draw[->](-.75,.5) --(-.75,1.8);
\draw(-.4,1.7) node[node font =\small]{$T$};
\end{tikzpicture}
\qquad
\begin{tikzpicture}
\draw (0,0) ellipse (1.5 and 0.2);
\draw (1.5,0) -- (3,1.5);
\draw (-1.5,0) -- (-3,1.5);
\draw (3,1.5) -- (3.75,2.25);
\draw (-3,1.5) -- (-3.75,2.25);
\draw (1.5,0)--(.5,1);
\draw (-1.5,0)--(-.5,1);
\draw(2.8,.7) node[node font =\small]{$\HH$};
\draw(0.75,.4) node[node font =\small]{$\HHb$};
\draw(0,-.4) node[node font =\small]{$S$};
\draw(0,-1) node[node font =\small]{$(b)$};
\end{tikzpicture}
\caption{$(a)$ Spacelike initial data for \eqref{eq:EVE}. $(b)$ Setup of the characteristic initial value problem with two null hypersurfaces intersecting at spheres.}\label{fig:IVP}
\end{center}
\end{figure}
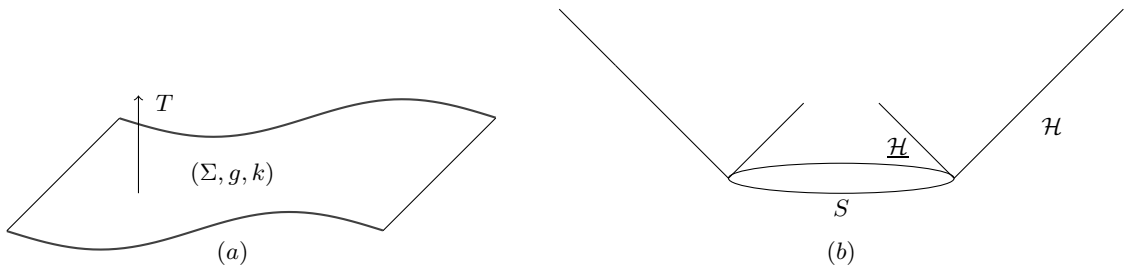

\ni Both spacelike and characteristic initial data cannot be prescribed freely and must satisfy constraints. In the spacelike case, the constraints are elliptic equations. In the characteristic case, the constraints are transport equations along the null generators of $\HH$ and $\HHb$ called \emph{null structure equations}.

Of wide interest is the rigidity and flexibility of solutions to \eqref{eq:EVE}. Investigating the flexibility of solutions to \eqref{eq:EVE} is primarily achieved by studying gluing problems. The general gluing problem of General Relativity asks to construct a solution $(\MM,\g)$ to \eqref{eq:EVE} such that two given solutions $(\MM_\mathcal{A},\g_\mathcal{A})$ and $(\MM_\mathcal{B},\g_\mathcal{B})$ isometrically embed into $(\MM,\g)$ with the aim being to connect the solutions $(\MM_\mathcal{A},\g_\mathcal{A})$ and $(\MM_\mathcal{B},\g_\mathcal{B})$ across a gluing region. Investigating gluing problems at this level is challenging and they become more tractable by considering the gluing of initial data sets of the same type, i.e. both spacelike or both characteristic, to \eqref{eq:EVE}. The gluing problem for characteristic initial data is shown in Figure \ref{fig:IDgluing}. Here,  $(\HH_\AA, \HHb_\AA)$ and $(\HH_\BB, \HHb_\BB)$ are two pairs of transversal null hypersurfaces intersecting at spheres $S_\AA$ and $S_\BB$, respectively. The null gluing problem for characteristic initial data is to construct a solution to the null structure equations whose characteristic data agrees on its boundaries with the given characteristic data on $\HH_\AA$ and $\HH_\BB$.

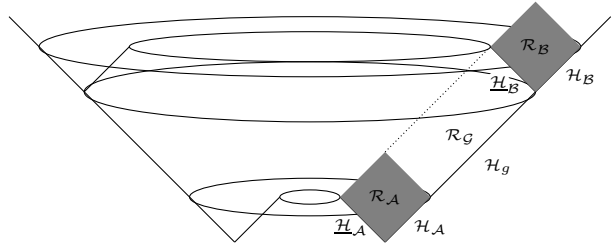
\begin{figure}[h]
\centering
\begin{tikzpicture}
\draw(1,-0.6)--(4,2.4);
\draw(-1,-0.6)--(-4,2.4);
\draw(-1,-.6)--(-.4,0);
\draw(1,-.6)--(.4,0);
\draw (0,0) ellipse (.4 and 0.095);
\draw (0,0) ellipse (1.6 and 0.25);
\draw(0,1.4) ellipse (3 and 0.4);
\draw(3,1.4)--(2.4,2);
\draw(-3,1.4)--(-2.4,2);
\draw (0,2) ellipse (2.4 and 0.2);
\draw (0,2) ellipse (3.6 and 0.4);
\fill[gray,nearly transparent] (.4,0)--(1,.6)--(1.6,0)--(1,-.6);
\draw(1,0) node[node font =\tiny]{$\mathcal{R}_\AA$};
\draw(1.6,-.4) node[node font =\tiny]{$\HH_\AA$};
\draw(.55,-.4) node[node font =\tiny]{$\HHb_\AA$};

\fill[gray,nearly transparent] (2.4,2)--(3,2.6)--(3.6,2)--(3,1.4);
\draw(3,2) node[node font =\tiny]{$\mathcal{R}_\BB$};
\draw(3.6,1.6) node[node font =\tiny]{$\HH_\BB$};
  \node at (2.5,1.51)[circle ,fill=white]{};
\draw(2.6,1.5) node[node font =\tiny]{$\HHb_\BB$};

\draw[densely dotted](1,.6)--(2.4,2);
\draw(2,0.8) node[node font =\tiny]{$\mathcal{R}_\GG$};
\draw(2.5,0.4) node[node font =\tiny]{$\HH_g$};
\end{tikzpicture}
\caption{The gluing problem for characteristic initial data. Given characteristic initial data posed on $(\HH_\AA, \HHb_\AA)$ and $(\HH_\BB, \HHb_\BB)$, the null gluing problem asks if there exists initial data along $\HH_g$ such that there exists a solution to \eqref{eq:EVE} on $\mathcal{R}_\AA\cup\mathcal{R}_\GG\cup\mathcal{R}_\BB$ with some specified differentiability.}
\label{fig:IDgluing}
\end{figure}

Historically, a considerable amount of attention has been given to the \emph{Riemannian gluing problem}. That is, the gluing of spacelike initial data sets to the Cauchy problem. The groundbreaking work of Corvino-Schoen \cite{Cor00, CorSch06} exploited the geometric under-determinedness of the constraint equations to show that asymptotically flat initial data can be glued across a compact region to initial data corresponding to the Kerr spacetime, see also \cite{ChrDel02,ChrDel03,ChrPol08,Cortier13,Hintz21}. Moreover, the work of Carlotto-Schoen \cite{CarSch} showed that initial data can be glued to Minkowski initial data along a non-compact cone. Another type of solution to the Riemannian gluing problem is to consider the connected sum gluing of initial data, see, for example, \cite{CIP04,CIP05,IMP,ChrMaz03,Isenberg2003}.

More recently, the study of \emph{null gluing problems for the Einstein Equations}, \eqref{eq:EVE}, was initiated by Aretakis, Czimek and Rodnianski \cite{ACR1,ACR2,ACR3} and their results are discussed in the next section.

\subsection{The null gluing problem}\label{sec:Cknullgluing}

The null gluing problem formulated in Figure \ref{fig:IDgluing} can straightforwardly be reduced to the problem depicted in Figure \ref{fig:nullgluing2}, where instead of prescribing characteristic data on $\HH_1$ and $\HH_2$, \emph{sphere data} is given on two spheres $S_1$ and $S_2$. Sphere data consists of the prescription of all metric components and their derivatives up to some specified order on a sphere, $S$. The transport nature of the null structure equations implies that ingoing derivatives of components of the sphere data satisfy transport equations along $\HH$ and higher-order components of the sphere data satisfy higher-order transport equations.  Therefore, we consider the \emph{$C^k$-null gluing problem} of gluing transverse derivatives of the components of $\g$ up to order $k$. It follows that $C^k$-sphere data is the prescription of derivatives of the components of $\g$ up to order $k$. The $C^k$-null gluing problem is formulated as follows.\\

\ni{\bf The $C^k$-null gluing problem.} \textit{Let $(\MM_1, {\bf g}_1)$ and $(\MM_2, {\bf g}_2)$ be solutions to \eqref{eq:EVE}.  Moreover, denote by $x_1$ and $x_2$, $C^k$-sphere data on spheres $S_1\subset(\MM_1, {\bf g}_1)$ and $S_2\subset(\MM_2, {\bf g}_2)$, respectively. Does there exist a null hypersurface $\HH_{[1,2]}:=\bigcup_{1\leq v\leq2}S_v$ and a solution $x$ to the null structure equations on $\HH_{[1,2]}$ such that  $$x\vert_{S_1}=x_1, \qquad x\vert_{S_2}=x_2?$$ }

 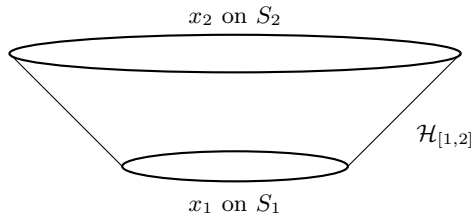
\begin{figure}[h]
 \centering
\begin{tikzpicture}
\draw[thick] (0,0) ellipse (1.5 and 0.2);
\draw[thick] (0,1.5) ellipse (3 and 0.25);
\draw (1.5,0) -- (3,1.5);
\draw (-1.5,0) -- (-3,1.5);
 \draw(0,-.5) node[node font =\small]{$x_1 \text{ on }S_1$};
\draw(0,2) node[node font =\small]{$x_2 \text{ on }S_2$};
\draw(2.8,.4) node[node font =\small]{$\HH_{[1,2]}$};
\end{tikzpicture}
\caption{The null gluing problem formulated with sphere data.}\label{fig:nullgluing2}
\end{figure}

\ni Upon inspection, the null gluing problem formulated in this manner cannot always be solved. An immediate obstruction is the monotonicity of the null expansion  along $\HH_{[1,2]}$ imposed by the Raychaudhuri equation. In double null gauge (see Appendix \ref{sec:doublenullcoords}) the outgoing null expansion is $\tr\chi$ and we immediately observe that the Raychaudhuri equation 
\begin{equation}\label{eq:raychaudhuri1}
L\left(\frac{\phi\tr\chi}{\Om}\right) = -\frac{\phi}{\Om^2}\big\vert\Om\chih\big\vert_\gd^2
\end{equation}
is monotonic, where $L$ denotes a derivation along $\HH$, see \eqref{eq:defL}. A deeper obstruction to the $C^k$-null gluing problem is the existence of an infinite-dimensional family of obstructions to the linearised null gluing problem at Minkowski in the form of linear conservation laws. Therefore, in order to solve the $C^k$-null gluing problem, it is useful to relax its formulation. This is achieved by the introduction of \emph{sphere perturbations}, that is, small changes of the sphere data $x_2$ on $S_2$. The following two types of sphere perturbations are considered.

\begin{itemize}
\item  \emph{Transversal perturbations} perturb the sphere $S_2$ to a nearby sphere $S'_2$ along an ingoing null hypersurface $\HHb_2$ containing both spheres. The $C^2$-sphere data $x'_2$ on $S'_2$ can be expressed in terms of $x_2$, a perturbation function $f$ and the geometry of $\HHb_2$. The precise formulas are given in Section \ref{sec:spherepert}. Moreover, the transversal perturbations correspond to the \emph{linearised pure gauge solutions} of \cite{DHR}.
\item \emph{Angular perturbations} are sphere diffeomorphisms of sphere data and are defined by pulling back the sphere data $x_2$ under a diffeomorphism of the sphere $S_2$.
\end{itemize}

\ni Employing transversal and angular perturbations, a relaxed formulation of the $C^k$-null gluing problem is posed as follows, see Figure \ref{fig:C2ACRnullgluing}.\\

\ni {\bf The perturbative $C^k$-null gluing problem.} \textit{Denote by $x_1$ and $\tilde{x}_2$ given $C^k$-sphere data on spheres $S_1$ and $\tilde{S}_2$ respectively. Consider the ingoing null hypersurface $\tilde{\HHb}_{[-\de,\de],2}$ containing $\tilde{S}_2$.  Does there exist a sphere $S_2\subset\tilde{\HHb}_{[-\de,\de],2}$, arising from a sphere perturbation of $\tilde{S}_2$, with sphere data $x_2$ such that there exists a null hypersurface $\HH_{[1,2]}:=\bigcup_{1\leq v\leq2}S_v$ and a solution $x$ to the null structure equations on $\HH_{[1,2]}$ such that  $$x\vert_{S_1}=x_1, \qquad x\vert_{S_2}=x_2?$$ }

 \begin{figure}[h]
 \centering
\begin{tikzpicture}
\draw (0,0) ellipse (1.5 and 0.2);
\draw (1.5,0) -- (3,1.5);
\draw (-1.5,0) -- (-3,1.5);
\draw[darkgray, thick] [domain=-3:2.9,smooth,variable=\t,smooth cycle]
plot ({2.99*cos(\t r)},{1.5+.25*sin(\t r))+rand*(0.175)*(abs(sin(\t r))}) ;
\draw [domain=-3.141:3,smooth,variable=\t,smooth cycle]
plot ({2.99*cos(\t r)},{1.5+.25*sin(\t r)}) ;
\draw[darkgray,very nearly transparent] (0,2) ellipse (2.5 and 0.25);
\draw[gray,very nearly transparent] (0,1) ellipse (3.5 and 0.25);
\fill[gray,very nearly transparent] (0,2) ellipse (2.5 and 0.25);
\fill[gray,very nearly transparent] (3.5,1) -- (2.5,2) arc (360:180:2.5cm and 0.25cm) -- (-3.5,1) arc (180:360:3.5cm and 0.25cm);
\fill[gray, very nearly transparent] (3.5,1) -- (2.5,2) arc (360:180:2.5cm and 0.25cm) -- (-3.5,1) arc (180:360:3.5cm and -0.25cm);
\draw(2.5,.35) node[node font =\tiny]{$\HH_{[1,2]}$};
\draw(3.4,1.9) node[node font =\tiny]{$\tilde{\HHb}_{[-\de,\de],2}$};
\draw(0,-.5) node[node font =\tiny]{$S_1$};
\draw(-3.3,1.5) node[node font =\tiny]{$S_2$};
\draw(3.2,1.4) node[node font =\tiny]{$\tilde{S}_2$};
\end{tikzpicture}
\caption{The perturbative $C^k$-null gluing problem. The thick line represents the sphere $\tilde{S}_2$ with sphere data $\tilde{x}_2$. }\label{fig:C2ACRnullgluing}
\end{figure}

\ni This formulation of the $C^k$-null gluing problem is solvable in the following sense. Under the sphere perturbations the family of obstructions to the $C^k$-null gluing problem split, at the linear level, into the following:
\begin{itemize}
\item an infinite number of obstructions that depend on the sphere perturbation, i.e. they are \emph{gauge-dependent},
\item a finite number of \emph{gauge-invariant} obstructions that are linearly invariant under the sphere perturbation.
\end{itemize}
 Hence, the aim is to show that the perturbative $C^k$-null gluing problem is solvable close to Minkowski up to the finite number of gauge-invariant obstructions. In \cite{ACR2}, the $C^k$-null gluing problem with $k=2$ is analysed and solved up to a $10$-dimensional space of obstructions. These obstructions will be discussed in the next section.

In this paper, we study the $C^k$-null gluing problem with $k=3$ and investigate its obstructions. As in \cite{ACR2}, we will employ the double null gauge which we outline here with precise definitions and formulas given in Appendix \ref{sec:doublenullcoords}, see also \cite{ChrFormationBlackHoles}. Let $S$ be a spacelike two-dimensional sphere in a spacetime $(\MM,\g)$ with induced metric and connection denoted by $\gd$ and $\Nabs$, respectively. Let $u$ and $v$ be two optical functions of $(\MM,\g)$. Then for real numbers $u_0$ and $v_0$ the level sets $$\HH_{u_0}:=\{u=u_0\},\qquad \HHb_{v_0}:=\{v=v_0\},$$ are outgoing and ingoing null hypersurfaces, respectively. The union of these null hypersurfaces is called a \emph{double null foliation} of $(\MM,\g)$, where in this case, $\HH_{u_0}$ and $\HHb_{v_0}$ intersect at the spheres 
$$S_{u_0,v_0}:=\{u=u_0,v=v_0\}.$$
The coordinates $(u,v,\th^1,\th^2)$ are \emph{double null coordinates} where $(\th^1,\th^2)$ are local angular coordinates defined everywhere by propagating them along the generators of $\HH$ and $\HHb$. Moreover, with respect to the coordinates $(\th^1,\th^2)$, the induced metric $\gd$ can be conformally related to a metric $\gd_c$ by 
\begin{equation}\label{eq:confdecomp1}
\gd=\phi^2\gd_c
\end{equation}
where $\det\gd_c=\det\gac$ and $\gac:=(d\th^1)^2 +(\sin\th^1)(d\th^2)^2$ is the unit round metric on $S$.
Combining the Einstein Equations, \eqref{eq:EVE}, with the double null foliation yield \emph{null structure equations} for the components of the metric and their derivatives. We write the components of the metric as $(\Om, \gd)$, their first derivatives in terms of \emph{Ricci coefficients} \eqref{eq:riccicoeffs} denoted by $$\{\chi,\chib,\eta,\zeta,\om,\omb\},$$ and their second derivatives in terms of \emph{null curvature components} \eqref{eq:nullcurvaturecomp} denoted by 
\begin{equation}\label{eq:nccintro}
\{\a,\ab,\b,\beb,\rh,\si\}.
\end{equation}

\ni Before moving on to discuss the $C^2$-null gluing problem of \cite{ACR2} and the $C^3$-null gluing problem, the general strategy to analyse the $C^k$-gluing problem close to Minkowski is exemplified. Due to the proximity to Minkowski, the implicit function theorem is applied and thus the $C^k$-null gluing problem close to Minkowski can be solved by analysing the \emph{linearised} $C^k$-null gluing problem at Minkowski. The linearisation procedure is denoted by the following expansion of sphere data.
\begin{equation}\label{eq:linintro}
x = x^{Mink.}+\varep \xd +\OO(\varep^2)
\end{equation}
where $x^{Mink.}$ denotes the sphere data associated with Minkowski spacetime. The precise definitions of the $C^2$- and $C^3$-sphere data and the associated Minkowski sphere data are given in Section \ref{sec:C3spheredata}. Following this linearisation procedure, we obtain the \emph{linearised null structure equations} for the components of $\xd$. The full set of null structure equations are presented in Appendix \ref{sec:doublenullcoords}. Their linearisations are given in Appendix \ref{sec:linCC} and the linearisations of the higher-order null structure equations we derive are contained in Lemma \ref{lem:linearisationofconstraints}. As an example, take the following subset of the null structure equations. For the metric $\gd$ with the conformal decomposition \eqref{eq:confdecomp1} and $\eta$, we have the null structure equations
$$D\gd =2\Om\chi, \quad D\phi = \frac{\Om\tr\chi\phi}{2}, \quad D\eta=\Om(\chi\cdot\etab-\b),$$
where $\tr\chi$ and $\chih$ denote the trace and tracefree parts of $\chi$ with respect to $\gd$, respectively. The Raychaudhuri equation \eqref{eq:raychaudhuri1} can be rewritten as
$$D (\Om\tr\chi )+ \frac{(\Om\tr\chi)^2}{2} = - \Om^2 \vert \chih \vert^2_{\gd}.$$
Linearising these equations at Minkowski using \eqref{eq:linintro} yields
\begin{equation}\label{eq:introNSE1}
D\gdcd = \frac{2}{v^2}\chihd,\quad D\left(\frac{\phid}{v}\right)=\frac{\dot{(\Om\tr\chi)}}{2}, \quad D(\dot{\Om\tr\chi})=-\frac{2\dot{(\Om\tr\chi)}}{v},
\end{equation}
and using the Gauss-Codazzi equation \eqref{eq:gausscodazzi}, 
\begin{equation}\label{eq:introNSE2}
D(v^2\etad) +\frac{v^2}{2}\di\left(\dot{(\Om\tr\chi)}-\frac{4}{v}\Omd\right)-2v\di\Omd -\Divdo\chihd =0,
\end{equation}
where $D=\Lied_{\partial_v}$ at Minkowski and $\Divdo:=\Divd_\gac$. Combining the linearised null structure equations \eqref{eq:introNSE1} and \eqref{eq:introNSE2} to obtain the equations 
\begin{align}
\begin{aligned}\label{eq:introNSE3}
D\left(\frac{v}{2}\left(\dot{(\Om\tr\chi)}-\frac{4}{v}\Omd\right)+\frac{\phid}{v}\right)=&0,\\
D\left(v^2\etad +\frac{v^3}{2}\di\left(\dot{(\Om\tr\chi)} -\frac{4}{v}\Omd\right)\right)=&\Divdo\chihd,
\end{aligned}
\end{align}
sheds light on their structure allowing us to make the following observations.\\
\begin{enumerate}
\item \textit{Conserved quantities.}

\ni An inspection of the first formula of \eqref{eq:introNSE3} shows that the quantity 
$$\QQ_\phid :=\frac{v}{2} \left(\omtrchid-\frac{4}{v}\Omd\right) + \frac{\phid}{v}$$
is conserved along $\HH_{[1,2]}$. Projecting the second equation of \eqref{eq:introNSE3} onto spherical harmonics of mode $l=1$ and using that the $l=1$ mode of the divergence of a symmetric tracefree 2-tensor vanishes (see Appendix \ref{sec:sphharm}) implies that the quantity 
$$\QQ_{\etad^{[1]}}:= v^2 \etad^{[1]} + \frac{v^3}{2}\di \left(\omtrchid^{[1]}-\frac{4}{v}\Omd^{[1]}\right)$$
is also conserved along $\HH_{[1,2]}$. The charges $\QQ_\phid$ and $\QQ_{\etad^{[1]}}$ are the first two of a set of conserved charges present in the $C^2$- and $C^3$-null gluing problems. The full set of charges appearing in the $C^2$-null gluing problem are presented in Appendix \ref{sec:linCC} while the novel charges that present themselves in the $C^3$-null gluing problem are derived in Section \ref{sec:conservationlaws}. Furthermore, the charges $\QQ_\phid$ and $\QQ_{\etad^{[1]}}$ exhibit the splitting of conserved charges into those that are gauge-dependent and those that are gauge-invariant. Investigating how they behave under the sphere perturbations yields 
$$\QQ_\phi=\frac{1}{2}\left(\Ldo\dot{f}-\partial_u\dot{f}\right),\qquad \QQ_{\etad^{[1]}} =0 $$
where $\dot{f}$ is the linearised perturbation function. The formulas for sphere perturbations are given in Lemma \ref{lem:pertlin} and Appendix \ref{sec:transversalspherepert}. The independence of $\QQ_{\etad^{[1]}}$ on $\dot{f}$ means that it is a gauge-invariant conserved charge. Moreover, it is related to the linearisations of the energy and linear momentum discussed in the next section. The gauge-invariance of $\QQ_{\etad^{[1]}}$ on $\dot{f}$ means that it is not glued by our methods.\\

\item \textit{Representation formulas and $v$-weights}

\ni The null structure equations have an intricate underlying structure. Sachs \cite{SachsCIVP} showed that they can be rewritten in a  hierarchical order that can be solved sequentially from a \emph{characteristic seed} consisting of a subset of the initial data. More specifically, from a prescription of the characteristic seed the null structure equations can be integrated sequentially where each equation only depends on this characteristic seed and the previous quantities solved for in the hierarchy. More details can be found in Section \ref{sec:charseed}. Using this hierarchical structure, we are able to integrate the linearised null structure equations to obtain representation formulas for the components of the linearised sphere data. Integrating the linearised null structure equations \eqref{eq:introNSE1} and the second of \eqref{eq:introNSE3} in $v$ from $1$ to $2$ yields the following representation formulas:
\begin{align*}
\phid = &2 \int_1^2 \Omd dv' +v\phid(1)+\frac{1}{2}\left(\omtrchid(1)-4\Omd(1)\right),\\
\left[\dot{\gd_c}\right]_1^2 =&2 \int_{1}^v \frac{1}{v'^2} \chihd dv'\\
\left[ v'^2 \etad + \frac{v'^3}{2} \di\left(\omtrchid - \frac{4}{v'} \Omd\right)\right]_1^2
 =& \Divdo\left( \int\limits_1^2 \chihd dv' \right).
\end{align*}
Thus, we can construct a solution to the null structure equations by prescribing and controlling the integrals of $\Omd$ and $\chihd$. The essential property of the representation formulas is that the integrals of $\Omd$ and $\chihd$ contain a different $v$-weight in each representation formula and are linearly independent. Moreover, if the representation formulas for two different quantities contain the same $v$-weight then the representation formulas can be combined to form a conservation law.
\end{enumerate}
Extending this procedure to the entire set of null structure equations gives a solution to the $C^k$-null gluing problem up to the finite number of gauge-invariant obstructions. In the next section, we define the obstructions discovered in \cite{ACR2} and in Section \ref{sec:C3novel}, we define the novel obstructions that appear in the $C^3$-null gluing problem and state the main result of this paper.

\subsection{The $C^2$-null gluing problem of Aretakis-Czimek-Rodnianski}

In the series of papers \cite{ACR1,ACR2,ACR3}, the $C^2$-null gluing problem is solved up to a $10$-dimensional space of obstructions for sphere data close to the Minkowski round spheres. That is, the infinite-dimensional family of obstructions to the linear problem split into infinitely many that depend on the sphere perturbations and a $10$-dimensional family of obstructions that are gauge-invariant. The $10$ gauge-invariant conservation laws are the obstructions to solving the $C^2$-null gluing problem. The charges are defined as follows.

\begin{definition}[$C^2$-charges]\label{def:chargesEPLG} Let $S$ be a $2$-sphere. The $10$ obstructions to the $C^2$-null gluing problem are the charges denoted by $(\mathbf{E},\mathbf{P},\mathbf{L},\mathbf{G})$ and are defined using tensor spherical harmonics on $S$ with $-1\leq m \leq 1$
\begin{align*}
\mathbf{E}:=& -\frac{1}{8\pi}\sqrt{4\pi}\left(\phi^3(\rh+\phi\Divd\b)\right)^{(0)},\\
 \mathbf{P}^m:=&  -\frac{1}{8\pi}\sqrt{4\pi}\left(\phi^3(\rh+\phi\Divd\b)\right)^{(0)},\\
  \quad \mathbf{L}^m := &\frac{1}{16\pi}\sqrt{\frac{8\pi}{3}}\left(\phi^3(\di\tr\chi+\tr\chi(\eta-\di\log\Om))\right)_H^{(1m)},\\
   \mathbf{G}^m :=&\frac{1}{16\pi}\sqrt{\frac{8\pi}{3}}\left(\phi^3(\di\tr\chi+\tr\chi(\eta-\di\log\Om))\right)_E^{(1m)}.
\end{align*}
where for a scalar function $f$ and an $S$-tangential vectorfield $V$, the tensor spherical harmonics with respect to the round unit metric $\gac$ are given by
\begin{align*}
f^{(lm)} := \int_S f \, Y^{(lm)} d\mu_{\gac}, \quad V_E^{(lm)} := \int_S V\cdot E^{(lm)} d\mu_{\gac}, \quad V_H^{(lm)} := \int_S V\cdot H^{(lm)} d\mu_{\gac},
\end{align*}
where $Y^{lm}$ are the standard real-valued spherical harmonics with respect to $\gac$ and $\cdot$ denotes the product with respect to $\gac$. The electric and magnetic tensor spherical harmonics $E^{(lm)}$ and $H^{(lm)}$, respectively are defined by
\begin{align*}
E^{(lm)} := -\frac{1}{\sqrt{l(l+1)}} \di Y^{lm}, \quad H^{(lm)} := \frac{1}{\sqrt{l(l+1)}}{}^* \di Y^{lm}, l\geq1, \, -l\leq m \leq l.
\end{align*}
Definitions and an outline of spherical harmonics can be found in Appendix \ref{sec:sphharm}.
\end{definition}

\begin{remark} We make the following remarks on Definition \ref{def:chargesEPLG}.
\begin{enumerate}
\item The charges $(\mathbf{E},\mathbf{P},\mathbf{L},\mathbf{G})$ have geometric significance. They correspond to the ADM energy, linear momentum, angular momentum and centre-of-mass, respectively.
\item The linearisations of $(\dot{\mathbf{E}},\dot{\mathbf{P}},\dot{\mathbf{L}},\dot{\mathbf{G}})$ at Minkowski correspond to gauge-invariant conserved charges in the linearised $C^2$-null gluing problem along $\HH_{[1,2]}$ which are not glued by the methods here.
\end{enumerate}
\end{remark}

\ni The $10$ gauge-invariant conservation laws $(\mathbf{E},\mathbf{P},\mathbf{L},\mathbf{G})$ will also appear as obstructions to the $C^k$-gluing problem for any integer $k\geq2$. Motivated by the correspondence between obstructions to the $C^2$-null gluing problem and the role of conserved quantities, we study the $C^3$-null gluing problem and investigate the existence of additional obstructions associated with the higher-order null structure equations.

In the literature, since the introduction of the null gluing problem in \cite{ACR1,ACR2,ACR3}, there have been a number of applications and generalisations. The method of null gluing was further developed in \cite{KehUng24b, KehUng24a} in spherical symmetry to disprove the third law of black hole thermodynamics. In \cite{CRobsfree}, a careful analysis of the nonlinear null structure equations with a high frequency ansatz for the characteristic seed yielded obstruction free gluing. In other words, they showed that the charges $(\mathbf{E},\mathbf{P},\mathbf{L},\mathbf{G})$ are glued nonlinearly. The work \cite{CCgluing, CCGgluingb, CCGgluinga} investigates the $C^k$-null gluing problem in Bondi gauge accounting for cosmological constants, higher dimensions and the intersection of null hypersurfaces having higher-genus topology for each $k\geq2$. 

In this paper, we present for the first time the explicit formulas of the third-order charges that act as obstructions to the $C^3$-null gluing problem in the regime close to Minkowski. In the following section, we further motivate this investigation and present our main result.

\subsection{The $C^3$-null gluing problem and novel obstructions}\label{sec:C3novel}

In General Relativity (so far) the only known nonlinearly conserved quantities along null hypersurfaces are the celebrated Newman-Penrose constants which are conserved along future null infinity \cite{NPchargesa, NPchargesb}. These $10$ constants are calculated from the $l=2$ spherical harmonic mode of a third derivative of the metric. The fact that the Newman-Penrose constants arise as a third derivative of a component of the metric motivates a full analysis of the $C^3$-null gluing problem. In this paper, we solve the $C^3$-null gluing problem locally (i.e. in the bulk of a spacetime) near asymptotically flat spacelike infinity. By scaling, this translates into a small data problem. As obstructions to its solution, we find that the $l=2$ modes of the quantity $$\frac{1}{\phi}\Dbh\ab+\frac{1}{\phi^{2}}(1+\phi^2\Nabs\widehat{\otimes}\Divd)\ab$$ are conserved linearly along outgoing null hypersurfaces.

\begin{definition}[Novel $C^3$-charges]\label{def:chargeW} Let $S$ be a $2$-sphere. We define the following charges using tensor spherical harmonics on $S$ with $-2\leq m \leq 2$
\begin{align*}
\mathbf{U}^m:=&\left(\frac{1}{\phi}\Dbh\ab+\frac{1}{\phi^{2}}(1+\phi^2\Nabs\widehat{\otimes}\Divd)\ab\right)^{(2m)}_\psi\\
\mathbf{V}^m:=&\left(\frac{1}{\phi}\Dbh\ab+\frac{1}{\phi^{2}}(1+\phi^2\Nabs\widehat{\otimes}\Divd)\ab\right)^{(2m)}_\phi
\end{align*}
where for an $S$-tangential $2$-tensor $W$, the Fourier coefficients are defined by
\begin{align*}
W_\psi^{(lm)} := \int_S W\cdot \psi^{lm}d\mu_{\gac}, \quad V_\phi^{(lm)} := \int_S W\cdot \phi^{lm}d\mu_\gac, \quad  l\geq2,\, m=-l,\dots,l.
\end{align*}
and the electric and magnetic $2$-tensor spherical harmonics $ \psi^{lm}$ and $ \phi^{lm}$ are defined respectively by
\begin{align*}
\psi^{lm} := \frac{1}{\sqrt{\frac{1}{2} l(l+1)-1}} \DDd_2^* E^{lm}, \quad \phi^{lm} := \frac{1}{\sqrt{\frac{1}{2} l(l+1)-1}} \DDd_2^* H^{lm}, \quad  l\geq2, \, m=-l,\dots,l.
\end{align*}
The operator $\DDd_2^*$ is the Hodge operator (see Appendix \ref{sec:sphharm}) defined by 
\begin{align*}
\left(\DDd_2^*V\right)_{AB}:=-\frac{1}{2}\left(\Nabs_A V_B+\Nabs_B V_A -\Divd V \gd_{AB}\right)
\end{align*}
for an $S$-tangential vectorfield, $V$.
\end{definition}
\begin{remark}
The quantities $\mathbf{U}$ and $\mathbf{V}$ are defined so that their linearisations correspond to the gauge-invariant charges appearing in the linearised $C^3$-null gluing problem. As the smallness regime assumed in our gluing construction corresponds (by scaling) to working in an asymptotically flat region near \emph{spacelike} infinity (see, for example, \cite{ACR2}), our charges are not expected to have the same formula as the Newman-Penrose constants. 
\end{remark}

\ni We solve the $C^3$-null gluing problem up to the $20$-dimensional space of charges $(\mathbf{E},\mathbf{P},\mathbf{L},\mathbf{G},\mathbf{U},\mathbf{V})$. The statement of our main result is the following.
\begin{theorem}[Codimension-$20$ $C^3$-null gluing, version 1]\label{thm:mainv1} Let $\mfx_1$ and $\tilde{\mfx}_2$ denote $C^3$-sphere data on spheres $S_1$ and $\tilde{S}_2$ close to Minkowski sphere data. Moreover, Let  $\tilde{S}_2$ be contained in an ingoing null hypersurface $\HHb_{[-\de,\de],2}$ with real number $\de>0$. Then there exists
\begin{itemize}
\item a solution to the null structure equations  $\mfx$ along an outgoing null hypersurface $\HH_{[1,2]}$,
\item sphere data $\mfx_2$ on a sphere $S_2$ arising from a sphere perturbation of $\tilde{S}_2\in\HHb_{[-\de,\de],2}$ 
\end{itemize} 
such that on $S_1$ we have that 
\begin{equation}
\mfx\big\vert_{S_1} = \mfx_1.
\end{equation}
 Moreover, if the gauge-invariant charges $(\mathbf{E},\mathbf{P},\mathbf{L},\mathbf{G},\mathbf{U},\mathbf{V})$ match on $S_2$, that is,
 $$(\mathbf{E},\mathbf{P},\mathbf{L},\mathbf{G},\mathbf{U},\mathbf{V})(\mfx\vert_{S_2})=(\mathbf{E},\mathbf{P},\mathbf{L},\mathbf{G},\mathbf{U},\mathbf{V})(\mfx_2)$$
 then on $S_2$ it holds that 
 \begin{equation*}
 \mfx\big\vert_{S_2} = \mfx_2.
 \end{equation*}
\end{theorem}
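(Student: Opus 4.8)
The plan is to reduce the nonlinear gluing statement to the linearised problem via the implicit function theorem, following the strategy sketched in the introduction. First I would set up the Banach space framework: let $\mathcal{X}$ be the space of $C^3$-sphere data perturbations near the Minkowski round-sphere data, and let $\mathcal{Y}$ be a suitable space of characteristic seed data along $\HH_{[1,2]}$ (together with the perturbation function $\dot f$ governing the transversal and angular sphere perturbations of $\tilde S_2$). Using the hierarchical Sachs structure of the null structure equations, the map sending (seed, $f$) to the induced sphere data $\mfx\vert_{S_2}$ is smooth, and its image on $S_1$ can be arranged to equal $\mfx_1$ by prescribing the seed appropriately there. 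The core object is then the \emph{gluing map} $\mathcal{G}$ that records the mismatch $\mfx\vert_{S_2} - \mfx_2$ modulo the $20$-dimensional obstruction space spanned by $(\mathbf{E},\mathbf{P},\mathbf{L},\mathbf{G},\mathbf{U},\mathbf{V})$; the goal is to solve $\mathcal{G} = 0$.

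Next I would analyse the linearisation $\dot{\mathcal{G}}$ of this map at Minkowski, which is exactly the linearised $C^3$-null gluing problem. Here I would invoke the representation formulas: integrating the linearised null structure equations (Appendix \ref{sec:linCC} and Lemma \ref{lem:linearisationofconstraints}) in $v$ from $1$ to $2$ expresses each linearised sphere-data component at $S_2$ as $v$-weighted integrals of the free seed quantities (at the $C^2$ level these are $\Omd$ and $\chihd$; at the $C^3$ level one adds the further free data associated with $\abd$). The key algebraic fact is that, after quotienting by the span of the linearised charges $(\dot{\mathbf{E}},\dot{\mathbf{P}},\dot{\mathbf{L}},\dot{\mathbf{G}},\dot{\mathbf{U}},\dot{\mathbf{V}})$, the remaining $v$-weights appearing in the representation formulas are linearly independent across spherical harmonic modes; consequently one can solve for the seed integrals (choosing, e.g., seeds supported in disjoint $v$-subintervals to realise prescribed moments) so that $\dot{\mathcal{G}}$ is surjective with bounded right inverse. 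The conservation laws $\QQ_\phid$, $\QQ_{\etad^{[1]}}$ and their higher-order analogues in Section \ref{sec:conservationlaws} must all be checked either to be absorbed into the charges $(\mathbf{E},\dots,\mathbf{V})$ or to be killed by the sphere perturbation (as with $\QQ_\phi = \tfrac12(\Ldo \dot f - \partial_u \dot f)$), which is where the perturbation function $\dot f$ and the angular diffeomorphisms enter essentially.

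With surjectivity of $\dot{\mathcal{G}}$ and existence of a bounded right inverse in hand, the implicit function theorem (or a Picard iteration / Newton scheme quantifying the smallness required) produces, for $\mfx_1, \tilde{\mfx}_2$ sufficiently close to Minkowski data, a seed and a sphere perturbation solving $\mathcal{G} = 0$ exactly — i.e. a solution $\mfx$ to the full nonlinear null structure equations with $\mfx\vert_{S_1} = \mfx_1$ and $\mfx\vert_{S_2} = \mfx_2$ provided the $20$ gauge-invariant charges agree on $S_2$. I expect the main obstacle to be the linear-algebraic heart of the second step: verifying that the full list of conserved charges of the linearised $C^3$ system consists precisely of the gauge-dependent ones (removable by $\dot f$ and angular perturbations) together with the $20$-dimensional gauge-invariant family, with no further hidden obstruction — this requires a complete and careful bookkeeping of all $v$-weights through the entire hierarchy of higher-order linearised null structure equations, and a proof that the putative right inverse is bounded uniformly in the relevant norms (in particular controlling loss of derivatives, which one handles by allocating enough regularity in the definition of $\mathcal{X}$ and $\mathcal{Y}$).
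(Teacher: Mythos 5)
Your proposal is correct and follows essentially the same route as the paper: a functional-analytic reduction via the implicit function theorem to the linearised problem at Minkowski, solved by integrating the hierarchy of linearised null structure equations into representation formulas, exploiting the linear independence of the $v$-weights of the seed integrals, and splitting the conserved charges into gauge-dependent ones (absorbed by the transversal/angular perturbations) and the $20$-dimensional gauge-invariant family $(\mathbf{E},\mathbf{P},\mathbf{L},\mathbf{G},\mathbf{U},\mathbf{V})$. The only small inaccuracy is that at the $C^3$ level the free characteristic seed remains $(\Omd,\chihd)$ — the novel components $\Db\abd$ and $\Db^2\ombd$ are controlled by prescribing the additional moments $\int_1^2 v'^{-4}\chihd\,dv'$ and $\int_1^2 v'^{-5}\chihd\,dv'$ rather than by introducing further free data associated with $\abd$.
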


\begin{remark}
The work \cite{CCgluing} characterises the gauge-invariant obstruction space of the $C^3$-null gluing problem in Bondi gauge to be of dimension $20$ and involving $l=2$ spherical harmonic modes. Our analysis in this paper, see specifically the explicit formulas for $\mathbf{U}$ and $\mathbf{V}$ above, shows that in particular these two basic properties are also satisfied in double null gauge.
\end{remark}

\ni The precise statement of Theorem \ref{thm:mainv1} is given in Section \ref{sec:mainthmprecise} and the proof is outlined in the next section.

\subsection{Sketch of the proof of the main theorem}

The strategy to prove Theorem \ref{thm:mainv1} is to apply the standard implicit function theorem, see for example Theorem 2.5.7 in \cite{MarsdenIFT}, which means that we solve the $C^3$-null gluing problem close to Minkowski up to the $20$-dimensional space of obstructions $(\mathbf{E},\mathbf{P},\mathbf{L},\mathbf{G},\mathbf{U},\mathbf{V})$ by solving the \emph{linearised} problem at Minkowski. We pay close attention to the higher-order components of the sphere data. Since, by definition the $C^3$-sphere data contains the $C^2$-sphere data, we employ the analysis of the linearised $C^2$-null gluing problem from \cite{ACR2} and use their result as a black box in order to glue the $C^2$-part of the sphere data. The focus of this paper will be on the novel components of the $C^3$-sphere data, $$ (D^2\om, \Db^2\omb,\Dh\a,\Dbh\ab).$$
\ni We begin by deriving the higher-order null structure equations consisting of nonlinear transport equations that the quantities $\Db^2\omb$ and $\Dbh\ab$ satisfy. The transport equations for $D^2\om$ and $\Dh\a$ are straightforwardly  computed from the definition of $\om$ and the transport equation for $D\chih$, see \eqref{eq:transporteqriccicoeffs}. Using the linearisation procedure outlined in Section  \ref{sec:Cknullgluing} the linearised transport equations are obtained. Integrating the linearised transport equations along $\HH_{[1,2]}$ yields representation formulas for the components of the sphere data $\Db^2\omb$ and $\Dbh\ab$. The associated novel conservation laws are found by analysing the representation formulas. In order to understand how the conservation laws behave under sphere perturbations, we derive the nonlinear sphere perturbation formulas associated with the $C^3$-components of the sphere data, $ (D^2\om, \Db^2\omb,\Dh\a,\Dbh\ab)$ and linearise them, accordingly. 
The gauge-dependence of the novel conservation laws is investigated and we show that the two charges associated with the linearisations of $\mathbf{U}$ and $\mathbf{V}$ are gauge-invariant. Subsequently, we show that the $C^3$-null gluing problem is solved by applying the results of the $C^2$-null gluing problem of \cite{ACR2} and constructing the solution to the null structure equations for the $C^3$-components of the sphere data up to the charges $\mathbf{U}$ and $\mathbf{V}$. 

The rest of this paper is organised as follows.
\begin{itemize}
\item In Section \ref{sec:prelim}, we set up the notation, definitions and relevant formulas for this paper.
\item In Section \ref{sec:mainthmprecise}, we present the precise version of Theorem \ref{thm:mainv1}.
\item In Section \ref{sec:mainthmproof}, we setup the functional analytic framework and prove the main Theorem \ref{thm:main} using the implicit function theorem which relies on the solvability of the linearised $C^3$-null gluing problem.
\item In Section \ref{sec:lingluing}, we solve the linearised $C^3$-null gluing problem at Minkowski.
\item In Appendix \ref{sec:doublenullcoords}, we setup the double null gauge used throughout the paper and present the null structure equations.
\item In Appendix \ref{sec:linsec}, we present the linearised null structure equations and the linearised charges that appear in the $C^2$-null gluing problem.
\item In Appendix \ref{sec:transversalspherepert}, we setup the transversal sphere perturbations and derive the perturbation formulas for the novel quantities in the $C^3$-null gluing problem.
\item In Appendix \ref{sec:sphharm}, we recall the Hodge theory and Fourier analysis on the round unit sphere.
\end{itemize}

\subsection{Acknowledgements} 

This work forms part of my PhD thesis and I express my gratitude to Stefan Czimek for suggesting this problem, his thoughtful guidance and encouragement. Additionally, I thank Juan Valiente-Kroon for his support, useful conversations and comments on an earlier draft. I am grateful for the hospitality of the University of Leipzig and the MPI for Mathematics in the Sciences, Leipzig, where part of this work was completed.

\section{Preliminaries}\label{sec:prelim}

\ni In this section, we introduce definitions and notation used throughout the paper. We work in double null coordinates and an overview of the definitions, notation and relevant null structure equations can be found in Appendix \ref{sec:doublenullcoords}, see also \cite{ChrFormationBlackHoles}. In this work, uppercase Latin indices are over $A,B,C,D=1,2$. The notation $A\lesssim B$ means that $A\leq c B$ where $c>0$ is a universal constant that does not depend on $A$ or $B$. \\

\subsection{$C^3$-sphere data and norms}\label{sec:C3spheredata}

In this section we define sphere data for both the $C^2$- and $C^3$-null gluing problem and associated norms. We first introduce the notion of $C^2$- and $C^3$-sphere data. 
\begin{definition}[$C^2$- and $C^3$-sphere data]\label{def:spheredata}
Let $u$ and $v$ be two real numbers with $v\geq u$ and let $S_{u,v}$ be a $2$-sphere equipped with a round metric. Let \emph{$C^3$-sphere data}, denoted by $\mfx_{u,v}$, be the following tuple of tensors on $S_{u,v}$
\begin{equation}\label{eq:C3spheredata}
\mfx_{u,v}:=x_{u,v}\times (D^2\om, \Db^2\omb,\Dh\a,\Dbh\ab)
\end{equation}
where $x_{u,v}$ denotes \emph{$C^2$-sphere data} and consists of the following tuple of tensors
\begin{align} \label{eq:C2spheredata}
x_{u,v} := \left(\Om,\gd, \Om\trchi, \chih, \Om\trchib, \chibh, \eta, \om, D\om, \omb, \Db\omb, \a, \ab\right),
\end{align}
\end{definition}

\ni\textbf{Notation.} We make use of the following notation.
\begin{itemize}
\item On outgoing null hypersurfaces $\HH_{u_0,[v_1,v_2]}$ with $u_0\leq v_1\leq v_2$, define $C^3$-\emph{outgoing null data} to be the tuple, $\mfx$, on each $S_{u_0,v}\subset\HH_{u_0,[v_1,v_2]}$ such that $\mfx\vert_{S_{u_0,v}}=\mfx_{u_0,v}$. Similarly on ingoing null hypersurfaces $\HHb_{[u_1,u_2],v_0}$ with $u_1\leq u_2 \leq v_0$, define \emph{$C^3$-ingoing null data} to be the tuple, $\underline{\mfx}$, on each $S_{u,v_0}\subset\HHb_{[u_1,u_2],v_0}$ such that $\underline{\mfx}\vert_{S_{u,v_0}}=\mfx_{u,v_0}$. The definition of $C^2$-outgoing and ingoing null data then follows from the restriction to $C^2$-sphere data, $x_{u,v}$.
\item The \emph{reference Minkowski sphere data} is denoted by $\mfm$ with $r=v-u$ and
\begin{equation}
\mfm_{u,v}=\left(1,v^{2}\gac,\frac{2}{v},0,-\frac{2}{v}, 0,0,0,0,0,0,0,0,0,0,0,0\right).
\end{equation}
We abuse notation and let $\mfm_{u,v}$ denote both  reference Minkowski $C^2$- and $C^3$-sphere data.
\end{itemize}

\begin{remark}
We make the following remarks on the definition of $C^3$-sphere data.
\begin{enumerate}
\item Our definition of $C^3$-sphere data agrees with the \emph{higher-order sphere data} defined in \cite{ACR2}.
\item It is not necessary for the definition of sphere data to include the prescription of all metric components, Ricci coefficients and null curvature components due to redundancy in the null structure equations. For example, from the prescription of the components of sphere data in \eqref{eq:C2spheredata} combined with the gauss equations, $$K + \frac{1}{4} \tr \chi \tr \chib - \frac{1}{2} (\chih,\chibh) = - \rh,$$ determines the null curvature component, $\rh$.
\end{enumerate}
\end{remark}

\ni The standard tensor spaces on spheres and null hypersurfaces are defined as follows.
\begin{definition}[Tensor spaces on $2$-spheres] \label{def:spherespaces} Let $u$ and $v$ be two real numbers with $v\geq u$, let $S_{u,v}$ be a $2$-sphere. For a non-negative integer $m$ and $S_{u,v}$-tangential tensors $T$, define
\begin{align*} 
\Vert T \Vert^2_{H^m(S_{u,v})} :=   \sum_{i=0}^m  \left\Vert \Nabs^i T \right\Vert^2_{L^2(S_{u,v})},
\end{align*}
where the covariant derivative $\Nabs$ and the measure in $L^2(S_{u,v})$ are with respect to the metric $\ga=(v-u)^2\gac$. Moreover, let $ H^m(S_{u,v}) := \{ T: \Vert T \Vert_{H^m(S_{u,v})} < \infty \}$.
\end{definition}
\begin{definition}[Tensor spaces on null hypersurfaces] \label{def:nullHHspaces} Let $m$ and $l$ be two non-negative integers.
\begin{enumerate}
\item For real numbers satisfying $u_0 < v_1 < v_2$ and $S_{u_0,v}$-tangential tensors $T$ on $\HH_{u_0,[v_1,v_2]}$, define
\begin{align*} 
\Vert T \Vert^2_{H^m_l(\HH_{u_0,[v_1,v_2]})} :=  \int_{v_1}^{v_2} \sum_{0\leq j\leq l} \left\Vert D^j T \right\Vert^2_{H^m(S_{u_0,v})} \,dv,
\end{align*}
and let $H^m_l(\HH_{u_0,[v_1,v_2]}) := \{ F: \Vert F \Vert_{H^m_l(\HH_{u_0,[v_1,v_2]})} < \infty\}.$
\item For real numbers $u_1 < u_2 < v_0$ and $S_{u,v_0}$-tangential tensors $T$ on $\HHb_{[u_1,u_2],v_0}$, define
\begin{align*} 
\Vert T \Vert^2_{H^m_l(\HHb_{[u_1,u_2],v_0})} :=  \int_{u_1}^{u_2} \sum_{0\leq j\leq l} \left\Vert \Db^j T \right\Vert^2_{H^m(S_{u,v_0})} \,du,
\end{align*}
and let $ H^m_l(\HHb_{[u_1,u_2],v_0}) := \{ F: \Vert F \Vert_{H^m_l(\HHb_{[u_1,u_2],v_0})} < \infty\}.$
\end{enumerate}
\end{definition}

For the $C^2$-sphere data, the following norms are defined to reflect the regularity hierarchy of the null structure equations.

\begin{definition}[$C^2$-sphere data norm]\label{def:C2spheredata} Let $v$ and $u$ be two real numbers with $v>u$ and non-negative integer $k$. For $C^2$-sphere data $x_{u,v}$ on $S_{u,v}$ define
\begin{align} 
\begin{aligned}
\Vert x_{u,v} \Vert_{X^k(S_{u,v})} :=& \Vert\Om\Vert_{H^{k}(S_{u,v})} +\Vert\gd\Vert_{H^{k}(S_{u,v})} + \Vert \Om\trchi \Vert_{H^{k}(S_{u,v})} + \Vert \chih \Vert_{H^{k}(S_{u,v})}\\
& + \Vert \Om\trchib \Vert_{H^{k-2}(S_{u,v})} + \Vert \chibh \Vert_{H^{k-2}(S_{u,v})} + \Vert \eta \Vert_{H^{k-1}(S_{u,v})} \\
&+ \Vert \om \Vert_{H^{k}(S_{u,v})}+ \Vert D\om \Vert_{H^{k}(S_{u,v})}+\Vert \omb \Vert_{H^{k-2}(S_{u,v})} + \Vert \Db\omb \Vert_{H^{k-4}(S_{u,v})} \\
&+ \Vert \a \Vert_{H^{k}(S_{u,v})} +\Vert \ab \Vert_{H^{k-4}(S_{u,v})}.
\end{aligned}\label{eq:C2spherenorm}
\end{align}
Let $X^k(S_{u,v}) := \{ x_{u,v} : \Vert x_{u,v} \Vert_{\XX^k(S_{u,v})} < \infty\}.$
\end{definition}
\begin{remark} In order apply standard calculus estimates and Sobolev embeddings on the norm defined in definition \ref{def:C2spheredata}, we consider $k\geq6$.
\end{remark}
\begin{definition}[$C^2$-null data norm]\label{def:C2nullnorm} 
Let $x$ be outgoing null data on $\HH=\HH_{u_{0},[v_{1},v_{2}]}$ with $u_0\leq v_1\leq v_2$ and $\underline{x}$ be ingoing null data on $\HHb=\HHb_{[u_{1},u_2],v_0}$ with $u_1\leq u_2 \leq v_0$. For integers $k>0$ and $m>0$, define the \emph{$C^2$-outgoing null data norm} by
\begin{align}
\begin{aligned}
\Vert x \Vert_{X^k_m(\HH)} :=& \Vert \Om \Vert_{H^k_m(\HH)} +\Vert \gd \Vert_{H^k_m(\HH)}+ \Vert \Om\trchi \Vert_{H^k_m(\HH)}+\Vert \chih \Vert_{H^k_{m-1}(\HH)}\\
& + \Vert \Om\trchib \Vert_{H^{k-2}_{m+1}(\HH)}+ \Vert \chibh \Vert_{H^{k-2}_m(\HH)}+ \Vert \eta \Vert_{H^{k-1}_m(\HH)}\\
&+\Vert \om \Vert_{H^k_{m-1}(\HH)} +\Vert D\om \Vert_{H^k_{m-2}(\HH)}+ \Vert \omb \Vert_{H^{k-2}_m(\HH)}+ \Vert \Db\omb \Vert_{H^{k-4}_m(\HH)}\\
&+  \Vert \a \Vert_{H^k_{m-2}(\HH)} +\Vert \ab \Vert_{H^{k-4}_m(\HH)},
\end{aligned}\label{eq:C2nullnorm}
\end{align}
and the associated space $X^k_m(\HH) := \{ x : \Vert x \Vert_{X^k_m(\HH)} < \infty \}$. Moreover, define the \emph{$C^2$-ingoing null data norm} by
\begin{align*}
\Vert \underline{x} \Vert_{X^k_m(\HHb)} :=& \Vert \Om \Vert_{H^k_m(\HH)} +\Vert \gd \Vert_{H^k_m(\HH)}+ \Vert \Om\trchib \Vert_{H^k_m(\HH)}+\Vert \chibh \Vert_{H^k_{m-1}(\HH)}\\
& + \Vert \Om\trchi \Vert_{H^{k-2}_{m+1}(\HH)}+ \Vert \chih \Vert_{H^{k-2}_m(\HH)}+ \Vert \eta \Vert_{H^{k-1}_m(\HH)}\\
&+\Vert \omb \Vert_{H^k_{m-1}(\HH)} +\Vert \Db\omb \Vert_{H^k_{m-2}(\HH)}+ \Vert \om \Vert_{H^{k-2}_m(\HH)}+ \Vert D\om \Vert_{H^{k-4}_m(\HH)}\\
&+  \Vert \ab \Vert_{H^k_{m-2}(\HH)} +\Vert \a \Vert_{H^{k-4}_m(\HH)},
\end{align*}
and the associated space $ X^k_m(\HHb) := \{ \underline{x} : \Vert \underline{x} \Vert_{X^k_m(\HH)} < \infty \}$.
\end{definition}
\ni For $C^3$-sphere data $\mfx_{u,v}$, we similarly define the $C^3$-sphere data norm to reflect the regularity hierarchy of the null structure equations.
\begin{definition}[$C^{3}$-sphere data norm]\label{def:C3spherenorm} Let $\mfx_{u,v}$ be $C^3$-sphere data and $x_{u,v}$ be $C^2$-sphere data on $S_{u,v}$. Define
\begin{align*} 
\Vert \mfx_{u,v} \Vert_{\XX(S_{u,v})} :=&\Vert x_{u,v}\Vert_{X^8(S_{u,v})}+\Vert D^{2}\om \Vert_{H^{8}(S_{u,v})} + \Vert \Db^{2}\omb \Vert_{H^{2}(S_{u,v})}  + \Vert \Dh\a \Vert_{H^{8}(S_{u,v})}\\
& +\Vert \Dbh\ab \Vert_{H^{2}(S_{u,v})}.
\end{align*}
The norms defined with respect to the round metric $\gac$ on $S_{u,v}$ and $
\XX(S_{u,v}) := \{ \mfx_{u,v} : \Vert \mfx_{u,v} \Vert_{\XX(S_{u,v})} < \infty\}.$
\end{definition}
\ni For $C^3$-outgoing null data $\mfx$ we have the following norm.
\begin{definition}[$C^{3}$-null data norm (outgoing)]\label{def:C3nullnorm} Let $\mfx$ and $x$ be outgoing $C^3$ and $C^2$ null data, respectively,  on $\HH=\HH_{u_{0},[v_{1},v_{2}]}$ with $u_0\leq v_1\leq v_2$. Define the \emph{$C^3$-outgoing null data norm} by
\begin{align*}
\Vert \mfx \Vert_{\XX(\HH)} :=& \Vert x \Vert_{X^8_4(\HH)} +\Vert D^{2}\om \Vert_{H^8_1(\HH)}+ \Vert \Db^{2}\omb \Vert_{H^2_4(\HH)}+ \Vert \Dh\a \Vert_{H^{8}_1(\HH)}+\Vert \Dbh\ab \Vert_{H^{2}_4(\HH)}
\end{align*}
and the associated space by $X(\HH) := \{ \mfx : \Vert \mfx \Vert_{\XX(\HH)} < \infty \}.$
\end{definition}
\ni We work with the following higher-regularity ingoing null data that reflects the regularity of the null structure equations along the ingoing null hypersurface, $\HHb$.  
\begin{definition}[Higher regularity ingoing null data norms]\label{def:nullnormhighreg} Let $\underline{x}$ be ingoing $C^2$-null data and $\underline{\mfx}$ ingoing $C^3$-null data on $\HHb=\HHb_{[u_{1},u_{2}],v_{0}}$ with $u_1\leq u_2\leq v_0$. Define the $C^2$ higher regularity ingoing null data norm by
\begin{align*}
\Vert \underline{x} \Vert_{X^+(\HHb)}:=\Vert \underline{x}\Vert_{X^7_3(\HHb)}
\end{align*}
with the associated space $X^+(\HHb) := \{ \underline{x} : \Vert \underline{x} \Vert_{X^+(\HHb)} < \infty \}$.
Moreover, define the $C^3$ higher regularity ingoing null data norm by
\begin{align*}
\Vert \underline{\mfx} \Vert_{\XX^+(\HHb)}:=\Vert \underline{x}\Vert_{\XX^9_4(\HHb)}+\Vert \Db^{2}\omb \Vert_{H^9_1(\HH)}+ \Vert D^{2}\om \Vert_{H^3_4(\HH)}+ \Vert \Dbh\ab \Vert_{H^{9}_1(\HH)}+\Vert \Dh\a \Vert_{H^{3}_4(\HH)}.
\end{align*}
with the associated space $ \XX^+(\HHb) := \{ \underline{\mfx} : \Vert \underline{\mfx} \Vert_{\XX^+(\HHb)} < \infty \}.$
\end{definition}

\subsection{Characteristic seed and hierarchy of null structure equations}\label{sec:charseed}

In this section, a characteristic seed is defined for the characteristic Cauchy problem from which a solution to the null structure equations along $\HH_{[1,2]}$ is constructed. This solution is constructed by integrating sequentially a hierarchy of transport equations along $\HH_{[1,2]}$ where each transport equation depends only on the characteristic seed and quantities appearing in the preceding equations in the hierarchy. The transport equations are derived by combining null structure equations. For more details, see for example \cite{SachsCIVP}.

\begin{definition}[$C^2$- and $C^3$-characteristic seed on $\HH_{[1,2]}$]\label{def:charseed} Let $\HH_{[1,2]}=\bigcup_{1\leq v\leq 2}S_v$. A \emph{$C^2$-characteristic seed} along $\HH_{[1,2]}$ consists of the prescription of the following quantities:
\begin{itemize}
\item on $\HH_{[1,2]}$ prescribe a positive scalar function $\Om$ and a conformal class of induced metrics $\mathrm{conf}(\gd)$ on each $S_v$.
\item on $S_1$ prescribe:
\begin{itemize}
\item an induced metric $\gd$ compatible with $\textrm{conf}(\gd)$,
\item scalar functions $\tr\chi, \tr\chib, \omb$ and $\Db\omb$,
\item an $S_1$-tangential vectorfield $\eta$,
\item $S_1$-tangential tracefree symmetric $2$-tensors $\ab$ and $\chibh$.
\end{itemize}
\end{itemize}
A \emph{$C^3$-characteristic seed} additionally includes the prescription on $S_{1}$ of
\begin{itemize}
\item a scalar function $\Db^2\omb$,
\item a tracefree symmetric $2$-tensor $\Dbh\ab$.
\end{itemize}
\end{definition}
\begin{remark}
The prescription of $\textrm{conf}(\gd)$ is equivalent to the prescription of $\gd_c$ defined in equation \eqref{eq:confgc}.
\end{remark}

\ni Beginning with the characteristic seed given in Definition \ref{def:charseed}, the null structure equations can be combined into a hierarchical order from which all components of sphere data $\mfx_{u,v}$ can be determined along $\HH_{[1,2]}$. The hierarchy consists of a sequence of transport equations along the generators of $\HH$ where each equation depends only on quantities determined from preceding equations in the hierarchy. In this section, we present the equations in their hierarchical order. Combining the first variation equation \eqref{eq:firstvariationeq}
and Raychaudhuri \eqref{eq:raychauduri} equation,
we obtain the following transport equation for $\phi$.
\begin{align*} 
\CC_{\phi}:=& D^2\phi -\om \Om\tr\chi \phi + \frac{1}{2} \Om^2\vert \chih \vert^2 \phi =0.
\end{align*}
Integrating this equation and using the characteristic seed, $\phi$ can be calculated along $\HH$ and thus $\gd$ is specified at each sphere. We obtain the following equations by manipulating the null structure equations. We have the equations for $\Om\tr\chi$ and $\chih$
\begin{align*} 
\CC_{\Om\tr\chi} := & 2\phi D\phi + \frac{\phi^2}{2} \tr_{\gd_c} D\gd_c - \Om \tr\chi \phi^2 =0, \\
\CC_{\chih} := &  -2 \Om \chih + \phi^2 \left(D\gd_c -\frac{1}{2} (\tr_{\gd_c} D\gd_c) \gd_c\right) =0.
\end{align*}
These equations can be calculated at each sphere since $\phi$ is known from $C_\phi$. Then we have the transport equations for $\eta$,
\begin{align*} 
\CC_\eta := & D\eta + \Om \tr\chi \eta - \Om \left(\Divd \chih - \frac{1}{2}\di \trchi+\chih \di \log \Om + \frac{3}{2}\trchi \di \log \Om\right) =0,
\end{align*}
and the equations for $\Om\tr\chib$ and $\chibh$,
\begin{align*} 
\CC_{\Om\tr\chib} :=& D(\Om \tr\chib) + \Om\trchi (\Om\tr\chib) +2\Om^2\Divd (\eta-2\di \log\Om) 
- 2 \Om^2 \vert \eta-2\di \log\Om \vert^2 +2 \Om^2 K =0,\\
\CC_{\chibh} :=&D\left(\Om \chibh\right) - (\Om \chih, \Om \chibh) \gd - \frac{1}{2} \Om \tr\chi \Om \chibh \\
&- \Om^2 \left(\Nabs \widehat{\otimes}(2\di \log\Om-\eta) + (2\di \log\Om-\eta) \widehat{\otimes} (2\di \log\Om-\eta) - \frac{1}{2} \tr\chib \chih\right) =0,
\end{align*}
where $K$ is the Gauss curvature. By definition $\om$ and all $D$ derivatives of $\om$ can be calculated directly from the characteristic seed. For $\omb$, we have the following equation
\begin{align*} 
\CC_\omb := D\omb - \Om^2 \left(4(\eta, \di \log \Om)- 3 \vert \eta \vert^2 + K +\frac{1}{4} \tr\chi \tr\chib -\frac{1}{2} (\chih, \chibh)\right)=0.
\end{align*}

From the Bianchi equations \ref{eq:nullBianchiab}, we obtain an equation for $\ab$ only depending on the quantities already derived hierarchically,
\begin{align*} 
\CC_\ab :=& \widehat{D}\ab - \frac{1}{2} \Om \tr\chi \ab + 2 \om \ab \\
&+ \Om \Nabs \widehat{\otimes} \left(\Divd \chibh - \frac{1}{2} \di \tr\chib - \chibh \cdot (\eta-\di\log\Om) + \frac{1}{2} \trchib (\eta-\di \log\Om)\right)\\
&+ \Om \left(9\di \log \Om -5 \eta\right)\widehat{\otimes} \left(\Divd \chibh - \frac{1}{2} \di \tr\chib - \chibh \cdot (\eta-\di\log\Om) + \frac{1}{2} \tr\chib (\eta-\di \log\Om)\right)\\
&-3\Om \chibh \left(K + \frac{1}{4} \tr\chi \tr\chib - \frac{1}{2} (\chih,\chibh)\right) + 3\Om {}^*\chibh \left(\Curld \eta + \frac{1}{2} \chih \wedge \chibh\right)\\
=&0.
\end{align*}
The following equation for $\Db\omb$ is derived in Appendix B of \cite{ACR2}.
\begin{align*} 
\CC_{\Db\omb} :=& D\Db\omb -12 \Om^3 (\di\log\Om-\eta)\omb -2 \Om^2 \omb \left(2(\eta,-\eta+2\di\log\Om)- \vert \eta\vert^2\right)\\
& +\left(K+ \frac{1}{4} \tr\chi\tr\chib-\frac{1}{2} (\chih,\chibh)\right) \left(\frac{3}{2}\Om^3\tr\chib -2\Om^2 \omb\right) \\
&-12 \Om^3 \chib(\eta, \eta-\di\log\Om) - \frac{1}{2} \Om^3 (\chih, \ab)\\
& - \Om^3 \left(\Divd \chibh-\frac{1}{2} \di\tr\chib-\chibh \cdot \left(\eta- \di \log \Om\right)+ \frac{1}{2} \tr\chib \left(\eta-\di\log\Om\right), 7\eta-3\di\log\Om\right)\\
&- \Om^3 \Divd \left(\Divd \chibh-\frac{1}{2} \di\tr\chib-\chibh \cdot \left(\eta- \di \log \Om\right)+ \frac{1}{2} \tr\chib \left(\eta-\di\log\Om\right)\right)  \\
=&0.
\end{align*}
 $\a$ satisfies the constraint equation
\begin{align*} 
\CC_{\ab} :=& \Om \a + D\chih-\Om \vert \chih \vert^2 \gd - \om \chih=0.
\end{align*}
Finally, we have the following higher-order equations for the quantities $\Dbh\ab$ and $\Db^2\omb$. Firstly, from \eqref{eq:Duhab}, we have that

 \begin{align}
\begin{aligned}
\CC_{\Dbh\ab}:=&\Dh\Dbh\ab - \frac{1}{2}\Om\tr\chi\widehat{\Db}\ab+\frac{1}{4}(\Om\tr\chi)(\Om\tr\chib)\ab-\Om \Nabs\widehat{\otimes} \left( \frac{3}{2} \Om \tr \chib \beb + \Om \Divd \aa\right)\\
& +2\om\widehat{\Db}\ab+4\widehat{\LIE}_{\Om^{2}\ze}\ab-\frac{1}{2}\Om\chibh(\Om\chih,\ab)+\frac{1}{2}\Om\chih(\Om\chibh,\ab)\\
&-2\Om^{2}(2(\et,\et-2\di\log\Om)+|\et-2\di\log\Om|^{2}+\rho)\ab\\
&- \frac{1}{2}\Om^{2}\ab(2\Divd\et+2|\et|^{2}-(\chih,\chibh)+2\rh)\\
&+\Om\omb ( \Nabs \widehat{\otimes} \beb - (5\et-9\di\log\Om) \widehat{\otimes} \beb + 3 \chibh \rh -3 {}^* \chibh \si) \\
&-\Om \bigg\{ \Nabs\widehat{\otimes} (\Om \chibh \cdot \beb +\omb \beb +\Om (\et-2\di\log\Om) \cdot\aa)- 2\Om\chibh\Divd\bb\\
&- 2\widehat{\Db\Gammad}\cdot\bb+(5\Om(\chib\cdot\et+\bb)+\di\om)\widehat{\otimes}\bb\\
&+(5\et-9\di\log\Om)\widehat{\otimes} \left(\frac{3}{2} \Om \tr \chib \beb - \Om \chibh \cdot \beb -\omb \beb + \Om \left( \Divd \aa - (\et-2\di\log\Om) \cdot \aa\right)\right) \\
&+2\Om(\bb,5\et-9\di\log\Om)\chibh+3(\om\chibh-\Om\ab)\rh-3(\om{}^{*}\chibh-\Om{}^{*}\ab)\si\\
&+3\chibh\left(- \frac{3}{2} \Om \tr \chib \rh- \Om \left( \Divd \beb + (\et+\di\log\Om, \beb)+ \frac{1}{2} (\chih,\aa) \right)\right)\\
&-3^{*}\chibh\left(- \frac{3}{2} \Om \tr \chib \si - \Om \left( \Curld \beb + (\et+\di\log\Om, {}^* \beb) + \frac{1}{2}\chih \wedge \aa \right)\right)\bigg\}\\
=&0, \raisetag{9\baselineskip}
\end{aligned}
\end{align}
where $\widehat{\Db\Gammad}$ is the trace-free part of $\Db\Gammad$ with
\begin{equation*}
(\Db\Gammad)^C_{AB}=(\gd^{-1})^{CD}(\Nabs_A(\Omega\chib)_{BD}+\Nabs_B(\Omega\chib)_{AD}-\Nabs_D(\Omega\chib)_{AB}).
\end{equation*}
The equation for $\Db^2\omb$ follows from \eqref{eq:Du2ombcoordu}
\begin{align}
 \begin{aligned}
\CC_{\Db^2\omb}:=& D\Db^{2}\omb +4\Om^{3}({\Om\tr\chib}){\Divd\bb}+3\Om^{2}(\Om\tr\chib)^{2}{\rh}+\Om^{3}\Divd(\Om{\Divd\ab})+12\Om^{2}|\di\omb|^{2}\\
 &+7\Om^{4}|\bb|^{2} +16\Om^{2}(\et-\di\log\Om,\di\Db\omb)+40\Om^{2}\omb(\et-\di\log\Om,\di\omb)\\
 &-24\Om^{2}\chib(3\et-2\di\log\Om,\di\omb)-2\Om^{2}(\omb^{2}+\Db\omb)((\et,-3\et+4\di\log\Om)-\rh)\\
 &-\Om^{3}\left(\beb,12\di\omb-\frac{3}{2}\di\Om\tr\chib\right)-72\Om^{3}\omb\chib(\et,\et-\di\log\Om)-6\Om^{3}\omb{\Divd\bb}\\
 &-6\Om^{3}\omb(\beb,7\et-3\di\log\Om)+\Om^{4}\chib(\bb,38\et-15\di\log\Om)-\frac{1}{2} \Om^{3}(\chibh,\Db\ab) \\
 &+12\Om^{4}(\chib\times\chib)(\et,5\et-4\di\log\Om)+\frac{1}{2}\Om^{4}\tr\chib(\bb,31\et-9\di\log\Om)\\
 &-\Om^{2}\left(9\omb\Om\tr\chib-\frac{3}{2}\Om^{2}|\chih|^{2}\right){\rh}+12\Om^{4}\ab(\et,\et-\di\log\Om)\\
 &+\Om^{4}(\Divd\ab-(\et-2\di\log\Om)\ab,7\et-3\di\log\Om)\\
 &+\Om^{3}\Divd(\Om(\chibh\cdot\bb-(\et-2\di\log\Om)\cdot\ab))+2\Om^{4}(\chib,(\chih\times\ab))\\
 &-\frac{1}{2}\Om^{3}\left(4\omb\chih-\Om\tr\chib\chih+\Om(\Nabs\widehat{\otimes}\et+\et\widehat{\otimes}\et)-\frac{\Om}{2}\tr\chi\chibh,\ab\right)\\
 =&0.
  \end{aligned}
 \end{align}

The $12$ null structure equations stated above yield $12$ null transport equations denoted by $\CC_\varphi$ for $$\varphi\in\{\phi,\Om\tr\chi,\chih,\eta,\Om\tr\chib,\chibh,\omb,\ab,\allowbreak\Db\omb,\Db^2\omb,\a,\Dbh\ab\}.$$ We abuse notation by referring to the collection of functions as $\CC_\varphi$. The following lemma shows that $C_\varphi$ is a smooth mapping and defines the space they map into.

\begin{lemma}[Smoothness of constraint functions] \label{lem:codomainconstraints} Consider $C^3$-null data $\mfx$ on $\HH_{0,[1,2]}$. The constraint map $\mathcal{C}$ with
\begin{align*} 
\begin{aligned} 
\mathcal{C}: \mfx \mapsto \CC_\varphi(\mfx),
\end{aligned}
\end{align*}
is a smooth map from an open neighbourhood of $\mathfrak{m}$ in $\mathcal{X}(\HH_{0,[1,2]})$ to $\mathcal{Z}_{\CC}$, where
\begin{align*} 
\begin{aligned} 
\mathcal{Z}_{\CC}:=&  H^8_3 \times  H^8_4 \times H^8_3 \times H^{7}_3 \times  H^{6}_4 \times  H^{6}_3 \times  H^{6}_3\times H^{4}_3 \times  H^{4}_3 \times  H^8_2 \times H^{2}_{3} \times H^{2}_{3},
\end{aligned} 
\end{align*}
Furthermore, we have the Lipschitz estimate
\begin{align*} 
\begin{aligned} 
\Vert \CC(\mfx) \Vert_{\mathcal{Z}_{\CC}} \les \Vert \mfx - \mathfrak{m} \Vert_{\XX(\HH_{0,[1,2]})}.
\end{aligned}
\end{align*}
\end{lemma}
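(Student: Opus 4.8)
The plan is to prove both assertions — smoothness of $\CC$ and the Lipschitz bound near $\mfm$ — by first bringing the twelve constraint functions into a manifestly ``polynomial'' form and then applying the standard multiplication and composition calculus on the $2$-sphere $S$ and on $\HH:=\HH_{0,[1,2]}$. The first step is an \emph{algebraic reduction}: rewrite each $\CC_\varphi$, in the fixed angular coordinates $(\th^1,\th^2)$, as a finite sum of terms, each a product of (i) a fixed number of coordinate angular derivatives and $D$-derivatives applied to a single component of $\mfx$, and (ii) finitely many coefficient factors that are smooth functions of $(\Om,\gd,\phi)$ and of the reciprocals $\Om^{-1},\phi^{-1},(\det\gd)^{-1}$ on a neighbourhood of the Minkowski value. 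This requires two preparations. First, every Ricci coefficient or null curvature component appearing in the formulas for $\CC_\ab,\CC_{\Db\omb},\CC_{\Db^2\omb},\CC_{\Dbh\ab}$ that is \emph{not} part of the sphere data must be eliminated: one substitutes $\ze=\eta-\di\log\Om$ and $\etab=2\di\log\Om-\eta$, replaces $\rh$ by $-(K+\tfrac14\tr\chi\tr\chib-\tfrac12(\chih,\chibh))$ from the Gauss equation, and replaces $\si$, $\beb$, $\beta$, the trace part of $\Db\ab$, and the connection-difference tensor $\Db\Gammad$ by their expressions in terms of $(\Om,\gd,\chih,\tr\chi,\chibh,\tr\chib,\eta,\ab)$ and one angular derivative thereof, furnished by the Codazzi and Gauss--Codazzi equations of Appendix~\ref{sec:doublenullcoords}. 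Second, the geometric operators $\Nabs,\Divd,\Curld,\Ld,\Nabs\widehat\otimes,\DDd_2^*,\Dh,\Dbh,\widehat{\LIE}$ and the Gauss curvature $K$ are expanded in coordinates, introducing only the Christoffel symbols of $\gd$ relative to $\gac$, which are rational — hence, near $\gd=v^2\gac$, smooth — functions of $\gd$ and its first angular derivatives; it is convenient to treat $\phi$ and $\gd_c=\phi^{-2}\gd$ as derived components depending smoothly on $\gd$, of the same regularity as $\gd$ in $\XX(\HH)$.

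On $\HH$ the parameter $r=v$ ranges over the compact interval $[1,2]$, so all $v$-weights in the norms are uniformly comparable to $1$, and I would then invoke three facts: (a) the Sobolev product estimate $\Vert fg\Vert_{H^{m_3}(S)}\les\Vert f\Vert_{H^{m_1}(S)}\Vert g\Vert_{H^{m_2}(S)}$ whenever $m_3\le\min(m_1,m_2)$ and $m_1+m_2-m_3\ge 2$, in particular the algebra property for $m\ge 2$ and $H^2(S)\hookrightarrow L^\infty(S)$; (b) smoothness and local boundedness of the Nemytskii maps $u\mapsto G(u)$, $H^m(S)\to H^m(S)$ for $m\ge 2$ and $G$ smooth, uniformly in $v\in[1,2]$; and (c) the one-dimensional Sobolev embedding in $v$, giving $D^j:H^m_l(\HH)\to C^0_v H^m(S)$ for $j\le l-1$, so that in a Leibniz expansion of $D^j$ applied to a product of components one keeps at most one factor in $L^2_v H^m(S)$ and places each remaining factor in $L^\infty_v H^{m'}(S)$ at the cost of one $D$-derivative from its budget. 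Since finite sums, compositions, and bounded multilinear combinations of smooth maps are smooth, the algebraic reduction together with (a)--(c) shows that each $\CC_\varphi$ is a smooth map from a neighbourhood of $\mfm$ in $\XX(\HH)$ into the space of $S_{0,v}$-tangential tensors on $\HH$.

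It remains to check, term by term, that each $\CC_\varphi$ lands in the factor of $\ZZ_\CC$ assigned to it in the hierarchical order $(\phi,\Om\tr\chi,\chih,\eta,\Om\tr\chib,\chibh,\omb,\ab,\Db\omb,\a,\Dbh\ab,\Db^2\omb)$: for every term one confirms that the combined angular- and $D$-derivative load is feasible against the regularity budget provided by $\XX(\HH)$ — $\Om,\gd,\Om\tr\chi\in H^8_4$, $\chih,\om\in H^8_3$, $D\om,\a\in H^8_2$, $D^2\om,\Dh\a\in H^8_1$, $\eta\in H^7_4$, $\Om\tr\chib\in H^6_5$, $\chibh,\omb\in H^6_4$, $\Db\omb,\ab\in H^4_4$, $\Db^2\omb,\Dbh\ab\in H^2_4$ — keeping in mind that each Gauss/Codazzi substitution and each of $\Nabs,\Divd,\dots$ costs one angular derivative, and that the $L^2_v$/$L^\infty_v$ split of the $v$-regularity is always feasible. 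The governing (heaviest) terms to track are $D^2\phi$ and $\om\,\Om\tr\chi\,\phi$ for $\CC_\phi$; $\phi\,D\phi$ and $\phi^2\tr_{\gd_c}D\gd_c$ for $\CC_{\Om\tr\chi}$; $\Om\chih$ and $\phi^2 D\gd_c$ for $\CC_{\chih}$; $D\eta$, $\Divd\chih$, $\chih\,\di\log\Om$ for $\CC_\eta$; $D(\Om\tr\chib)$, $\Om^2\Divd\eta$, $\Om^2K$ for $\CC_{\Om\tr\chib}$; $D(\Om\chibh)$, $(\chih,\chibh)\gd$ for $\CC_{\chibh}$; $D\omb$, $(\chih,\chibh)$ for $\CC_\omb$; $\Dh\ab$, $\Om\,\Nabs\widehat\otimes\Divd\chibh$ for $\CC_\ab$; $D\Db\omb$, $\Om^3\Divd\Divd\chibh$, $\Om^3(\chih,\ab)$ for $\CC_{\Db\omb}$; $D\chih$, $\Om\a$ for $\CC_\a$; $\Dh\Dbh\ab$, $\Om^2\Nabs\widehat\otimes\Divd\ab$, $\Om^2\rh\,\ab$ for $\CC_{\Dbh\ab}$; and $D\Db^2\omb$, $\Om^3\Divd(\Om\Divd\ab)$, $\Om^3(\chibh,\Db\ab)$ for $\CC_{\Db^2\omb}$. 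For each, feasibility is immediate from (a), while a few terms (for instance $\Om\tr\chi\cdot\eta$ in $\CC_\eta$, where $\eta\in H^7_4\hookrightarrow C^0_v H^7(S)$) use the $L^\infty_v$-trade of (c); this yields $\CC:\mfx\mapsto(\CC_\varphi(\mfx))\in\ZZ_\CC$.

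Finally, since Minkowski spacetime solves the null structure equations, $\CC(\mfm)=0$, and after the substitutions above no term of any $\CC_\varphi$ survives at $\mfx=\mfm$; Taylor-expanding in $\de\mfx:=\mfx-\mfm$ (and expanding the smooth coefficients $G$ about their Minkowski values) writes each $\CC_\varphi(\mfm+\de\mfx)$ as a finite sum of bounded multilinear terms each carrying at least one factor of $\de\mfx$, so on a fixed ball $\{\Vert\de\mfx\Vert_{\XX(\HH)}\le\ep_0\}$ the quadratic-and-higher terms are dominated by the linear one and $\Vert\CC(\mfx)\Vert_{\ZZ_\CC}\les\Vert\de\mfx\Vert_{\XX(\HH)}$ with a constant depending only on $\ep_0$ (equivalently, $\CC\in C^1$ with $\CC(\mfx)=\int_0^1 D\CC(\mfm+t\,\de\mfx)[\de\mfx]\,dt$ and $D\CC$ bounded on the ball). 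The step I expect to be genuinely laborious is the codomain verification: the formulas for $\CC_\ab,\CC_{\Db\omb},\CC_{\Db^2\omb},\CC_{\Dbh\ab}$ are long, and one must confirm for \emph{every} term — not only the governing ones listed above — that after the Gauss/Codazzi substitutions have each spent one angular derivative the total load fits the budget of the least regular factor and the $v$-regularity splits as $L^2_v\times L^\infty_v\times\cdots$; the weights defining $\XX(\HH)$ and $\ZZ_\CC$ are engineered exactly so that this always works, which is why the bookkeeping, though routine, cannot be avoided.
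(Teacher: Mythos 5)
The paper does not supply a proof of this lemma; it is stated as a black-box technical fact, presumably because the analogous statement for $C^2$-null data is established in \cite{ACR2} by exactly the sort of Sobolev-calculus bookkeeping you describe, and the $C^3$ extension is a (long but mechanical) re-run of the same argument over the two new constraint functions $\CC_{\Dbh\ab}$ and $\CC_{\Db^2\omb}$. Your strategy --- eliminate the non-sphere-data curvature components via Gauss and Gauss--Codazzi, expand the geometric operators in coordinates, invoke the $H^m(S)$ product/algebra estimates and the $L^2_v/L^\infty_v$ trade via the one-dimensional Sobolev embedding on $[1,2]$, then check the $(m,l)$ budget term by term, and finally obtain the Lipschitz bound from $\CC(\mfm)=0$ together with a mean-value formula for the smooth map $\CC$ --- is precisely the expected proof.

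One phrasing to tighten: you write that after the substitutions ``no term of any $\CC_\varphi$ survives at $\mfx=\mfm$,'' but individual terms do survive (for example in $\CC_{\Om\tr\chi}$ the terms $2\phi D\phi$ and $-\Om\tr\chi\,\phi^2$ each equal $\pm 2v$ at Minkowski and cancel); the correct and sufficient statement, which you in fact invoke in the next clause via the integral-remainder form, is only that $\CC(\mfm)=0$. Also worth flagging (not a flaw in your argument, but in the target): your own budget count gives $\dot{\CC}_{\phid}=D^2\phid-2\omd\in H^8_2$ and $\dot{\CC}_{\omtrchid}\in H^8_3$, whereas the stated $\ZZ_\CC$ begins $H^8_3\times H^8_4\times\cdots$; this looks like a minor indexing slip in the paper's codomain rather than something your proof needs to reproduce.
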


\subsection{Sphere perturbations}\label{sec:spherepert}

In this section, we introduce sphere perturbations and define the perturbation map. We consider two types of sphere perturbations. Firstly, we consider \emph{transversal perturbations} along an ingoing null hypersurface $\HHb_{2}$. Consider a local double null coordinate system $(\ut,\vt,\tht^1,\tht^2)$ about a sphere $\tilde{S}_{u_0,2}\subset\HHb_{2}$ with $\tilde{S}_{u_0,2}=\{\ut=u_0,\vt=2\}$. For a sufficiently small scalar function $f$, the transformation on $\HHb_{u,2}$ given by
\begin{align} 
\begin{aligned} 
\tilde{u}=u+f(u,\th^1,\th^2), \quad \vt=v, \quad \tht^1=\th^1, \quad \tht^2=\th^2,
\end{aligned}
\end{align}
yields a local double null coordinate system $(u,v,\th^1,\th^2)$ on $\HHb_{2}$. We denote by $S_{u_0,2}$ the sphere on $\HHb_{2}$ with $S_{u_0,2}:=\{u=u_0,v=2\}$. The transverse perturbation is not an intrinsic gauge transformation of the sphere $\tilde{S}_{u_0,2}$ but a perturbation to a nearby sphere $S_{u_0,2}$ on $\HHb_{2}$. The sphere data $\mfx_{u_0,2}$ on  $S_{u_0,2}$ can be expressed in terms of the function $f$, called a \emph{perturbation function} and the sphere data $\tilde{\mfx}_{u_0,2}$. The transformations associated with the $C^3$-parts of the sphere data are computed in Appendix \ref{sec:spherepert} and the formulas associated with the $C^2$-sphere data are computed in Appendix A of \cite{ACR2}.

Secondly, we consider \emph{sphere diffeomorphisms} defined using angular perturbation functions $(q_{1},q_{2})$. Consider the $S_{u,v}$-tangential vectorfield $$-(v-u)^{2}\DD_{1}^*(q_{1},q_{2})$$ where the operator $\DDd_1^*$ is defined in \eqref{eq:hodgeops}, and the corresponding $1$-parameter family of diffeomorphisms $$\Phi_{t}(q):S_{u,v}\rightarrow S_{u,v}.$$ The angular perturbation of some sphere data $\mfx_{u,v}$ is the sphere data obtained from the pullback of $\tilde{\mfx}_{u,v}$ under the diffeomorphism $\Phi_{t}(q)$, $$\mfx_{u,v}:=\Phi_{t}^{*}(\tilde{\mfx}_{u,v}).$$

So far, we have shown that sphere data obtained from both transversal and angular perturbations depends on perturbation functions $f,q_1$ and $q_2$ and some sphere data $\tilde{\mfx}_{u,v}$. An analysis of the sphere data transformation formulas, see Appendix \ref{sec:transversalspherepert} and Appendix A of \cite{ACR2}, shows that $f$ only appears in the perturbation formulas as the five scalar functions,
\begin{equation*}
(f(0),\pr_{u} f(0),\pr_{u}^{2} f(0),\pr_{u}^{3} f(0),\pr_{u}^{4} f(0)),
\end{equation*}
suppressing the angular coordinates and setting $u=0$. The \emph{perturbation map} is then defined as the combination of transversal and angular perturbations 
\begin{equation}\label{eq:defpertmap}
\mfx_{0,2}:=\PP_{f,q}(\tilde{\mfx}_{0,2})
\end{equation}
where, we abuse notation to write
\begin{align*}
f:=&(f(0),\pr_{u} f(0),\pr_{u}^{2} f(0),\pr_{u}^{3} f(0),\pr_{u}^{4} f(0)),\\
q:=&(q_{1},q_{2}).
\end{align*}
We provide explicit formulas and estimates in Appendix \ref{sec:spherepert}.  The norms used for the perturbation functions are defined as follows.
\begin{definition}[Norms for perturbation functions] \label{def:pertfuncnorm} On a sphere $S_2$, define
\begin{enumerate}
\item for a transversal perturbation function $f$ with $k\leq4$,
\begin{align}
\begin{aligned}
\Vert f\Vert_{\YY_{f^{(k)}}}:=&\sum_{n=0}^k \Vert \partial_u^n f(0)\Vert_{H^{10-2n}(S_{2})}
\end{aligned}\label{eq:pertnormf}
\end{align}
with respect to the round unit metric and $
\YY_{f^{(k)}}:=\{f : \Vert f \Vert_{\YY_{f^{(k)}}}<\infty\}.
$
\item for an angular perturbation function $q$,
\begin{equation}\label{eq:pertnormq}
\Vert q\Vert_{\YY_{q}}:=\Vert q_{1}\Vert_{H^{10}(S_{2})}+\Vert q_{2}\Vert_{H^{10}(S_{2})}
\end{equation}
with respect to the round unit metric and 
$\YY_{q}:=\{q : \Vert q \Vert_{\YY_{q}}<\infty\}.$
\end{enumerate}
\end{definition}
\ni The following lemma pertains to the smoothness of the perturbation map. Its proof follows from the use of the higher regularity norm from Definition \ref{def:nullnormhighreg} and the formulas of Appendix \ref{sec:transversalspherepert}.
\begin{lemma}[Smoothness of $\PP_{f,q}$] \label{lemma:pertestimate}
Let $\de>0$ be a real number, the map
\begin{align*} 
\PP_{f,q}: \, \XX^+(\tilde{\HHb}_{[-\de,\de],2}) \times \YY_{f^{(4)}} \times \YY_q &\to \XX(S_{0,2}),
\end{align*}
is well-defined and smooth in an open neighbourhood of $(\tilde{\underline{\mfx}},f,q)=(\underline{\mathfrak{m}},0,0)$. Furthermore, we have the bound
\begin{align} 
\Vert \PP_{f,q}(\tilde{\underline{\mfx}}) - \tilde{\underline{\mfx}}\vert_{S_{2}} \Vert_{\XX(S_{0,2})} \les \Vert f \Vert_{\YY_{f^{(4)}}} +\Vert q \Vert_{\YY_q}+ \Vert \tilde{\underline{\mfx}}-\underline{\mathfrak{m}} \Vert_{\XX^+(\tilde{\HHb}_{[-\de,\de],2})},
\label{eq:pertbound}
\end{align}
\end{lemma}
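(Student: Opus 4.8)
The plan is to reduce the claim to explicit formulas: since the perturbation map $\PP_{f,q}$ is built from the transversal perturbation along $\tilde{\HHb}_{[-\de,\de],2}$ composed with the angular diffeomorphism $\Phi_t(q)$, and since each component of $\PP_{f,q}(\tilde{\underline{\mfx}})$ is given by an explicit algebraic-differential expression in $f,\partial_u f,\dots,\partial_u^4 f$, in $q_1,q_2$ and their angular derivatives, and in the components of $\tilde{\underline{\mfx}}$ evaluated on $\tilde S_{0,2}$ together with finitely many $\Db$-derivatives (which is why the higher-regularity norm $\XX^+$ appears), smoothness and the estimate \eqref{eq:pertbound} follow from standard product and composition estimates on the round sphere. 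First I would record, from Appendix \ref{sec:transversalspherepert} and Appendix A of \cite{ACR2}, that the transversal perturbation formulas express the $C^3$-sphere data $\mfx_{0,2}$ on $S_{0,2}$ as a finite sum of terms, each a product of (i) a polynomial in $f,\partial_u f,\dots,\partial_u^4 f$ and their $\Nabs_{\gac}$-derivatives, (ii) angular-covariant derivatives of components of $\tilde{\underline{\mfx}}$, and (iii) $\Db$-derivatives of components of $\tilde{\underline{\mfx}}$ up to the orders dictated by the regularity hierarchy; the number of $\Db$-derivatives needed for the worst component (namely $\Dbh\ab$ and $\Db^2\omb$, which lose the most) is exactly accounted for by the definition of $\XX^+(\tilde{\HHb}_{[-\de,\de],2})$ in Definition \ref{def:nullnormhighreg}.

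Next I would verify the codomain is correct: one checks term by term that the output lands in $\XX(S_{0,2})$. Each $\Db^j$-derivative of a component $T$ of $\tilde{\underline{\mfx}}$ restricted to the sphere $\tilde S_{0,2}$ is controlled, by the trace/restriction inequality along the $u$-direction, by $\Vert T\Vert_{H^m_j(\tilde{\HHb}_{[-\de,\de],2})}$ for the relevant $m$; these are precisely the quantities collected in $\Vert\tilde{\underline{\mfx}}-\underline{\mathfrak{m}}\Vert_{\XX^+}$. The factors built from $f$ lie in $\YY_{f^{(4)}}$ by \eqref{eq:pertnormf}, and those built from $q$ lie in $\YY_q$ by \eqref{eq:pertnormq}; the slight numerology ($H^{10-2n}$ for $\partial_u^n f$, $H^{10}$ for $q$) is chosen so that after applying the Sobolev product estimate $\Vert gh\Vert_{H^m}\lesssim\Vert g\Vert_{H^m}\Vert h\Vert_{H^m}$ for $m\geq 2$ on the $2$-sphere, and the composition estimate for pullback under $\Phi_t(q)$ (which costs one derivative of $q$ and is smooth in $q$ near $q=0$ since the exponential-type flow of $-(v-u)^2\DDd_1^*(q_1,q_2)$ depends smoothly on $q$ in high enough regularity), every output component retains the Sobolev index demanded by $\XX(S_{0,2})$. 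Smoothness of the map then follows because each output component is a \emph{finite} sum of multilinear expressions in the inputs composed with the smooth diffeomorphism pullback, and finite sums and products of smooth multilinear maps between the relevant Banach spaces are smooth.

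For the quantitative bound \eqref{eq:pertbound}, I would write $\PP_{f,q}(\tilde{\underline{\mfx}})-\tilde{\underline{\mfx}}\vert_{S_2}$ as a sum of terms each of which contains at least one factor from the list $\{f,\partial_u f,\dots,\partial_u^4 f,q_1,q_2\}$ or at least one factor of the form $T-\mathfrak{m}$-component (this uses that $\PP_{0,0}=\mathrm{id}$ and that $\PP_{f,q}(\underline{\mathfrak{m}})-\underline{\mathfrak{m}}\vert_{S_2}$ itself vanishes when $f=q=0$, together with the fact that $\underline{\mathfrak{m}}$ is a fixed point whose nonlinear contributions to the perturbed data are themselves linear-or-higher in $f,q$); applying the product estimate to each such term and absorbing the remaining (bounded, near-Minkowski) factors gives the desired control by $\Vert f\Vert_{\YY_{f^{(4)}}}+\Vert q\Vert_{\YY_q}+\Vert\tilde{\underline{\mfx}}-\underline{\mathfrak{m}}\Vert_{\XX^+}$. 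The main obstacle I anticipate is purely bookkeeping rather than conceptual: one must check that the regularity \emph{losses} in the transversal perturbation formulas — in particular the number of $\Db$-derivatives of the low-regularity components $\ab,\omb,\Db\omb,\Dbh\ab,\Db^2\omb$ that appear when expressing the perturbed $\Dbh\ab$ and $\Db^2\omb$ — do not exceed what $\XX^+(\tilde{\HHb}_{[-\de,\de],2})$ provides, and symmetrically that no output component is asked to live in a higher Sobolev space than the inputs can supply; this is exactly why the indices in Definition \ref{def:nullnormhighreg} are tuned the way they are, and the verification is a finite, if tedious, case check against the formulas of Appendix \ref{sec:transversalspherepert}.
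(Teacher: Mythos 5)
Your proposal is correct and follows essentially the same route as the paper, which itself only remarks that the lemma "follows from the use of the higher regularity norm from Definition \ref{def:nullnormhighreg} and the formulas of Appendix \ref{sec:transversalspherepert}"; your write-up simply makes explicit the term-by-term bookkeeping (trace in $u$, product estimates, counting of $\Db$-losses) that the paper leaves implicit. No gaps.
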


\subsection{Linearised null structure equations and sphere perturbations}\label{sec:linearNSE}

In this section we derive the linearised formulas of null structure equations about the reference Minkowski sphere data $\mathfrak{m}$. We follow the linearisation procedure from \cite{DHR}. Formally, for the $C^{3}$-sphere data, we write
 \begin{align}
 \begin{aligned}\label{eq:formallin}
 \mfx_{u,v} =&(\Om,\gd, \Om\trchi, \chih, \Om\trchib, \chibh, \eta, \om, D\om, D^{2}\om, \omb, \Db\omb, \Db^{2}\omb, \a, \widehat{D}\a, \ab, \widehat{\Db}\ab) \\
 =& \left(1,v^{2}\gac,\frac{2}{v},0,-\frac{2}{v}, 0,0,0,0,0,0,0,0,0,0,0,0\right)\\
 &+ \varep \cdot \left(\dot\Om, \dot{\gd}, \omtrchid, \dot\chih, \omtrchibd, \dot\chibh, \dot\eta, \omd, D\omd, D^{2}\omd,\ombd,\Db\ombd,\Db^{2}\ombd,  D\ad,\Db \abd\right)\\
 &+ \mathcal{O}(\varep^2).
 \end{aligned}
 \end{align}
and differentiate in $\varep$ at $\varep=0$. The proof of the following lemma of the linearisation of the higher-order null structure equations, $\CC_{\Dbh\ab}$ and $\CC_{\Db^2\omb}$ follows from applying this linearisation procedure.

\begin{lemma}[Linearisation of $\CC_{\Dbh\ab}$ and $\CC_{\Db^2\omb}$ at Minkowski] \label{lem:linearisationofconstraints} Denote by $\dot{\CC}_{\Db\abd}$ and  $\dot{\CC}_{\Db^2\ombd}$ the linearisation of the null structure equations $\CC_{\Dbh\ab}$ and $\CC_{\Db^2\omb}$, respectively at Minkowski. We have the following formulas.
\begin{align} 
\begin{aligned} 
\dot{\CC}_{\Db\abd}=&vD\left(\frac{1}{v}\Db\abd\right)- \frac{1}{v^{2}}\abd + 2\DDd_{2}^{*}\left(-\frac{3}{v}\left(\frac{1}{v^{2}}\Divdo\chibhd-\frac{1}{2}\di\omtrchibd-\frac{1}{v}\etad\right)+\frac{1}{v^{2}}\Divdo\abd\right),\\
\dot{\CC}_{\Db^2\ombd}=& D(\Db^{2}\dot{\om})-\frac{12}{v^{2}}\left(\Kd+\frac{1}{2r}\dot{(\Om\tr\chib)}-\frac{1}{2r}\dot{(\Om\tr\chi)} +\frac{2}{v^{2}}\dot{\Om}\right)\\
&- \frac{8}{v^{3}}\Divdo\left(\frac{1}{v^{2}}\Divdo\dot{\chibh}- \frac{1}{2}\di\dot{(\Om\tr\chib)}-\frac{1}{v}\dot{\et}\right)+\frac{1}{v^{4}}\Divdo\Divdo\dot{\ab}.
\end{aligned}\label{eq:newlinearisedconstraints}
\end{align}
\end{lemma}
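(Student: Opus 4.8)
The plan is to obtain \eqref{eq:newlinearisedconstraints} by the brute-force linearisation procedure described in Section \ref{sec:linearNSE}: substitute the expansion \eqref{eq:formallin} into the fully nonlinear constraint equations $\CC_{\Dbh\ab}$ and $\CC_{\Db^2\omb}$, differentiate in $\varep$ and evaluate at $\varep=0$. The crucial simplification is that at the reference Minkowski data $\mfm$ every quantity in the list $\{\chih,\chibh,\eta,\omega,\omb,\beta,\beb,\alpha,\aa,\rho,\sigma,\dots\}$ vanishes, while $\Om=1$, $\gd=v^{2}\gac$, $\Om\tr\chi=2/v$ and $\Om\tr\chib=-2/v$. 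Hence any term in the nonlinear equations that is quadratic (or higher) in the ``small'' quantities, or that contains a factor of one of these small quantities multiplying the variation of another, drops out. The only terms that survive the linearisation are: (i) the principal transport term, linearised about the background $D$ and the background $\Om\tr\chi$; (ii) terms linear in the small quantities with background coefficients built from $\Om=1$, $\tr\chi=2/v$, $\tr\chib=-2/v$; and (iii) the variation of $K$ and of the Gauss--Codazzi combinations, recalling that $\Kd$ and the background value $K=1/v^{2}$ enter.

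First I would record the needed background-linearised building blocks, most of which are already in Appendix \ref{sec:linsec}: the linearisation of $\widehat{\Db}\ab$ gives $\widehat{\Db}\abd$ at leading order but one must be careful that the background connection coefficients $\Gammad$, $\Db\Gammad$ contribute; on Minkowski $\Db\Gammad$ has a nonzero background value coming from $\Om\tr\chib=-2/v$, which is what produces the $-v^{-2}\abd$ term and the normalisation $vD(v^{-1}\Db\abd)$ rather than simply $D\Db\abd$. Likewise the linearised Gauss--Codazzi identity \eqref{eq:gausscodazzi}, already used in \eqref{eq:introNSE2}, supplies the combination $v^{-2}\Divdo\chibhd-\tfrac12\di\omtrchibd-v^{-1}\etad$ appearing inside $\DDd_2^{*}(\cdot)$, and $\tfrac12\chih\wedge\chibh$, $(\chih,\chibh)$ etc.\ are quadratic and vanish. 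For $\CC_{\Db^2\omb}$ the same logic applies: the terms $|\di\omb|^{2}$, $|\beb|^{2}$, $\chib(\eta,\cdot)$, $(\chih,\Db\ab)$, $(\chibh,\Db\ab)$, $\Om\tr\chib\,\Divd\beb$, etc.\ are all at least quadratic in small quantities and disappear, leaving $D(\Db^{2}\omd)$, the linearised $-12\Om^{2}(\dots)\rho$-type Gauss-curvature term $-\tfrac{12}{v^{2}}\Kd$, the linearised $+3\Om^{2}(\Om\tr\chib)^{2}\rho$ and $-9\omb\Om\tr\chib\,\rho$ giving (via the background $\Om\tr\chib=-2/v$ and $\rho=\Kd+\dots$ from the Gauss equation) the combination of $\Kd$, $\omtrchibd$, $\omtrchid$, $\dot\Om$ displayed, the term $-\tfrac{8}{v^{3}}\Divdo(\cdot)$ from $\Om^{3}\Divd(\Om\Divd\ab)$ together with $\Om^{3}\Om\tr\chib\,\Divd\beb$ re-expressed via Codazzi, and finally $\tfrac{1}{v^{4}}\Divdo\Divdo\abd$ from $\Om^{3}\Divd(\Om\Divd\ab)$.

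The main obstacle is purely bookkeeping: the nonlinear $\CC_{\Dbh\ab}$ and $\CC_{\Db^2\omb}$ each have on the order of twenty terms, many of them products of three or four factors, and one must (a) correctly identify which products contain $\ge 2$ small factors and discard them, (b) track the nonzero background values $\Om=1$, $\tr\chi=2/v$, $\tr\chib=-2/v$, $K=1/v^{2}$, $\Om\chih$-type coefficients, and in particular the background of $\Db\Gammad$ and of $\widehat{\LIE}_{\Om^{2}\ze}$, and (c) repeatedly use the linearised Gauss equation $\dot\rho = -\Kd - \tfrac14(\dots)$ and the linearised Gauss--Codazzi equation to repackage $\Divdo\chibhd$, $\di\omtrchibd$, $\etad$ into the single operator-combination that appears under $\DDd_2^{*}$ and under the outer $\Divdo$. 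Concretely, one checks that the $\Nabs\widehat\otimes(\tfrac32\Om\tr\chib\,\beb+\Om\Divd\aa)$ term linearises to $\tfrac{2}{v}\DDd_2^{*}$ acting on the Codazzi-combination for $\chibh$ (using that $\beb$ at the background is determined by Codazzi from $\chibh$, $\tr\chib$, $\eta$), and that the factor $2$ and the $-3/v$ weight come out exactly as stated; similarly the coefficient $-12$, $-8$, and the powers $v^{-2},v^{-3},v^{-4}$ in $\dot\CC_{\Db^{2}\ombd}$ are fixed by the background powers of $\Om\tr\chib=-2/v$ and $K=1/v^{2}$. Once all quadratic terms are struck and the background constants inserted, the two formulas in \eqref{eq:newlinearisedconstraints} follow by collecting terms; I would present the computation term-by-term in a table or enumerated list rather than in prose, and invoke the already-established linearised identities of Appendix \ref{sec:linsec} wherever possible to keep it short.
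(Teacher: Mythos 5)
Your proposal is correct and is exactly the paper's approach: the paper offers no proof beyond the remark that the lemma ``follows from applying this linearisation procedure'', i.e.\ substituting the expansion \eqref{eq:formallin} into $\CC_{\Dbh\ab}$ and $\CC_{\Db^2\omb}$, discarding every term at least quadratic in the perturbed quantities, inserting the background values $\Om=1$, $\Om\tr\chi=2/v$, $\Om\tr\chib=-2/v$, $K=1/v^2$, and repackaging via the linearised Gauss and Gauss--Codazzi identities. One bookkeeping correction: $\Db\Gammad$ actually vanishes on the Minkowski background (since $\Om\chib=-\tfrac{1}{v}\gd$ and $\Nabs\gd=0$), so the normalisation $vD(v^{-1}\Db\abd)$ comes from the term $-\tfrac12\Om\tr\chi\,\Dbh\ab$ and the $-v^{-2}\abd$ term from $\tfrac14(\Om\tr\chi)(\Om\tr\chib)\ab$; this misattribution does not affect your method, since the term-by-term computation you describe would locate the correct sources in any case.
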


Using the same linearisation procedure, we linearise the preceding $10$ null structure equations from Section \ref{sec:charseed}. Analogously to \eqref{eq:newlinearisedconstraints}, we define their linearisations by
 $$\dot{\CC}_{\dot{\varphi}} \mbox{ for } \varphi\in\{\dot{\phi},\dot{(\Om\tr\chi)},\chihd,\etad,\dot{(\Om\tr\chib)},\allowbreak\chibhd, \ombd,\Db\ombd,\ad, \abd\}.$$
An explicit presentation of their formulas is given in Appendix \ref{sec:linCC}, see also Section 2.7 of \cite{ACR2}.

Similarly, the linearisation of $\PP_{f,q}(\tilde{\underline \mfx})$ about $f=0$ at Minkowski and $q=0$ at Minkowski is computed in the following lemma.

\begin{lemma}[Linearisation of $\PP_{f,q}$ about $f=0$ and Minkowksi] \label{lem:pertlin} 

Let $\dot{f}$ denote the linearisation of $f$ at $f=0$ so that
\begin{align*} 
\dot{f} = \, (\dot{f}(0), \partial_u \dot{f}(0), \partial_u^2 \dot{f}(0), \partial_u^3 \dot{f}(0), \partial_{u}^{4}\dot{f}(0)).
\end{align*}
and let $\dot q$ be the linearisation of $q$ about $(q_{1},q_{2})=(0,0)$ at Minkowski. Evaluating at $u=0$, the non-trivial components of $\dot{\PP}_{f,q}(\dot{f},\dot{q})$ are
\begin{align*}
 \begin{aligned} 
\Omd =& \frac{1}{2} \partial_{u} (\dot{f}), & \phid=&- \dot{f}+\frac{v}{2} \Ldo \dot{q}_1, & \etad=&  r \di \left(\partial_u \left(\frac{f}{v}\right)\right),\\
\chihd =& - 2 \DDd_2^* \di \dot{f}, & \omtrchibd =& -2\partial_u \left(\frac{f}{v}\right), & \omtrchid=& \frac{2}{v^2} (\Ldo+1) \dot{f},\\
\ombd =&\partial_u(\frac{1}{2} \partial_{u}\dot{f} ),& \Db\ombd =& \partial_u^2 (\frac{1}{2} \partial_{u} (\dot{f}), & \Db^{2}\ombd = & \partial_u^{3} (\frac{1}{2} \partial_{u} (\dot{f})).\\
 \gdcd = &2 \DDd_2^* \DDd_1^* (\dot{q}_1,\dot{q}_2).&&&&
\end{aligned}
\end{align*}
\end{lemma}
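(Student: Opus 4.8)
\textbf{Proof proposal for Lemma \ref{lem:pertlin}.}

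The plan is to take the nonlinear transversal and angular perturbation formulas — those for the $C^2$-components from Appendix A of \cite{ACR2} and those for the novel $C^3$-components $(D^2\om,\Db^2\omb,\Dh\a,\Dbh\ab)$ derived in Appendix \ref{sec:transversalspherepert} — and differentiate each of them in $\varep$ at $\varep=0$ along the one-parameter family $(f,q)=\varep(\dot f,\dot q)$, $\tilde{\underline{\mfx}}=\mfm+\varep\,\tilde{\underline{\xd}}+\OO(\varep^2)$, using the linearisation convention \eqref{eq:formallin}. Since the perturbation map $\PP_{f,q}$ is smooth by Lemma \ref{lemma:pertestimate}, this derivative is well-defined, and the key structural simplification is that in the linearisation about $(f,q)=(0,0)$ at Minkowski the $\tilde{\underline{\xd}}$-contributions decouple from the $(\dot f,\dot q)$-contributions: what is recorded in the statement is precisely the $(\dot f,\dot q)$-dependent part, i.e. $\dot\PP_{f,q}(\dot f,\dot q)$ evaluated on the trivial seed $\tilde{\underline{\xd}}=0$. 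Concretely, I would organise the computation by first treating the transversal perturbation alone (setting $\dot q=0$), then the angular perturbation alone (setting $\dot f=0$), and finally adding the two, since the linearised map is linear in $(\dot f,\dot q)$.

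For the transversal part, recall that under $\tilde u=u+f(u,\th)$, $\tilde v=v$, $\tilde\th^A=\th^A$, the new double null frame and lapse are obtained from the old ones by the standard formulas (see Appendix \ref{sec:transversalspherepert}): $\Om$ picks up a factor involving $\partial_u f$, the conformal metric is unchanged to leading order, $\gd$ changes by a Lie-derivative term that vanishes at $f=0$, and the Ricci coefficients and curvature components transform by expressions that are polynomial in $f,\partial_u f,\dots,\partial_u^4 f$ and the background quantities. Linearising at Minkowski, where the background is $\mfm_{u,v}$ with $r=v-u$, only the terms linear in $f$ and its $u$-derivatives survive, and every background coefficient is an explicit power of $v$ (with $u=0$). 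This reproduces $\Omd=\tfrac12\partial_u\dot f$, $\phid=-\dot f$ (the $\tfrac v2\Ldo\dot q_1$ piece comes from the angular part), $\omtrchid=\tfrac{2}{v^2}(\Ldo+1)\dot f$, $\chihd=-2\DDd_2^*\di\dot f$, $\omtrchibd=-2\partial_u(f/v)$, $\etad=r\,\di(\partial_u(f/v))$, and the tower $\ombd=\partial_u(\tfrac12\partial_u\dot f)$, $\Db\ombd=\partial_u^2(\tfrac12\partial_u\dot f)$, $\Db^2\ombd=\partial_u^3(\tfrac12\partial_u\dot f)$, which is forced by the fact that each $\Db$-derivative of $\omb$ along $\HHb_2$ is, at the linearised Minkowski level, simply a $u$-derivative of the previous one (so $\omb$ and its ingoing derivatives are all determined by successive $\partial_u$ applied to the expression $\tfrac12\partial_u\dot f$ obtained for $\Omd$). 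For the angular part, pulling back $\tilde{\underline\mfx}$ by the flow of $-(v-u)^2\DDd_1^*(q_1,q_2)$ and linearising: on the trivial background a pull-back by a near-identity diffeomorphism generated by a vector field $X$ changes a tensor $T$ by $-\Lie_X T$, which vanishes unless $T$ is nonzero on $\mfm$; the only metric component nonzero on $\mfm$ is $\gd$, giving $\gdcd=-\Lie_{X}(\gac)=2\DDd_2^*\DDd_1^*(\dot q_1,\dot q_2)$ after identifying $\DDd_2^*\DDd_1^*$ as (minus one half) the conformal Lie-derivative operator, and it also produces the $\tfrac v2\Ldo\dot q_1$ correction to $\phid$ coming from the trace/volume part (or, equivalently, from how the conformal factor $\phi$ records the angular reparametrisation). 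All remaining components receive no angular contribution at the linear level because they vanish on $\mfm$.

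The main obstacle is bookkeeping rather than conceptual: one must keep careful track of which background coefficients appear, of the distinction between derivatives along $\tilde u$ and along $u$ (they differ by $O(f)$ terms that do contribute at first order when they hit the one nontrivial background quantity), and of the normalisations of the Hodge operators $\DDd_1^*,\DDd_2^*$ and of $\Ldo$ so that the stated formulas come out with exactly the right numerical factors; in particular the appearance of $(\Ldo+1)$ in $\omtrchid$ and the exact coefficient $-\tfrac32$ versus the structure of the higher-order seed terms must be checked against the nonlinear formulas of Appendix \ref{sec:transversalspherepert}. I would verify a handful of the identities as consistency checks — e.g. that the linearised constraint $\dot\CC_{\dot\phi}=0$ of Appendix \ref{sec:linCC} is satisfied by the triple $(\Omd,\phid,\omd)$ above, and that $\dot\CC_{\Db^2\ombd}$ from Lemma \ref{lem:linearisationofconstraints} annihilates the stated $(\omtrchid,\omtrchibd,\Omd,\chibhd,\etad,\abd)$ when these are generated by a pure $\dot f$ — which both confirms the algebra and is exactly the mechanism by which the gauge-dependent charges will later be shown to absorb $\dot f$.
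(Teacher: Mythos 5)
Your proposal is correct and follows essentially the same route as the paper: the paper obtains Lemma \ref{lem:pertlin} by linearising the nonlinear gauge-transformation formulas of Appendix \ref{sec:transversalspherepert} (and Appendix A of \cite{ACR2} for the $C^2$-components) at $(f,q)=(0,0)$ and Minkowski, with the higher-order tower $\ombd,\Db\ombd,\Db^2\ombd$ arising exactly as you describe, by successive $\partial_u$-derivatives of $\tfrac12\partial_u\dot f$, and the angular contributions confined to $\gdcd$ and $\phid$ since all other components vanish on the background. Your consistency checks against the linearised constraints are not in the paper but are a sound verification of the algebra.
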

The following lemma follows by Definition \ref{def:spheredata} of $C^3$-sphere data,.
\begin{lemma}[Bounds for linearised perturbation functions] \label{lem:estimatespfq} 
The linearised perturbation map $\dot{\PP}_{f,q}(\dot{f},\dot{q})$ satisfies the estimates
\begin{align*} 
\Vert \dot{\PP}_{f,q}(\dot f) \Vert_{\XX(S_{0,2})} \les \Vert \dot f \Vert_{\YY_{f^{(4)}}}+ \Vert \dot{q} \Vert_{\YY_{q}}.
\end{align*}
\end{lemma}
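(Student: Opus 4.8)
The plan is to substitute the explicit formulas for the linearised perturbation map — those of Lemma \ref{lem:pertlin} for the $C^2$-metric components and the $\omb$-hierarchy, together with the transformation laws for the remaining entries (the $C^2$-curvature and Ricci components of \cite{ACR2} and the $C^3$-components of Appendix \ref{sec:transversalspherepert}) — into the definition of the $\XX(S_{0,2})$-norm (Definitions \ref{def:C3spherenorm} and \ref{def:C2spheredata}) and to estimate the image entry by entry. Two elementary observations make this routine: on $S_{0,2}$ one has $u=0$, $v=2$ and $r=v-u=2$, so every $v$- and $r$-weight occurring in the perturbation formulas is a fixed constant; and the angular operators $\di$, $\DDd_1^*$, $\DDd_2^*$ appearing there are smooth first-order differential operators on the round sphere $(S_{0,2},\gac)$ (with $\Ldo$ of second order), hence bounded $H^m(S_{0,2})\to H^{m-1}(S_{0,2})$, respectively $H^m\to H^{m-2}$, and $H^m(S_{0,2})$ with respect to $\gac$ is exactly the scale in which the $\XX$-norm and the norms \eqref{eq:pertnormf}, \eqref{eq:pertnormq} are measured.

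With this in place the estimate is pure bookkeeping. Each entry of $\dot{\PP}_{f,q}(\dot f,\dot q)$ is a fixed number of $\partial_u$-derivatives, evaluated at $u=0$, composed with a fixed number of angular operators, applied to $\dot f$ or to $(\dot q_1,\dot q_2)$, and one checks in each case that the number of angular derivatives plus the Sobolev index required by the $\XX$-norm does not exceed the index supplied by the weights $10-2n$ in \eqref{eq:pertnormf} (for $\dot f$) or by $\dot q_1,\dot q_2\in H^{10}$ in \eqref{eq:pertnormq}. For instance $\Omd=\frac12\partial_u\dot f$ is asked for in $H^8$ and $\partial_u\dot f(0)\in H^8$; $\chihd$ and $\omtrchid$ are two angular derivatives of $\dot f(0)\in H^{10}$, hence in $H^8$; $\etad$ is one angular derivative of $\partial_u\dot f(0)\in H^8$, hence in $H^7$; $\phid$ contains $\dot f(0)\in H^{10}$ and $\Ldo\dot q_1\in H^8$, while $\gdcd=2\DDd_2^*\DDd_1^*(\dot q_1,\dot q_2)\in H^8$, so that (through $\gd=\phi^2\gd_c$) the linearised metric component lands in $H^8$; and $\ombd$, $\Db\ombd$, $\Db^2\ombd$ equal $\frac12\partial_u^2\dot f$, $\frac12\partial_u^3\dot f$, $\frac12\partial_u^4\dot f$ at $u=0$, lying respectively in $H^6$, $H^4$, $H^2$ — exactly the slots of the $\XX$-norm for $\omb$, $\Db\omb$, $\Db^2\omb$. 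The remaining entries, including $D^2\omd$ and the $C^3$-curvature components $\Dh\ad$ and $\Dbh\abd$, are handled identically: their transformation laws involve at most four transversal derivatives of $\dot f$ (or the corresponding number of angular derivatives) and at most two angular derivatives of $\dot q$, and are asked for in $H^8$ (for $D^2\om$, $\Dh\a$) or $H^2$ (for $\Db^2\omb$, $\Dbh\ab$), so again they land in the required spaces. Summing the finitely many estimates and applying the triangle inequality yields $\Vert\dot{\PP}_{f,q}(\dot f,\dot q)\Vert_{\XX(S_{0,2})}\les\Vert\dot f\Vert_{\YY_{f^{(4)}}}+\Vert\dot q\Vert_{\YY_q}$.

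There is no genuine difficulty, and this is the reason the lemma is stated as following ``by Definition \ref{def:spheredata}'': several of the estimates above are saturated — $\Db^2\ombd$ and $\Dbh\abd$ land in $H^2$ with no room to spare — so the whole content of the lemma is the bookkeeping check that truncating the sphere data at third order (Definition \ref{def:spheredata}) limits every transformation law to at most four transversal derivatives of $f$ and at most two angular derivatives of $q$, in agreement with the weights built into \eqref{eq:pertnormf} and \eqref{eq:pertnormq}. The step I would be most careful about is therefore tracing the top-derivative terms through the formulas of Appendix \ref{sec:transversalspherepert} for the $C^3$-data, though even this is only a verification.
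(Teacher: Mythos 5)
Your proposal is correct and is exactly the argument the paper intends: the paper offers no written proof beyond asserting that the lemma ``follows by Definition \ref{def:spheredata}'', and your entry-by-entry index count (with the weights $10-2n$ in $\YY_{f^{(4)}}$ matched against the Sobolev slots of the $\XX$-norm, saturating at $H^2$ for $\Db^2\ombd$ and $\Dbh\abd$) is the bookkeeping that assertion relies on. No gaps.
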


\subsection{$C^3$-matching data}\label{sec:C3matching data}

To solve the $C^3$-null gluing problem up to the $20$-dimensional space of charges $(\mathbf{E},\mathbf{P},\mathbf{L},\mathbf{G},\mathbf{U},\mathbf{V})$, we define $C^3$-matching data to encapsulate all of the \emph{gluable} parts of the $C^3$-sphere data.

\begin{definition}[Matching map $\mathfrak{M}$] \label{def:matchingmap} Let $\mfx_{u,v}$ be $C^2$- and $C^3$-sphere data on the sphere $S_{u,v}$. Define the \emph{matching map} denoted by $\mathfrak{M}_{u,v}:=\mathfrak{M}(x_{u,v})$ by
\begin{align*} 
\begin{aligned} 
\mathfrak{M}(\mfx_{u,v}):= m_{u,v}\times(D^{2}\om,\Db^{2}\omb^{[\geq 2]}, \tilde{\QQ}_{\Db^{2}\ombd^{[1]}},\Dh\a,\Dbh\ab^{[\geq 3]})
\end{aligned} 
\end{align*}

where the quantity $\tilde{\QQ}_{\Db^{2}\ombd} $ is defined by
\begin{align*} 
\tilde{\QQ}_{\Db^{2}\ombd^{[1]}}  :=& \left(\Db^{2}\omb\right)^{[\leq1]} -\frac{1}{2} (\Ldo-3) \left(\frac{1}{v^{2}}\Om\trchib- \frac{2}{v^4}(\Ldo+2)\phi\right)^{[\leq1]} \\
&+ \frac{1}{2v^{2}} \left(\Ldo\Ldo + 2\Ldo -3\right) \left(\Om\trchi-\frac{4}{v}\Om\right)^{[\leq1]} -\frac{1}{v^3} \Divdo \eta^{[1]},
\end{align*}
and $m_{u,v}$ denotes the $C^2$-matching map
\begin{align}
\begin{aligned}
m_{u,v}:=&\left(\Om, \phi, \gd_c, \Om\trchi, \chih, (\Om\trchib)^{[\geq2]}, \chibh, \eta^{[\geq2]}, \om, D\om, \omb^{[\geq 2]}, \Db\omb^{[\geq 2]},\right.\\
&\left. \tilde{\QQ}_{\omb^{[\leq1]}}, \tilde{\QQ}_{\Db\omb^{[\leq1]}}, \a, \ab, \right),
\end{aligned}\label{eq:C2matching}
\end{align}
with $\tilde{\QQ}_{\omb^{[\leq1]}}$ and $\tilde{\QQ}_{\Db\omb^{[\leq1]}}$ given by
\begin{align}
\begin{aligned}
\tilde{\QQ}_{\omb^{[\leq1]}}:=& \omb^{[\leq1]} + \frac{1}{4}\left(\Om\trchib\right)^{[\leq1]}- \frac{1}{6r} \Divdo \eta^{[1]} \\
&- \frac{1}{12v^3} (\Ldo+3) \left(\Om\trchi-\frac{4}{v}\Om\right)^{[\leq1]} - \frac{1}{2v^2}(\Ldo+2)\phi^{[\leq1]}, \\
\tilde{\QQ}_ {\Db\omb^{[\leq1]}}:=& \left(\Db\omb\right)^{[\leq1]} -\frac{1}{6} (\Ldo-3) \left(\frac{1}{v}\Om\trchib- \frac{2}{v^3}(\Ldo+2)\phi\right)^{[\leq1]} \\
&+ \frac{1}{6r} \left(\Ldo\Ldo + \Ldo -3\right)\left(\Om\trchi-\frac{4}{v}\Om\right)^{[\leq1]} -\frac{2}{3v^2} \Divdo \eta^{[1]}.
\end{aligned}\label{eq:matchingQ}
\end{align}
The operators $\Divdo$ and $\Ldo$ are the divergence and Laplace-Beltrami operator with respect to $\gac$ on $S_{u,v}$.
\end{definition}
The proof that the $C^3$-matching data contains the $C^3$-sphere data orthogonal to the charges $(\mathbf{E},\mathbf{P},\mathbf{L},\mathbf{G},\mathbf{U},\mathbf{V})$ can be found in Section \ref{sec:matchingdata}. The following lemma shows that the matching map is a linear Lipschitz map.
\begin{lemma}[Smoothness of $\MMf$] Let $v\geq u$ be two real numbers. The matching map $\MMf$ is a smooth, well-defined Lipschitz map from an open neighbourhood of sphere data $\mfx_{u,v}\in\XX(S_{u,v})$ to $\mathcal{Z}_{\mathfrak{M}}(S_{u,v})$ defined by
\begin{align*} 
\begin{aligned} 
\mathcal{Z}_{\mathfrak{M}}(S_{u,v}) =& H^8 \times H^8 \times H^8 \times H^8 \times H^8 \times H^6 \times H^6 \times H^7 \times H^8 \times H^8 \times H^8\\
& \times H^6 \times H^4 \times H^2 \times H^6 \times H^4 \times H^{2} \times H^{8} \times H^{8} \times H^{4} \times H^{2}.
\end{aligned}
\end{align*}
Furthermore, the bound
\begin{align*} 
\begin{aligned} 
\Vert \mathfrak{M}(\mfx_{u,v}) - \mathfrak{M}(\mathfrak{m}) \Vert_{\mathcal{Z}_{\mathfrak{M}}(S_{u,v})} \leq C_{u,v} \Vert \mfx_{u,v} -\mathfrak{m}\Vert_{\XX(S_{u,v})},
\end{aligned} 
\end{align*}
holds where $C_{u,v}>0$ is a constant depending on $u$ and $v$.
\end{lemma}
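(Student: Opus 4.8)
The plan is to prove this smoothness-and-Lipschitz statement for the matching map $\MMf$ by reducing it to the analogous—already essentially established—statement for the $C^2$-matching map $m_{u,v}$ of \cite{ACR2}, together with elementary estimates for the genuinely new $C^3$-components. First I would observe that $\MMf(\mfx_{u,v})$ is built from two kinds of ingredients: (i) components which are literally components of the sphere data $\mfx_{u,v}$, possibly after projecting onto spherical harmonic modes $[\geq \ell]$ (e.g. $\Om$, $\gd_c$, $\Om\trchi$, $\chih$, $(\Om\trchib)^{[\geq2]}$, $\chibh$, $\eta^{[\geq2]}$, $\om$, $D\om$, $\omb^{[\geq2]}$, $\Db\omb^{[\geq2]}$, $\a$, $\ab$, $D^2\om$, $\Db^2\omb^{[\geq2]}$, $\Dh\a$, $\Dbh\ab^{[\geq3]}$), and (ii) the finitely many low-mode "charge-corrected" quantities $\tilde\QQ_{\omb^{[\leq1]}}$, $\tilde\QQ_{\Db\omb^{[\leq1]}}$, $\tilde\QQ_{\Db^2\ombd^{[1]}}$, each of which is a \emph{finite linear combination}, with coefficients that are smooth functions of $(u,v)$ (polynomials in $v$ and $r=v-u$ and their reciprocals, which are smooth and bounded on the fixed interval $1\le v \le 2$), of low-mode projections of sphere-data components and of $\Divdo$ applied to such components. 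Since the right-hand side $\MMf(\mfx_{u,v})$ depends \emph{linearly} on $\mfx_{u,v}$ in \emph{every} slot, the map $\MMf$ is in fact an affine (indeed linear, after subtracting the constant $\MMf(\mfm)$) map between Banach spaces, so smoothness is immediate once boundedness is shown; the Lipschitz bound then follows with the same constant as the operator norm.

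The key steps, in order, are as follows. Step 1: spell out that $\MMf(\mfx_{u,v}) - \MMf(\mfm)$ is linear in $\mfx_{u,v} - \mfm$; this uses that $\mfm_{u,v}$ is a fixed reference and that the harmonic projections $[\geq\ell]$, $[\leq\ell]$, $[\ell]$, $\Ldo$ and $\Divdo$ are all linear operators on tensors on $S_{u,v}$. Step 2: verify that each slot of $\MMf$ lands in the claimed space with the right regularity index by tracking the Sobolev indices through the definitions: the "plain" slots inherit the indices from Definition \ref{def:C3spherenorm} (e.g. $\chih \in H^8$, $\chibh\in H^6$, $\eta\in H^7$, $\omb\in H^6$, $\Db\omb\in H^4$, $\Db^2\omb\in H^2$, $\ab\in H^8$—wait, $\ab$ sits at $H^8$ in $X^8$ but at lower index in $X^k$; here I use the fixed $k=8$ norms of Definition \ref{def:C3spherenorm}—$\Dh\a\in H^8$, $\Dbh\ab\in H^2$), and harmonic projection onto $[\geq\ell]$ is bounded on every $H^m(S_{u,v})$ with norm $1$ in the $\gac$-normalised inner product (so actually with a $(u,v)$-dependent constant once one rescales to $\gac$, which is where $C_{u,v}$ enters); for the $\tilde\QQ$ slots, the worst term is $\Divdo$ acting once on a component like $\eta^{[1]}$ or $\Divdo$ twice on $\chibh$ or on $\ab$, each of which costs at most two derivatives, and crucially these are applied only to \emph{low} modes where all norms are equivalent up to $(u,v)$-dependent constants, so the target index (e.g. $\tilde\QQ_{\omb^{[\leq1]}}\in H^6$, $\tilde\QQ_{\Db\omb^{[\leq1]}}\in H^4$, $\tilde\QQ_{\Db^2\ombd^{[1]}}\in H^2$) is safely achieved. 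Step 3: invoke the analogous $C^2$ statement from \cite{ACR2} for the $m_{u,v}$ part verbatim, so that only the four new slots $(D^2\om, \Db^2\omb^{[\geq2]}, \tilde\QQ_{\Db^2\ombd^{[1]}}, \Dh\a, \Dbh\ab^{[\geq3]})$—really five—need new bookkeeping, and these are immediate from the previous sentence. Step 4: collect the finitely many $(u,v)$-dependent constants (from rescaling $\gac$ to $\ga=(v-u)^2\gac$, from polynomial coefficients in the $\tilde\QQ$'s bounded on $1\le v\le 2$, and from the equivalence of Sobolev norms on the finite-dimensional low-mode spaces) into a single $C_{u,v}$ depending continuously on $(u,v)$ in the relevant range.

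The main obstacle—such as it is—is purely bookkeeping: one must carefully check that \emph{no} term in the definition of $\MMf$, and in particular in the three charge-corrected quantities $\tilde\QQ_{\omb^{[\leq1]}}$, $\tilde\QQ_{\Db\omb^{[\leq1]}}$ and $\tilde\QQ_{\Db^2\ombd^{[1]}}$, costs more derivatives of a given sphere-data component than are available in the $\XX(S_{u,v})$-norm of Definition \ref{def:C3spherenorm}. The genuinely delicate slot is $\tilde\QQ_{\Db^2\ombd^{[1]}}$, since $\Db^2\omb$ itself is only controlled in $H^2$ while the companion terms involve $\Divdo\eta^{[1]}$ and two powers of $\Ldo$ on $(\Om\trchi-\tfrac4v\Om)^{[\leq1]}$ and on $(\tfrac1{v^2}\Om\trchib - \tfrac{2}{v^4}(\Ldo+2)\phi)^{[\leq1]}$; but because every one of those correction terms is confined to modes $\ell\le1$, where $\Ldo$ and $\Divdo$ act as bounded finite-rank operators, no loss of regularity actually occurs and the sum lands in $H^2$ as claimed. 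Once this observation is in place, linearity gives smoothness for free and the Lipschitz estimate is just the operator-norm bound, completing the proof. I would therefore write the proof as: "By linearity of all the operations involved, $\MMf - \MMf(\mfm)$ is a bounded linear map; the only non-trivial point is the regularity bookkeeping for the low-mode charge-corrections, which we now check slot by slot," followed by the short table of index counts and the extraction of $C_{u,v}$.
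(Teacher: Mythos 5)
The paper states this lemma without giving a proof, so there is nothing of the author's to compare your argument against; I can only assess your reconstruction on its own terms. Your overall strategy --- quote the $C^2$ statement from \cite{ACR2} for the $m_{u,v}$ part, do index bookkeeping for the new slots, and observe that the projections onto modes $l\leq 1$, $l=1$, $l=2$ render $\Ldo$, $\Divdo$ and $\di$ harmless because on those finite-dimensional spaces they act as bounded (indeed smoothing) finite-rank operators --- is the right one, and your identification of $\tilde{\QQ}_{\Db^{2}\ombd^{[1]}}$ as the only delicate slot, resolved by exactly this low-mode observation, is correct.

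There is, however, one incorrect assertion on which you hang the entire argument: $\MMf$ is \emph{not} linear (nor affine) in the sphere data. The sphere datum is $\gd$, whereas the matching map outputs $\phi$ and $\gd_c$, which are obtained from the conformal decomposition $\gd=\phi^2\gd_c$ normalised by $\det\gd_c=\det\gac$; explicitly $\phi=(\det\gd/\det\gac)^{1/4}$ and $\gd_c=\phi^{-2}\gd$, both nonlinear in $\gd$. The same nonlinearity enters every $\tilde{\QQ}$ slot through the $(\Ldo+2)\phi^{[\leq1]}$ terms. This is precisely why the lemma is stated on an open \emph{neighbourhood} of $\mfm$ in $\XX(S_{u,v})$ rather than globally: one needs $\det\gd$ bounded away from zero. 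So ``linearity gives smoothness for free'' fails for these slots, and the Lipschitz constant is not simply an operator norm. The repair is routine but must be stated: near $\mfm$ the map $\gd\mapsto(\phi,\gd_c)$ is smooth and locally Lipschitz on $H^8(S_{u,v})$ by the Sobolev embedding $H^8(S_{u,v})\hookrightarrow C^0$, the algebra property of $H^8$ on a $2$-sphere, and smoothness of $t\mapsto t^{1/4}$ away from $0$; the remaining slots are genuinely linear and your argument applies to them verbatim. (A minor side point: in $X^8(S_{u,v})$ one controls $\ab$ only in $H^{8-4}=H^4$, not $H^8$; your parenthetical on this index is garbled, although this does not affect the conclusion.)
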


\section{Precise statement of the main theorem}\label{sec:mainthmprecise}

\ni We are now in a position to formulate the precise statement of our main theorem.
\begin{theorem}[Codimension-$20$ perturbative $C^3$-null gluing] \label{thm:main}
Consider $C^3$-sphere data $\mfx_{1}$ on $S_{1}$ and $\tilde{\mfx}_{2}$ on $\tilde{S}_{2}$ where $\tilde{S}_{2}$ is contained within ingoing null data $\tilde{\underline{\mfx}}$ on $\HHb_{[-\delta,\delta],2}$ for real number $\delta>0$. Assume that for some real number $\varep>0$,
\begin{equation}\label{eq:minkassump}
\Vert \mfx_{1}-\mathfrak{m}\Vert_{\XX(S_{1})}+\Vert \tilde{\underline{\mfx}}-\underline{\mathfrak{m}}\Vert_{\XX^{+}(\tilde{\HHb}_{[-\delta,\delta],2})} <\varep.
\end{equation}
Then for $\varep>0$ sufficiently small there exists
\begin{enumerate}
\item a solution $\mfx$ to the null structure equations on $\HH_{0,[1,2]}$,
\item a perturbed sphere $S_{2}$ with induced sphere data $\mfx_{2}$ arising from $\tilde{\mfx}_{2}$ on $\tilde{S}_{2}$ contained within $\HHb_{[-\delta,\delta],2}$. That is,
$$\mfx_{2}=\PP_{f,q}(\tilde{\underline{\mfx}}).$$
\end{enumerate}
Moreover, the solution $\mfx$ is such that on $S_{1}$ and $S_2$,
\begin{align}\label{eq:gluingS1}
\mfx\vert_{S_{1}}=\mfx_{1},\qquad \mathfrak{M}({\mfx\vert_{S_2}})=\mathfrak{M}({\mfx_2}),
\end{align}
respectively. That is,  if the charges $(\mathbf{E},\mathbf{P},\mathbf{L},\mathbf{G},\mathbf{U},\mathbf{V})$, see Definitions \ref{def:chargesEPLG} and \ref{def:chargeW}, satisfy
$$(\mathbf{E},\mathbf{P},\mathbf{L},\mathbf{G},\mathbf{U},\mathbf{V})(\mfx\vert_{S_{2}})=(\mathbf{E},\mathbf{P},\mathbf{L},\mathbf{G},\mathbf{U},\mathbf{V})(\mfx_{2}),$$
then on $S_2$,
\begin{equation}\label{eq:gluingS2}
\mfx\vert_{S_{2}}=\mfx_{2}.
\end{equation}

Furthermore, we have the bounds
\begin{align}
\begin{aligned}
\Vert \mfx_{1}-\mathfrak{m}\Vert_{\XX(\HH_{[1,2]})}+\Vert\mfx_{2}-\tilde{\mfx}_{2}\Vert_{\XX(S_{2})} \les&\,\varep,\\
\Vert f\Vert_{\YY_{f^{(4)}}}+\Vert q \Vert_{\YY_q}\les&\,\varep
\end{aligned}\label{eq:boundsmainthm}
\end{align}
where the norms are defined in Section \ref{sec:C3spheredata} and \ref{sec:spherepert}.
\end{theorem}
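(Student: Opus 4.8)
The plan is to prove Theorem \ref{thm:main} by the implicit function theorem in Banach spaces (Theorem 2.5.7 of \cite{MarsdenIFT}), so that the nonlinear $C^3$-null gluing problem near Minkowski reduces to its linearisation. First I would encode the three requirements on a solution into a single map. Let $\mathcal{U}$ be an open neighbourhood of $(\mfm,\underline{\mfm})$ in $\XX(S_{1})\times\XX^{+}(\tilde{\HHb}_{[-\de,\de],2})$ and $\mathcal{V}$ an open neighbourhood of $(\mfm,0,0)$ in $\XX(\HH_{0,[1,2]})\times\YY_{f^{(4)}}\times\YY_{q}$, and set
\[
F\big((\mfx_{1},\tilde{\underline{\mfx}}),(\mfx,f,q)\big):=\Big(\CC(\mfx),\ \mfx\vert_{S_{1}}-\mfx_{1},\ \MMf(\mfx\vert_{S_{2}})-\MMf(\PP_{f,q}(\tilde{\underline{\mfx}}))\Big),
\]
with values in $\mathcal{Z}_{\CC}\times\XX(S_{1})\times\mathcal{Z}_{\MMf}(S_{0,2})$. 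At the reference point $F=0$, since $\mfm$ solves the null structure equations, $\mfm\vert_{S_{1}}=\mfm$, and $\PP_{0,0}(\underline{\mfm})=\mfm$. Smoothness of $F$ near the reference point follows from Lemma \ref{lem:codomainconstraints} (for $\CC$), Lemma \ref{lemma:pertestimate} (for $\PP_{f,q}$), smoothness of the restriction maps $\mfx\mapsto\mfx\vert_{S_{i}}$, and the Lipschitz smoothness of $\MMf$.

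The heart of the argument is to show that the partial derivative $D_{(\mfx,f,q)}F$ at the reference point is surjective with a bounded right inverse; this is precisely the statement that the \emph{linearised} $C^{3}$-null gluing problem at Minkowski is solvable up to the $20$-dimensional space spanned by $(\mathbf{E},\mathbf{P},\mathbf{L},\mathbf{G},\mathbf{U},\mathbf{V})$, established in Section \ref{sec:lingluing}. Here the linearised operator sends $(\dot{\mfx},\dot{f},\dot{q})$ to $\big(\dot{\CC}(\dot{\mfx}),\ \dot{\mfx}\vert_{S_{1}},\ \dot{\MMf}(\dot{\mfx}\vert_{S_{2}})-\dot{\MMf}(\dot{\PP}_{f,q}(\dot{f},\dot{q}))\big)$, with $\dot{\CC}$ the linearised constraints of Appendix \ref{sec:linCC} together with Lemma \ref{lem:linearisationofconstraints} and $\dot{\PP}_{f,q}$ as in Lemma \ref{lem:pertlin}. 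To invert it I would first note that each $\dot{\CC}_{\varphi}$ is either algebraic or a linear transport equation in $v$ in its leading unknown, so that a bounded right inverse of $\dot{\mfx}\mapsto\dot{\CC}(\dot{\mfx})$ with vanishing $S_{1}$-data exists and an inhomogeneity in $\mathcal{Z}_{\CC}$ may be absorbed. It then remains to solve the \emph{homogeneous} linearised problem $\dot{\CC}(\dot{\mfx})=0$: integrating the linearised hierarchy of Section \ref{sec:charseed} from $S_{1}$ forward, using the free characteristic seed ($\dot{\Om}$ and $\mathrm{conf}(\dot{\gd})$ along $\HH_{0,[1,2]}$), produces $\dot{\mfx}$ with prescribed $\dot{\mfx}\vert_{S_{1}}$ and representation formulas for $\dot{\mfx}\vert_{S_{2}}$ in terms of $\dot{\mfx}\vert_{S_{1}}$, the $v$-integrals of the seed, and $(\dot{f},\dot{q})$. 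Because each component of the matching data carries a \emph{distinct} $v$-weight in these formulas, the seed integrals together with the five components of $\dot{f}$ and with $\dot{q}$ can be chosen to match an arbitrary target $\dot{\MMf}(\dot{\mfx}\vert_{S_{2}})$; the only coincidences of $v$-weights are exactly those producing the ten $C^{2}$-conservation laws $(\mathbf{E},\mathbf{P},\mathbf{L},\mathbf{G})$ and the ten novel third-order conservation laws packaged in $(\mathbf{U},\mathbf{V})$, which is why one matches $\MMf$ rather than the full sphere data.

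Granting this, the implicit function theorem yields, for $\varep>0$ small and every $(\mfx_{1},\tilde{\underline{\mfx}})$ obeying \eqref{eq:minkassump}, a solution $(\mfx,f,q)$ of $F=0$ depending continuously on the data, with $\Vert\mfx-\mfm\Vert_{\XX(\HH_{0,[1,2]})}+\Vert f\Vert_{\YY_{f^{(4)}}}+\Vert q\Vert_{\YY_{q}}\les\varep$; the remaining bound in \eqref{eq:boundsmainthm} then follows from \eqref{eq:pertbound}. Setting $\mfx_{2}:=\PP_{f,q}(\tilde{\underline{\mfx}})$, vanishing of the first two components of $F$ gives that $\mfx$ solves the null structure equations with $\mfx\vert_{S_{1}}=\mfx_{1}$, and the third gives $\MMf(\mfx\vert_{S_{2}})=\MMf(\mfx_{2})$, i.e. \eqref{eq:gluingS1}. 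The conditional conclusion \eqref{eq:gluingS2} then follows from Section \ref{sec:matchingdata}: near $\mfm$ the matching map $\MMf$ together with the charge functionals $(\mathbf{E},\mathbf{P},\mathbf{L},\mathbf{G},\mathbf{U},\mathbf{V})$ determine the full sphere data, so equality of both sets of data forces $\mfx\vert_{S_{2}}=\mfx_{2}$.

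The main obstacle is the surjectivity step, i.e. the linearised gluing of Section \ref{sec:lingluing}, and within it the third-order content: deriving and linearising the higher-order transport equations $\CC_{\Dbh\ab}$ and $\CC_{\Db^{2}\omb}$ and tracking their $v$-weights through the hierarchy in order to check that the only genuinely new obstructions are the $l=2$ charges $\mathbf{U},\mathbf{V}$, and that the five scalar freedoms in $\dot{f}$ together with the two in $\dot{q}$ are exactly enough to annihilate the gauge-dependent part of the would-be obstruction space at both second and third order. The $C^{2}$-part of this bookkeeping is imported from \cite{ACR2} as a black box; the third-order part is the new work.
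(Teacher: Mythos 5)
Your proposal is correct and follows essentially the same route as the paper: the same functional $\FF=(\mfx\vert_{S_1}-\mfx_1,\ \MMf(\mfx\vert_{S_2})-\MMf(\PP_{f,q}(\tilde{\underline{\mfx}})),\ \CC(\mfx))$, the implicit function theorem about Minkowski, reduction of the key step to surjectivity of the linearisation (the paper's Proposition \ref{prop:surjection}, proved via the representation formulas and $v$-weight bookkeeping of Section \ref{sec:lingluing} with the $C^2$-part imported from \cite{ACR2}), and the final identification $\mfx\vert_{S_2}=\mfx_2$ via the fact that $\MMf$ complements the charges (Lemma \ref{lem:matchinggluable}). The only presentational difference is that the paper obtains bijectivity for the standard IFT by restricting to $(\mathrm{ker}\,\dot{\FF})^{\perp}$, whereas you phrase the same fact as existence of a bounded right inverse; in the Hilbert-space setting these are equivalent.
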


\section{Proof of Theorem \ref{thm:main}}\label{sec:mainthmproof}

\ni In this section, we setup the functional analytic framework and apply the implicit function theorem to prove Theorem \ref{thm:main}. We first define the following functional $\FF$ on which we will apply the implicit function theorem.

\begin{definition}[The functional $\FF$]\label{def:Fmap} Let 
\begin{itemize}
\item $\mfx_{1} \in \mathcal{X}(S_{1})$ be $C^3$-sphere data, 
\item $\tilde{\underline{\mfx}} \in \XX^+(\tilde{\HHb}_{[-\de,\de],2})$ be ingoing null data on $\tilde{\HHb}_{[-\de,\de],2}$,
\item $\mfx \in \XX(\HH_{01,2]})$ be null data on $\HH_{[1,2]}$, 
\item $f \in \YY_{f^{(4)}}$ and $q \in \YY_{q}$ be perturbation functions,
\end{itemize}
where the function spaces are defined in Sections \ref{sec:C3spheredata} and \ref{sec:spherepert}. Define the functional $\FF$ by the following map.
\begin{align*} 
\FF: \mathcal{X}(S_{1}) \times \mathcal{X}^+(\tilde{\HHb}_{[-\de,\de],2}) \times\XX(\HH_{[1,2]}) \times \YY_{f^{(4)}}\times \mathcal{Y}_{q} \to \mathcal{X}(S_{1}) \times \ZZ_\MMf(S_{2}) \times \mathcal{Z}_\CC
\end{align*}
with
\begin{align} 
\FF(\mfx_{1},\tilde{\underline{\mfx}},\mfx,f,q) := \left(\mfx\vert_{S_1} - \mfx_1, \MMf\left(\mfx\vert_{S_2}\right)-\MMf\left(\PP_{f,q}\left(\underline{\tilde{\mfx}}_{2}\right)\right),\CC(\mfx)\right), \label{eq:deff}
\end{align}
where 
\begin{itemize}
\item $\mathfrak{M}$ is the matching map defined in Definition \ref{def:matchingmap},
\item $\CC$ is the collection of constraint functions defined in Lemma \ref{lem:codomainconstraints},
\item $\PP_{f,q}$ are the perturbations of sphere data defined by equation \eqref{eq:defpertmap}.
\end{itemize}
\end{definition}

\ni It is clear that the Minkowski data lies in the kernel of $\FF$, that is,
$$\FF(\mfm_{1},\tilde{\underline{\mfm}},\mfm,0,0) = (0,0,0).$$
Moreover, by definition, elements of the kernel of $\FF$ are solutions to the $C^3$-null gluing problem. That is, if it holds that 
$$\FF\left(\mfx_1,\underline{\mfx}_{2},\mfx,f,q\right) = (0,0,0),$$
then an inspection of the right-hand side of \eqref{eq:deff} yields
$$\mfx\vert_{S_1}=\mfx_1,\qquad \MMf(\mfx\vert_{S_2})=\MMf\left(\PP_{f,q}\left(\underline{\tilde{\mfx}}_{2}\right)\right),\qquad \CC(\mfx)=0.$$
This is in complete correspondence with a solution to the $C^3$-null gluing problem given in Theorem \ref{thm:main}.

In order to apply the implicit function theorem to $\FF$, we need to show that the linearisation of $\FF$. denoted by $\dot{\FF}$ at Minkowski is a bijection. The key observation is that proving the surjectivity of $\dot\FF$ is equivalent to solving the linearised $C^3$-null gluing problem at Minkowski and in Section \ref{sec:lingluing}, we will prove the following proposition.
\begin{proposition}[Surjectivity at Minkowski]\label{prop:surjection} Let $\FF$ be the map given by \eqref{eq:deff}. The linearisation of $\FF$ at Minkowski given by the map
$$\dot{\FF}:(\dot{\mfx},\dot{f},\dot{q})\mapsto\left(\dot{\mfx}\vert_{S_1},\MMf(\mfxd\vert_{S_2} - \dot{\PP}_{f,q}(\dot f,\dot q)),\dot{\CC}(\dot{\mfx})\right)$$
is a surjection. That is, the linearised $C^3$-null gluing problem at Minkowski is solvable.
\end{proposition}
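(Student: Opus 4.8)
The strategy is to reduce the surjectivity of $\dot{\FF}$ to the explicit solvability of the linearised $C^3$-null gluing problem along $\HH_{0,[1,2]}$, using the hierarchical structure of the linearised null structure equations and the representation formulas it produces. Given a target tuple $(\dot{\mfx}_1,\dot{\MMf}_2,\dot{\CC})$ in the codomain, I would first reduce to the case $\dot{\CC}=0$ by solving the inhomogeneous linear transport hierarchy $\dot{\CC}_{\dot\varphi}(\dot{\mfx})=\dot{\CC}$ forwards from $S_1$ with the prescribed data $\dot{\mfx}\vert_{S_1}=\dot{\mfx}_1$; since each $\dot{\CC}_{\dot\varphi}$ is a (possibly higher-order) linear ODE in $v$ with coefficients depending only on quantities earlier in the hierarchy, this can always be integrated, and by the continuity/codomain statement of Lemma \ref{lem:codomainconstraints} the resulting correction lands in $\XX(\HH_{0,[1,2]})$. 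After this reduction it suffices to hit an arbitrary target of the form $(0,\dot{\MMf}_2,0)$ by a solution $\dot{\mfx}$ of the \emph{homogeneous} linearised constraints (i.e.\ with $\dot{\mfx}\vert_{S_1}=0$) together with a choice of perturbation $(\dot f,\dot q)$.

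\textbf{Main step: matching the gluable data at $S_2$.} With $\dot{\CC}=0$ and $\dot{\mfx}\vert_{S_1}=0$, the free data along $\HH_{0,[1,2]}$ is exactly the linearised characteristic seed supported away from $S_1$, namely the functions $\Omd$ on $\HH_{0,[1,2]}$ and $\gdcd$ (equivalently $\chihd$) on $\HH_{0,[1,2]}$; every other component of $\dot{\mfx}$ at $S_2$ is then given by a representation formula of the schematic type illustrated in the introduction, i.e.\ a fixed $v$-independent term plus an explicit linear combination of the $v$-weighted integrals $\int_1^2 v'^{\,p}\Omd\,dv'$ and $\int_1^2 v'^{\,q}\chihd\,dv'$ (together with lower-order seed data, which vanishes here). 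The claim I must prove is that the map sending $(\Omd,\chihd)$ on $\HH$, together with $(\dot f,\dot q)$, to the components of $\MMf(\dot{\mfx}\vert_{S_2}) - \dot{\PP}_{f,q}(\dot f,\dot q)$ is surjective onto $\ZZ_\MMf(S_2)$. I would organise this mode by mode: for each spherical harmonic mode $\ell$, the relevant representation coefficients are a finite set of $v$-weights $\{v^{p_1},\dots,v^{p_N}\}$, and because these monomials are linearly independent functions of $v$ one can choose $\Omd,\chihd$ supported in $1\le v\le 2$ so that the integrals $\int_1^2 v'^{\,p_i}\Omd\,dv'$, $\int_1^2 v'^{\,p_j}\chihd\,dv'$ realise any prescribed values — this is the standard ``moment problem'' argument (e.g.\ pick $\Omd$ a suitable linear combination of monomials in $v$). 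The matching data $\MMf$ was precisely designed to package the components of the sphere data that are \emph{orthogonal} to the $20$ charges $(\mathbf{E},\mathbf{P},\mathbf{L},\mathbf{G},\mathbf{U},\mathbf{V})$, equivalently the components whose representation formulas do \emph{not} collide in their $v$-weights with a conservation law; so for the modes and tensor types entering $\MMf$ there are enough independent $v$-weights to solve the moment problem, and the finitely many low modes ($\ell\le 1$ for the $C^2$-part, $\ell\le 2$ for $\Db^2\omb$, $\ell\le 2$ for $\Dbh\ab$) where a collision would occur are exactly the ones removed from $\MMf$ in favour of the conserved combinations $\tilde{\QQ}_{(\cdot)}$ and the charges. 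For those low modes the perturbation functions $(\dot f,\dot q)$ supply the remaining freedom: by Lemma \ref{lem:pertlin} the map $(\dot f,\dot q)\mapsto \dot{\PP}_{f,q}$ controls $\Omd(0),\partial_u\dot f$-type data and the $\ell=0,1$ parts of $\phid,\etad,\ombd,\Db\ombd,\Db^2\ombd$, and $\dot q$ controls the conformal metric $\gdcd$; so composing with the representation formulas lets us adjust the $\ell\le 1$ (resp.\ $\ell\le 2$) components of $\MMf(\dot{\mfx}\vert_{S_2})$ independently and match them too.

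\textbf{Assembling the argument.} Concretely I would proceed in this order: (i) invoke linearity of the hierarchy to reduce to $\dot{\CC}=0$, $\dot{\mfx}\vert_{S_1}=0$; (ii) write down, from the hierarchy of Section \ref{sec:charseed} linearised as in Lemma \ref{lem:linearisationofconstraints} and Appendix \ref{sec:linCC}, the representation formulas expressing each component of $\dot{\mfx}\vert_{S_2}$ in terms of the $v$-moments of $(\Omd,\chihd)$; (iii) read off which $v$-weights appear for each component entering $\MMf$ and check, mode by mode, that on the range of modes retained in $\MMf$ the relevant weights are linearly independent, hence the moment problem is solvable for $(\Omd,\chihd)$ alone; (iv) handle the finitely many excluded low modes by adding the perturbation contributions from Lemma \ref{lem:pertlin}, observing that $\dot{\PP}_{f,q}$ acts on exactly those modes with enough parameters; (v) quote the Sobolev/codomain bounds (Lemmas \ref{lem:codomainconstraints}, \ref{lem:estimatespfq}, and the smoothness of $\MMf$) to confirm the constructed $(\dot{\mfx},\dot f,\dot q)$ lies in the correct function spaces, so $\dot{\FF}$ maps onto $\ZZ_\MMf(S_2)\times\mathcal{Z}_\CC$ while the first slot $\dot{\mfx}\vert_{S_1}$ is matched by construction. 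The hard part is step (iii)–(iv): one must verify that the bookkeeping of $v$-weights really does separate, i.e.\ that the only collisions are the ones already accounted for by the $20$ charges and the conserved quantities $\tilde{\QQ}$, and that the perturbation map's image is transverse to the remaining directions on the low modes — this is where the careful design of $\MMf$ in Definition \ref{def:matchingmap} and the explicit form of Lemma \ref{lem:pertlin} are doing all the work, and where a sign or weight error would break surjectivity.
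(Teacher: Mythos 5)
Your plan follows the paper's strategy in all essentials: derive representation formulas by integrating the hierarchical linearised null structure equations, use the linear independence of the various $v$-weights to solve the moment problem for $\int_1^2 v'^{-k}\Omd\,dv'$ and $\int_1^2 v'^{-k}\chihd\,dv'$ (this is exactly Lemma \ref{lem:existenceofOmdchihd}), observe that $\MMf$ is designed to omit exactly the charge-obstructed directions, and use the transversal and angular perturbations to match the remaining gauge-dependent low-mode combinations (Proposition \ref{prop:gaugedepnewQ}). Your preliminary step of homogenising the constraint slot by integrating the inhomogeneous hierarchy forward from $S_1$ is a legitimate linear-algebra reorganisation of what the paper does by carrying the source terms $\mfc_{\dot\varphi}$ through all the representation formulas.

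Two refinements worth flagging. First, the paper quotes the $C^2$ linearised gluing of \cite{ACR2} as a black box (Proposition \ref{prop:C2lingluing}) and redoes the $v$-weight bookkeeping only for the genuinely new components $(D^2\om,\Db^2\omb,\Dh\a,\Dbh\ab)$ — you should do the same, as re-deriving the entire $C^2$ moment analysis would be substantial unnecessary work. Second, your mode bookkeeping is slightly off: for $\Db^2\omb$ the gauge-dependent part is its $l\le 1$ projection (not $l\le 2$), packaged as $\tilde{\QQ}_{\Db^2\ombd^{[1]}}$ in $\MMf$, while for $\Dbh\ab$ it is the $l=2$ mode that is the gauge-\emph{invariant} obstruction $(\mathbf{U},\mathbf{V})$ and the $l\ge 3$ part is matched via the combined conservation law $\QQ_{\Db\abd^{[\geq3]}_\psi}$ together with $\Db\ombd$ (see the remark in Section \ref{sec:lingluingsoln}); the latter interaction is the one genuinely new collision of $v$-weights in the $C^3$ problem and should be called out explicitly rather than folded into the generic "check independence mode by mode" step.
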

With this result in hand, it remains to show that $\dot{\FF}$ is injective. In general $\dot{\FF}$ is not injective and we must consider the following suitable restriction of the domain. The obstruction to injectivity is the kernel of $\dot{\FF}$. Since $\dot{\FF}$ is a bounded linear map its kernel is a closed subspace of the domain and we can split the domain of $\dot{\FF}$ into 
\begin{align*} 
\begin{aligned} 
\XX(\HH_{[1,2]}) \times \YY_{f^{(4)}}\times \mathcal{Y}_{q} = \mathrm{ker}(\dot{\FF}) \oplus \left(\mathrm{ker}(\dot{\FF})\right)^\perp.
\end{aligned}
\end{align*}
Hence, we consider the following restriction of $\FF$ to
\begin{align*} 
\begin{aligned} 
\overline{\FF}: \mathcal{X}(S_{1}) \times \mathcal{X}^+(\tilde{\HHb}_{[-\de,\de],2}) \times \left(\mathrm{ker}(\dot{\FF})\right)^\perp \to \mathcal{X}(S_{1}) \times \ZZ_\MMf(S_{2}) \times \mathcal{Z}_\CC.
\end{aligned} 
\end{align*}
By construction, the linearisation of $\overline{\FF}$ is injective and thus $\dot{\overline{\FF}}$ is a bijection between Hilbert spaces and we can apply the implicit function theorem to it about $(\mathfrak{m},\underline{\mathfrak{m}},\mathfrak{m},0,0)$. Consider the functional $\GG$ defined by
\begin{align*} 
\GG: B\left((\mathfrak{m},\underline{\mathfrak{m}},0),r_0\right) \to \left(\mathrm{ker}(\dot{\FF})\right)^\perp,
\end{align*}
where $B\left((\mathfrak{m},\underline{\mathfrak{m}},0),r_0\right)$ denotes the open ball of radius $r_0>0$ centered at $(\mathfrak{m},\underline{\mathfrak{m}},0)$ in the space
\begin{align*} 
 \mathcal{X}(S_{1}) \times \mathcal{X}^+(\HHb_{[-\de,\de],2})\times \ZZ_\CC.
\end{align*}
Then for a given $(\mfx_{1},\tilde{\underline{\mfx}}) \in B\left((\mathfrak{m},\underline{\mathfrak{m}},0),r_0\right)$, define 
\begin{align*} 
(\mfx,f,q):= \GG(\mfx_{1},\tilde{\underline{\mfx}}).
\end{align*} 
Applying the implicit function yields that for all $(\mfx_{1},\tilde{\underline{\mfx}}) \in B\left((\mathfrak{m},\underline{\mathfrak{m}},0),r_0\right)$, 
\begin{align*} 
\overline{\FF}\left(\mfx_{1},\tilde{\underline{\mfx}}, \GG(\mfx_{1},\tilde{\underline{\mfx}},0)\right)= (0,0,0).
\end{align*}
Thus, by the definition of $\overline{\FF}$, we have that
\begin{align*} 
&\mfx \vert_{S_{1}} = {\mfx_{1}}, \qquad \MMf(\mfx\vert_{S_{2}}) = \MMf\left(\PP_{f,q}(\tilde{\underline{\mfx}})\right),\qquad \CC(\mfx) = 0 
\end{align*}

\ni This proves the gluing parts \eqref{eq:gluingS1} and \eqref{eq:gluingS2} of Theorem \ref{thm:main}.
Moreover, by standard calculus estimates on Hilbert spaces, we have the bound
\begin{align} 
\begin{aligned} 
&\Vert \mfx - \mathfrak{m} \Vert_{\XX(\HH_{0,[1,2]})} + \Vert f \Vert_{\YY_{f^{(4)}}} + \Vert q \Vert_{\mathcal{Y}_{q}} \\
\les& \Vert \mfx_{1}-\mathfrak{m} \Vert_{\XX(S_{1})} + \Vert \tilde{\underline{\mfx}}-\underline{\mathfrak{m}} \Vert_{\tilde{\mathcal{X}}^+(\HHb_{[-\de,\de],2}) }.
\end{aligned} \label{eq:finalbound}
\end{align} 
Applying the estimate \eqref{eq:pertbound}, it follows that 
\begin{equation*}
\Vert \PP_{f,q}(\tilde{\underline{\mfx}}) -\tilde{\underline{\mfx}}_{2} \Vert_{\XX(S_{2})} \les \Vert \mfx_{1}-\mathfrak{m}_{1} \Vert_{\XX(S_{1})} + \Vert \tilde{\underline{\mfx}}-\underline{\mathfrak{m}} \Vert_{\tilde{\mathcal{X}}^+(\HHb_{[-\de,\de],2}) }.
\end{equation*}
\ni Combining these estimates with the assumption \eqref{eq:minkassump} obtains the bound \eqref{eq:boundsmainthm} and concludes the proof of Theorem \ref{thm:main}. It remains to prove Proposition \ref{prop:surjection}, which we do in the next section.

\section{Surjectivity at Minkowski}\label{sec:lingluing}

\ni In this section, we prove Proposition \ref{prop:surjection} by proving the solvability of the linearised $C^3$-null gluing problem which we formulate as follows.

\begin{proposition}[Linearised $C^3$-null gluing]\label{prop:lingluing} Given
\begin{itemize}
\item linearised $C^3$-sphere data on $S_{1}$, $\dot{\mfx}_{1}\in\XX(S_{1})$,
\item linearised $C^3$-matching data on $\tilde{S}_{2}\subset\HHb_{[-\delta,\delta],2}$, $\tilde{\mathfrak{M}}_{2}\in\ZZ_\mathfrak{M}(\tilde{S}_{2})$,
\item linearised source functions $\mfc_{\dot{\varphi}}\in\ZZ_{\CC}$ on $\HH_{[1,2]}$ for $$\dot{\varphi}\in\{\dot{\phi},\dot{(\Om\tr\chi)},\chihd,\etad,\dot{(\Om\tr\chib)},\allowbreak\chibhd, \ombd,\Db\ombd,\Db^2\ombd, \ad, \abd,\Db\abd\},$$
\end{itemize}
there exists linearised null data $\dot\mfx\in\XX(\HH_{[1,2]})$ and perturbation functions $\dot{f}$ and $\dot{q}$
such that 
\begin{align}\label{eq:spheredatamatching}
\dot\mfx\big\vert_{S_{1}} = \dot\mfx_{1}, \qquad
{\mathfrak{M}}(\dot\mfx\big\vert_{S_{2}} + \PP_{f,q}(\dot{f},\dot{q})) = {\tilde{\mathfrak{M}}}_{2},
\end{align}
and $\dot\mfx$ solves the linearised null structure equations for the source functions
\begin{align}\label{eq:inhomconstraints}
\dot{\CC}_{\dot{\varphi}}(\dot\mfx) = \mfc_{\dot\varphi}(\dot\mfx).
\end{align}
Furthermore, $\dot\mfx$ and the perturbations are bounded as follows,
\begin{align}
\begin{aligned}
&\Vert\dot\mfx\Vert_{\XX(\HH_{[1,2]})} + \Vert \dot{f}\Vert_{\YY_{f^{(4)}}}+\Vert\dot{q}\Vert_{\YY_q} + \Vert \PP_{f,q}(\dot{f},\dot{q})\Vert_{\XX(S_{2})},\\
 \les& \Vert\dot\mfx_{1}\Vert_{\XX(S_{1})} + \Vert{\tilde{\mathfrak{M}}}_{2}\Vert_{\ZZ_{\mathfrak{M}}(\tilde{S}_{2})} + \Vert \mfc_{\dot\varphi}\Vert_{\ZZ_{\CC}}.
\end{aligned}\label{eq:estimatelingluing}
\end{align}
\end{proposition}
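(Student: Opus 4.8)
The plan is to solve the linearised $C^3$-null gluing problem by integrating the linearised null structure equations in the hierarchical order established in Section \ref{sec:charseed}, constructing a solution out of free data (a linearised characteristic seed), and then adjusting the free data together with the linearised perturbation functions $\dot f, \dot q$ so that the boundary conditions on $S_1$ and the matching condition on $S_2$ are met, up to the $20$ gauge-invariant charges which are absorbed into the definition of $\mathfrak{M}$.

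\medskip

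\noindent\textbf{Step 1: Reduce to the $C^2$-problem as a black box.} Since $\dot{\mfx}$ contains the linearised $C^2$-sphere data $\dot x$ and the matching map $\mathfrak{M}$ restricts to the $C^2$-matching map $m$ on that part, I first invoke the linearised $C^2$-null gluing result of \cite{ACR2}: given $\dot x_1$ on $S_1$, the $C^2$-matching data $\tilde m_2$ on $\tilde S_2$, and source functions for the first ten constraints $\dot{\CC}_{\dot\varphi}$, there exist linearised $C^2$-null data $\dot x$, a linearised perturbation function $\dot f$ (its five jets at $u=0$) and angular $\dot q$ solving the $C^2$-part of \eqref{eq:spheredatamatching}–\eqref{eq:inhomconstraints} with the corresponding bound. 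This fixes $\dot\Om, \gdcd, \omtrchid, \chihd, \omtrchibd, \chibhd, \etad, \omd, D\omd, \ombd, \Db\ombd, \ad, \abd$ on all of $\HH_{[1,2]}$, and fixes $\dot f, \dot q$ (in particular $\partial_u^n\dot f(0)$ for $n\le 3$; the jet $\partial_u^4\dot f(0)$ remains free, as it enters only the $C^3$-level formulas via $\Db^2\ombd$ in Lemma \ref{lem:pertlin}).

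\medskip

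\noindent\textbf{Step 2: Integrate the two new transport equations.} With the $C^2$-data in hand, the right-hand sides of the linearised higher-order equations $\dot{\CC}_{\Db^2\ombd}$ and $\dot{\CC}_{\Db\abd}$ from \eqref{eq:newlinearisedconstraints} become known functions of $v$ on $\HH_{[1,2]}$ (plus the prescribed sources $\mfc_{\Db^2\ombd}, \mfc_{\Db\abd}$), so each is a first-order linear ODE in $v$ for $\Db^2\ombd$, resp.\ $\Dbh\abd$. I solve each by integrating from $v=1$, with the initial value on $S_1$ prescribed to equal the given $S_1$-data (this is forced by the first equation in \eqref{eq:spheredatamatching}), and read off the representation formula at $v=2$. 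Likewise $D^2\omd$ and $D\ad$ are obtained algebraically/by one integration from the definition of $\om$ and the transport equation for $D\chih$ as noted in the sketch, so that all $C^3$-components of $\dot{\mfx}$ on $\HH_{[1,2]}$ are determined. The Sobolev/integration bounds at each stage give control in the norm $\XX(\HH_{[1,2]})$ of Definition \ref{def:C3nullnorm}, using the regularity hierarchy built into that norm (one fewer $D$-derivative and the indicated drop in angular regularity for $\Dbh\ab$ relative to $\ab$).

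\medskip

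\noindent\textbf{Step 3: Match at $S_2$ up to the charges.} The representation formula at $v=2$ expresses $(\Db^2\ombd, \Dbh\abd)|_{S_2}$ in terms of the $S_1$-data, the $C^2$-seed, and the source functions. Decomposing into spherical-harmonic modes: for $\Dbh\abd$, the $l\ge 3$ modes carry a $v$-weight distinct from all lower-order quantities and can be gluing-matched freely by choosing the remaining free seed component $\Dbh\abd$ on $S_1$ (or equivalently a high-mode adjustment of $\chihd$-integrals), while the $l=2$ mode is exactly the combination $\mathbf{U}^m,\mathbf{V}^m$ of Definition \ref{def:chargeW} and is conserved — hence \emph{not} gluable, which is why $\mathfrak{M}$ records only $\Dbh\ab^{[\ge 3]}$. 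For $\Db^2\ombd$, the $l\ge 2$ modes are matched and the $l\le 1$ part is handled through the modified charge $\tilde{\QQ}_{\Db^2\ombd^{[1]}}$ — one verifies (this is the content of Section \ref{sec:matchingdata}, which I cite) that $\tilde{\QQ}_{\Db^2\ombd^{[1]}}$ is conserved along $\HH_{[1,2]}$, so matching $\tilde{\QQ}$ on $\tilde S_2$ is automatic once $\dot f$'s top jet $\partial_u^4\dot f(0)$ is chosen to absorb the gauge-dependent part indicated by Lemma \ref{lem:pertlin} ($\Db^2\ombd = \partial_u^3(\tfrac12\partial_u\dot f)$). The gauge-invariance of the remaining $l=2$ charges $\mathbf{U},\mathbf{V}$ — i.e.\ that their linearisations are annihilated by $\dot{\PP}_{f,q}$ — follows from computing $\dot{\PP}_{f,q}$ applied to the defining expression using the perturbation formulas in Lemma \ref{lem:pertlin} (specifically $\chihd = -2\DDd_2^*\di\dot f$ contributes a pure $l\ge 1$ shear, whose second divergence lands in $l\ge 2$, but the precise combination in $\mathbf{U},\mathbf{V}$ cancels against $\abd$'s transformation), which I carry out in the angular-perturbation appendix.

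\medskip

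\noindent\textbf{Main obstacle.} The hard part is \textbf{Step 3}: verifying that the modified charge $\tilde{\QQ}_{\Db^2\ombd^{[1]}}$ is genuinely conserved along $\HH_{[1,2]}$ and that the only surviving gauge-invariant obstructions at third order are precisely the $l=2$ modes $\mathbf{U},\mathbf{V}$. This requires combining the linearised equations $\dot{\CC}_{\phid}, \dot{\CC}_{\omtrchid}, \dot{\CC}_{\omtrchibd}, \dot{\CC}_{\etad}, \dot{\CC}_{\ombd}, \dot{\CC}_{\Db\ombd}$ and $\dot{\CC}_{\Db^2\ombd}$ into a single $v$-derivative identity — the third-order analogue of \eqref{eq:introNSE3} — where the correct $v$-weights must conspire to produce an exact derivative; getting the coefficients of the iterated Laplacians $\Ldo\Ldo + 2\Ldo - 3 = (\Ldo+3)(\Ldo-1)$ etc.\ in the definition of $\tilde{\QQ}_{\Db^2\ombd^{[1]}}$ right is delicate, and the analogous computation for $\Dbh\abd$ (showing the $l=2$ mode of the right-hand side of $\dot{\CC}_{\Db\abd}$ has vanishing $v$-integral against the appropriate weight, using that $\Divdo\Divdo$ of a symmetric traceless $2$-tensor has no $l\le 1$ content and that $\DDd_2^*$ on $l=2$ vectors has a specific kernel/cokernel) is where the bulk of the technical work lies. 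Everything else is linear ODE integration plus Sobolev bookkeeping.
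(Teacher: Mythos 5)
Your overall architecture — invoke the $C^2$-gluing as a black box, integrate the two new transport equations, then match at $S_2$ up to the $20$ charges by decomposing into spherical harmonics and using the remaining perturbation-function freedom — is the same as the paper's. However, there is a concrete gap in Step 3 that invalidates the argument as stated.

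You claim that "the $l\ge 3$ modes [of $\Dbh\abd$] carry a $v$-weight distinct from all lower-order quantities and can be gluing-matched freely." This is false: the representation formula for $\Dbh\abd$, \eqref{eq:repformulaforDbab}, carries the same weight $\int_1^v v'^{-4}\chihd\, dv'$ as the representation formula \eqref{eq:repformDuomb} for $\Db\ombd$, a $C^2$-level quantity. So after your Step~1 fixes $\chihd$ along $\HH_{[1,2]}$ — in particular $\int_1^2 v'^{-4}\chihd\,dv'$, which the $C^2$-gluing uses to match $\Db\ombd^{[\geq 3]}$ — there is no freedom left in that integral to glue $\Dbh\abd^{[\geq 3]}$, and your proposed scheme stalls. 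This coincidence of $v$-weights is exactly what produces the additional conservation law $\QQ_{\Db\abd^{[\geq3]}_\psi}$ (Lemma~\ref{lem:consvlawQDuombDuab}) linking $\Db\abd^{[\geq3]}$ and $\Db\ombd^{[\geq3]}$, and the paper resolves the conflict by reorganising: the $l\ge 3$ modes of $\Db\ombd$ are matched by adjusting $\partial_u^3\dot f^{[\geq 3]}$ to set $\QQ_{\Db\abd^{[\geq 3]}_\psi}$, freeing up the $v'^{-4}$ integral for $\Dbh\abd^{[\geq 3]}$ (see the remark after Lemma~\ref{lem:existenceofOmdchihd}).

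A related error in Step~1: you state that the $C^2$-gluing fixes all of $\partial_u^n\dot f(0)$ for $n\le 3$ and that only $\partial_u^4\dot f(0)$ remains free. In fact, Proposition~\ref{prop:C2lingluing} only uses the $l\le 2$ modes of $\partial_u^3 \dot f$ (through $\QQ_{\Db\ombd^{[\leq 1]}}$ and $\QQ_{\Db\ombd^{[2]}}$ in \eqref{eq:QDuombgauge}); the $l\geq 3$ modes of $\partial_u^3\dot f$ are not touched and must remain free — they are precisely the handle used to glue $\QQ_{\Db\abd^{[\geq3]}_\psi}$ in Proposition~\ref{prop:gaugedepnewQ}. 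If you regard them as fixed by Step~1, the $C^3$-matching in Step~3 becomes overdetermined. So the black-box usage of the $C^2$ result requires explicit bookkeeping of which perturbation degrees of freedom it actually consumes, and the repeated $v$-weight forces a reallocation of the $\chihd$-integral data between $\Db\ombd$ and $\Dbh\abd$ that your proposal does not perform.
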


\begin{remark} We make the following remarks on Proposition \ref{prop:lingluing}.
\begin{enumerate}
\item In the linearised setting, the characteristic seed along $\HH_{[1,2]}$ can be taken to be the pair $(\Omd,\chihd)$ as a result of the linearised first variation equation, see the third equation of \eqref{eq:linearisedconstraints}. Since we will construct the characteristic seed  along $\HH_{[1,2]}$ and $(\Omd,\chihd)$ are given on $S_1$ and $S_2$, the gluing of the $D$ derivative of these quantities is trivial. That is the quantities 
$$(\omd,D\omd,D^2\omd,\chihd,D\chihd)$$ are glued at $S_2$ given that the characteristic seed $(\Omd,\chihd)$ matches the data given on $S_2$.

\item The proof of Proposition \ref{prop:lingluing} employs the analysis of the linearised $C^2$-null gluing problem of Theorem 4.1 of \cite{ACR2} as well as an analysis of the quantities novel in the $C^3$-sphere data. Therefore, we analyse the novel components of the $C^3$-sphere data $$(D^2\omd,D\ad.\Db^2\ombd, \Db\ab),$$ and employ the linearised $C^2$-null gluing result of Theorem 4.1 in \cite{ACR2} to glue the linearised $C^2$-sphere data.
\end{enumerate}
\end{remark}

We will prove Proposition \ref{prop:lingluing} by splitting the given data into its constituent $C^2$- and $C^3$-parts. The $C^2$-part has already been solved for in \cite{ACR2} and we collect the necessary formulas we need in Section \ref{sec:C2lingluing}. The gluing of the $C^3$-part is contained in the following proposition.

\begin{proposition}\label{prop:C3lin2}
Denote by $\check{x}_1$ the $C^3$-part of the linearised sphere data $$\check{x}_1:= (D^2\omd, \Db^2\ombd,\Dh\ad,\Dbh\abd)$$ on $S_1$. Denote by $\tilde{\check{m}}_2$the $C^3$-part of the matching data $$\tilde{\check{m}}_2:=(D^{2}\om,\Db^{2}\omb^{[\geq 2]}, \tilde{\QQ}_{\Db^{2}\ombd^{[1]}},\Dh\a,\Dbh\ab^{[\geq 3]})$$ on $\tilde{S}_2$.
 Let $\check{x}_1$, $\tilde{\check{m}}_2$ and the higher-order source functions $\mfc_{\Db^2\ombd}$ and $\mfc_{\Db\abd}$ on $\HH_{[1,2]}$ be given. Then there exists linearised null data $\dot\mfx\in\XX(\HH_{[1,2]})$ and perturbation functions $\dot{f}$ and $\dot{q}$ such that the restriction of $\dot\mfx$ to its $C^3$-part, $\check{x}$, satisfies
\begin{align}\label{eq:spheredatamatchingcx}
\check{x}\vert_{S_{1}} = \check{x}_1, \qquad
\check{m}(\dot{\mfx}\vert_{S_{2}} + \PP_{f,q}(\dot{f},\dot{q})) = \tilde{\check{m}}_{2},
\end{align}
and $\dot\mfx$ solves the linearised null structure equations for the source functions
\begin{align}\label{eq:inhomconstraintsxc}
\dot{\CC}_{\Db^2\ombd}(\dot\mfx) = \mfc_{\Db^2\ombd}(\dot\mfx)\qquad \dot{\CC}_{\Db\abd}(\dot\mfx) = \mfc_{\Db\abd}(\dot\mfx).
\end{align}
\end{proposition}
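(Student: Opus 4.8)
\textbf{Proof sketch of Proposition \ref{prop:C3lin2}.}

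The plan is to glue the four novel $C^3$-components $(D^2\omd, \Db^2\ombd, \Dh\ad, \Dbh\abd)$ by exploiting the hierarchical structure of the linearised null structure equations, treating $(\Omd,\chihd)$ as the already-constructed linearised characteristic seed (and using that the underlying $C^2$-data has been glued via Theorem 4.1 of \cite{ACR2}, so all $C^2$-quantities entering the $C^3$-equations are available along $\HH_{[1,2]}$ with the required bounds). The components $D^2\omd$ and $\Dh\ad$ are the easy ones: $\omd$ and all its $D$-derivatives are freely prescribed as part of the seed, so $D^2\omd$ is glued simply by interpolating its values at $v=1$ and $v=2$ (which is already contained in the $C^2$-gluing of $\omd$, $D\omd$ together with the freedom in the seed), and $\Dh\ad$ is obtained algebraically from $D\chihd$ and lower-order data via the linearisation of $\CC_{\ab}$, hence is glued the moment $\chihd$ and $D\chihd$ are — see the first remark after Proposition \ref{prop:lingluing}. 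So the real content is gluing $\Db^2\ombd$ and $\Dbh\abd$, for which we use the linearised transport equations $\dot{\CC}_{\Db^2\ombd}=\mfc_{\Db^2\ombd}$ and $\dot{\CC}_{\Db\abd}=\mfc_{\Db\abd}$ from Lemma \ref{lem:linearisationofconstraints}.

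First I would integrate the transport equation $\dot{\CC}_{\Db\abd}(\dot\mfx)=\mfc_{\Db\abd}$ in $v$ from $1$ to $2$. Because the equation has the schematic form $vD(\tfrac1v\Db\abd) = -\,(\text{source built from }\chibhd,\omtrchibd,\etad,\abd) + \mfc_{\Db\abd}$, with the source already determined by the glued $C^2$-data and the seed, integration yields a representation formula expressing $[\tfrac1v\Db\abd]_1^2$ as a fixed $v$-weighted integral of those lower-order quantities plus $\int_1^2\mfc_{\Db\abd}\,dv'$. Projecting onto spherical harmonic modes: for modes $l\geq 3$ this representation formula is to be \emph{solved}, i.e. the required integral constraint on the seed-part of $\chihd$ (which enters through $\abd$ and $\chibhd$) determines which additional $v$-weight must be supplied, and one checks this weight is linearly independent of the $v$-weights already consumed by the $C^2$-gluing — this is the decisive linear-independence bookkeeping. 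The $l=2$ modes are precisely where the charges $\mathbf{U},\mathbf{V}$ sit, so there the representation formula, after combining with the representation formula for $\abd$ itself, collapses to a conservation law $\mathbf{U},\mathbf{V}(\mfx|_{S_2}) = \mathbf{U},\mathbf{V}(\mfx|_{S_1})$ and is \emph{not} glued; matching $\check{m}(\dot\mfx|_{S_2}+\dot\PP_{f,q})=\tilde{\check m}_2$ in the $\Dbh\ab^{[\geq 3]}$ slot is then exactly the $l\geq 3$ statement. Entirely parallel, integrating $\dot{\CC}_{\Db^2\ombd}$ gives a representation formula for $\Db^2\ombd$ whose $l\geq 2$ part is glued against $\Db^2\omb^{[\geq2]}$, and whose $l\leq 1$ part is repackaged — following exactly the pattern of $\tilde\QQ_{\omb^{[\leq1]}}$ and $\tilde\QQ_{\Db\omb^{[\leq1]}}$ in \eqref{eq:matchingQ} — into the conserved charge $\tilde\QQ_{\Db^2\ombd^{[1]}}$ from Definition \ref{def:matchingmap}, whose matching is then automatic once the $C^2$-charges $\tilde\QQ_{\omb^{[\leq1]}},\tilde\QQ_{\Db\omb^{[\leq1]}}$ and $\QQ_{\etad^{[1]}}$ are matched (which is part of the $C^2$-gluing). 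To satisfy the matching at $S_2$ including the perturbation, I would use Lemma \ref{lem:pertlin}: the components $\Db^2\ombd = \partial_u^3(\tfrac12\partial_u\dot f)$ and the $\Dbh\abd$-relevant slots transform through $\dot f$ and $\dot q$ in a triangular way, so after the $C^2$-perturbation parameters $\dot f(0),\dots,\partial_u^3\dot f(0),\dot q$ are fixed by the $C^2$-gluing, the single remaining freedom $\partial_u^4\dot f(0)$ (which is exactly why the $C^3$-norm $\YY_{f^{(4)}}$ carries four $u$-derivatives) can be chosen to absorb the residual mismatch in $\Db^2\ombd$ at $S_2$; the $\Dbh\abd$ mismatch in modes $l\geq 3$ is instead absorbed by the available $v$-weight in the $\chihd$-seed identified above.

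The estimate \eqref{eq:estimatelingluing}, restricted to the $C^3$-part, then follows by reading off the $L^2$-in-$v$ bounds from the representation formulas: each glued quantity is a bounded integral operator applied to seed data and source functions, and the loss-of-derivatives is exactly accounted for by the regularity indices in Definition \ref{def:C3nullnorm} (e.g. $\Db^2\omb\in H^2_4$, $\Dbh\ab\in H^2_4$), so Lemma \ref{lem:codomainconstraints} and Lemma \ref{lem:estimatespfq} close the estimates; one invokes the higher-regularity ingoing norm $\XX^+(\HHb)$ of Definition \ref{def:nullnormhighreg} precisely so that $\tilde{\check m}_2$ on $\tilde S_2$ has enough regularity after the perturbation $\PP_{f,q}$. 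The main obstacle I anticipate is the $v$-weight bookkeeping for the $l\geq 3$ modes of $\Dbh\abd$ and the $l\geq 2$ modes of $\Db^2\ombd$: one must verify that the $v$-weights appearing in their representation formulas are genuinely new — linearly independent of the (finite) list of weights already used up in gluing the $C^2$-quantities and the other $C^3$-quantities — since it is exactly this independence that both makes the gluing possible for $l\geq 3$ (resp. $l\geq 2$) and forces the $l=2$ (resp. $l\leq 1$) modes to organise into the conserved charges $\mathbf U,\mathbf V$ (resp. $\tilde\QQ_{\Db^2\ombd^{[1]}}$). A secondary technical point is keeping track of the nonlinear source functions $\mfc_{\Db^2\ombd},\mfc_{\Db\abd}$ as genuine inhomogeneities (they are treated as given in this linearised proposition, but must map into $\ZZ_\CC$ with the right indices), which is handled by Lemma \ref{lem:codomainconstraints}.
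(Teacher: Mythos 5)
Your overall architecture is correct and matches the paper's: treat $D^2\omd$ and $\Dh\ad$ as trivial (the former from the seed, the latter algebraically from $D^2\chihd$ via $\dot\CC_{\ad}$), integrate the two new transport equations to get representation formulas, use an elliptic-invertibility argument mode-by-mode, locate the gauge-invariant obstructions at $l=2$ of $\Dbh\abd$ and $l\le1$ of $\Db^2\ombd$, and close the estimate via the regularity indices of Definition \ref{def:C3nullnorm} and Lemma \ref{lem:codomainconstraints}. However, there is one genuine and consequential gap in the ``decisive linear-independence bookkeeping'' you flag as the main issue, and it is precisely where your argument would fail.

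You assert that for $l\ge 3$ the $v$-weight supplied by $\int_1^2 v'^{-4}\chihd\,dv'$ in the $\Dbh\abd$-representation formula is ``linearly independent of the $v$-weights already consumed by the $C^2$-gluing.'' It is not: the $C^2$-representation formula \eqref{eq:repformDuomb} for $\Db\ombd$ already prescribes that same integral $\int_1^2 v'^{-4}\chihd\,dv'$ (through the operator $\Divdo(2-\Divdo\DDd_2^*)\Divdo$, which is injective on $l\ge 3$). If you glue $\Db\ombd^{[\ge3]}$ by the black-box $C^2$-result and simultaneously prescribe $\int_1^2 v'^{-4}\chihd\,dv'$ to glue $\Db\abd^{[\ge3]}_\psi$, you over-determine the seed. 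The paper's resolution is to \emph{change} the $C^2$-strategy at $l\ge3$: the two representation formulas combine into the conserved charge $\QQ_{\Db\abd^{[\ge3]}_\psi}$ of Lemma \ref{lem:consvlawQDuombDuab} (which couples $\Db\abd_\psi$, $\abd_\psi$, $\chibhd$, $\omtrchibd$, $\etad$, and $\Db\ombd$), the charge is matched at $S_2$ using the perturbation freedom $\partial_u^3\dot f^{[\ge3]}$ (left unused in the $C^2$-gluing, cf.\ the Remark after Proposition \ref{prop:gaugedepnewQ} and equations \eqref{eq:QDuombgauge}), and then $\Db\ombd^{[\ge3]}$ is recovered from the conservation law rather than from \eqref{eq:repformDuomb}, freeing the weight $v'^{-4}$ for $\Db\abd^{[\ge3]}_\psi$. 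Your treatment of the perturbation parameters ($\dot f(0),\dots,\partial_u^3\dot f(0)$ ``fixed by the $C^2$-gluing'', leaving only $\partial_u^4\dot f$) misses that only the $l\le2$ modes of $\partial_u^3\dot f$ are consumed by the $C^2$-gluing; the $l\ge3$ modes are the extra gauge freedom needed to resolve exactly this over-determination. Relatedly, your claim that matching $\tilde\QQ_{\Db^2\ombd^{[1]}}$ is ``automatic once the $C^2$-charges are matched'' is inconsistent with your later (correct) observation that $\partial_u^4\dot f$ is the remaining freedom: that charge is gauge-dependent and is matched precisely by choosing $\partial_u^4\dot f^{[\le1]}$ (Proposition \ref{prop:gaugedepnewQ}), not automatically. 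The weight $v'^{-5}$ for $\Db^2\ombd^{[\ge2]}$ is new, so that side of your bookkeeping is fine, and Lemma \ref{lem:existenceofOmdchihd} then gives the orthogonality of the seed prescriptions.
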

This section is organised as follows.
\begin{itemize}
\item In Section \ref{sec:C2lingluing}, we present the necessary formulas and estimates from the linearised $C^2$-null gluing problem of \cite{ACR2} which we will need to prove Proposition \ref{prop:lingluing}
\item In Section \ref{sec:C3repformulas} we derive transport equations and representation formulas for the novel $C^3$ components of the sphere data, $\check{x}$.
\item In Section \ref{sec:conservationlaws}, we analyse the novel conservation laws that appear in the representation formulas derived in Section \ref{sec:C3repformulas}.
\item In Section \ref{sec:Qgaugedependence}, we investigate the gauge dependence of the conserved quantities under the sphere perturbations of Section \ref{sec:spherepert}.
\item Finally, in Section \ref{sec:lingluingproof}, we apply the results of the preceding sections in order to construct a characteristic seed $(\Omd,\chihd)$ and associated solution to the null structure equations along $\HH_{[1,2]}$. The solution is constructed so that it satisfies the conditions \eqref{eq:spheredatamatching} and \eqref{eq:inhomconstraints}. Moreover, we prove the bound \eqref{eq:estimatelingluing} and conclude the proof of Proposition \ref{prop:lingluing}.
\end{itemize}

\subsection{The solution of the linearised $C^2$-null gluing problem}\label{sec:C2lingluing}

In this section, we collect the necessary results and estimates in order to apply the linearised $C^2$-null gluing problem of \cite{ACR2} as a black box. We may write the main result of their linear analysis (Theorem 4.1 of \cite{ACR2}) as follows.

\begin{proposition}\label{prop:C2lingluing}
Given 
\begin{itemize}
\item linearised $C^2$-sphere data $\xd_1\in X^8(S_1)$,
\item linearised $C^2$-matching data $\tilde{m}_2\in\ZZ_\MM(\tilde{S}_2)$,
\item linearised source functions $\mfc_{\dot{\varphi}}\in\ZZ_{\CC}$ on $\HH_{[1,2]}$ for $$\dot{\varphi}\in\{\dot{\phi},\dot{(\Om\tr\chi)},\chihd,\etad,\dot{(\Om\tr\chib)},\allowbreak\chibhd, \ombd,\Db\ombd, \ad, \abd\},$$
\end{itemize}
there exists linearised null data $\xd\in\XX(\HH_{[1,2]})$ and perturbation functions $\dot{f}$ and $\dot{q}$
such that 
\begin{align}\label{eq:C2spheredatamatching}
\xd\vert_{S_{1}} = \xd_{1}, \qquad
{m}(\xd\vert_{S_{2}} + \PP_{f,q}(\dot{f},\dot{q})) = \tilde{m}_{2},
\end{align}
and $\xd$ solves the linearised null structure equations for the source functions
\begin{align}\label{eq:C2inhomconstraints}
\dot{\CC}_{\dot{\varphi}}(\dot\mfx) = \mfc_{\dot\varphi}(\dot\mfx),
\end{align}
with $\dot{\varphi}\in\{\dot{\phi},\dot{(\Om\tr\chi)},\chihd,\etad,\dot{(\Om\tr\chib)},\allowbreak\chibhd, \ombd,\Db\ombd, \ad, \abd\}$. Furthermore, $\xd$, $\dot f$ and $\dot q$ are bounded as follows,
\begin{align}
\begin{aligned}
&\Vert\xd\Vert_{X^8_3(\HH_{[1,2]})} + \Vert \dot{f}\Vert_{\YY_{f^{(3)}}}+\Vert\dot{q}\Vert_{\YY_q} + \Vert \PP_{f,q}(\dot{f},\dot{q})\Vert_{X^8(S_{2})},\\
 \les& \Vert\xd_{1}\Vert_{X^8(S_{1})} + \Vert\tilde{m}_2\Vert_{\ZZ_{\mathfrak{M}}(\tilde{S}_{2})} + \Vert \mfc_{\dot\varphi}\Vert_{\ZZ_{\CC}}.
\end{aligned}\label{eq:C2estimatelingluing}
\end{align}
\end{proposition}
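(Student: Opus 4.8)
The plan is to obtain Proposition \ref{prop:C2lingluing} as a direct consequence of Theorem 4.1 of \cite{ACR2}, which establishes exactly the solvability of the inhomogeneous linearised $C^2$-null gluing problem at Minkowski; no new computation is required, and the task reduces to checking that the objects appearing in our formulation coincide with theirs. First I would set up the dictionary between the two notational systems: the $C^2$-sphere data tuple $x_{u,v}$ of \eqref{eq:C2spheredata} is, by the first remark following Definition \ref{def:spheredata}, the higher-order sphere data of \cite{ACR2}; the linearised constraint functions $\dot\CC_{\dot\varphi}$ listed in Appendix \ref{sec:linCC} are the linearisations of the same hierarchy of null structure equations used there (Section 2.7 of \cite{ACR2}); the perturbation map $\dot\PP_{f,q}$ of Lemma \ref{lem:pertlin} reproduces the $C^2$ transversal and angular perturbations of Appendix A of \cite{ACR2}, with the finitely many $u$-derivatives $f(0),\dots,\partial_u^3 f(0)$ measured in $\YY_{f^{(3)}}$ being exactly those that enter the $C^2$ formulas; and the matching map $m$ of \eqref{eq:C2matching}, assembled from the charges $\tilde\QQ_{\omb^{[\leq1]}}$, $\tilde\QQ_{\Db\omb^{[\leq1]}}$ together with the appropriate $l\geq1$ or $l\geq2$ modes of the remaining quantities, is precisely the quotient of the $C^2$-sphere data by the $10$-dimensional gauge-invariant charge space $(\mathbf E,\mathbf P,\mathbf L,\mathbf G)$.

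I would then recall, as the content imported wholesale from \cite{ACR2}, the three-step structure of their argument. One first integrates the linearised hierarchy $\dot\CC_{\dot\varphi}$ along the generators of $\HH_{[1,2]}$ from a linearised characteristic seed $(\Omd,\chihd)$, obtaining representation formulas expressing each component of $\xd$ as a $v$-weighted integral of $\Omd$ and $\chihd$ plus contributions from the source functions $\mfc_{\dot\varphi}$ and from the seed's boundary values on $S_1$; these $v$-weights are mutually linearly independent except where a coincidence produces a conservation law, which accounts for the finite list of conserved charges. Of these, ten are gauge-dependent and can be set to prescribed values by a suitable choice of $(\dot f,\dot q)$ via Lemma \ref{lem:pertlin}, while the remaining ten are the gauge-invariant $(\dot{\mathbf E},\dot{\mathbf P},\dot{\mathbf L},\dot{\mathbf G})$, which are not glued. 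Finally, the linear independence of the $v$-weights makes the map from the weighted moments of $(\Omd,\chihd)$, together with the seed's boundary data, onto the matching data $m$ surjective; inverting it, and using $(\dot f,\dot q)$ both to annihilate the gauge-dependent charges and to match boundary values at $S_1$ and $S_2$, produces the desired seed and null data, and the explicit polynomially $v$-weighted representation formulas together with the bound of Lemma \ref{lem:estimatespfq} yield the estimate \eqref{eq:C2estimatelingluing}.

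The only point requiring genuine care, and hence the main obstacle, is the bookkeeping of Sobolev orders and $v$-weights across this translation: one must check that the regularity indices built into $X^8(S_1)$, $X^8_3(\HH_{[1,2]})$, $\ZZ_\CC$ and $\ZZ_\MM$ dominate the losses incurred in the representation formulas and in the Hodge inversions of $\DDd_1^*$ and $\DDd_2^*$ on $(S,\gac)$, which lose derivatives on the low modes $l=0,1$, and that $\YY_{f^{(3)}}$ carries exactly the $u$-derivatives of $f$ that occur in the $C^2$ perturbation formulas of \cite{ACR2}. Since the norms of Section \ref{sec:C3spheredata} were defined precisely to be consistent with those of \cite{ACR2}, this verification is routine, and Proposition \ref{prop:C2lingluing} follows.
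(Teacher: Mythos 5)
Your proposal is correct and matches the paper's treatment: Proposition \ref{prop:C2lingluing} is stated in the paper as a restatement of Theorem 4.1 of \cite{ACR2} and its proof is simply referenced to Section 4 of that work, exactly the black-box citation you propose. The additional dictionary-checking and Sobolev-index bookkeeping you describe is implicit in the paper's identification of its $C^2$-sphere data, norms, constraints, and matching map with those of \cite{ACR2}, so your proposal is essentially a more explicit version of the same argument.
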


The proof of Proposition \ref{prop:C2lingluing} can be found in Section 4 of \cite{ACR2}. We will make use of the following representation formula for $\Db\ombd$ from equation (4.40) in \cite{ACR2}.
\begin{align} 
\begin{aligned} 
&\left[ \Db\ombd -\frac{1}{6v^3} \left(\Ldo-3\right)\QQ_{\omtrchibd}+ \frac{1}{2v^2} \Divdo \Divdo \QQ_\chibhd +\frac{1}{v^2} \Divdo \Divdo \DDd_2^* \di \QQ_\phid\right]_1^v\\
&-\left[ \frac{1}{4v^2} \Divdo \left(\di \Divdo -2 + \Divdo\DD_2^*\right) \left(\eta+ \frac{v}{2}\di\left(\omtrchid-\frac{4}{v}\Omd\right)\right) \right]_1^v\\
&+ \left[\frac{1}{8v^2} \Divdo \di \Divdo \Divdo \gdcd+ \frac{1}{2v^3}\left(\frac{1}{12}\Ldo\Ldo-\frac{1}{6}\Ldo+\frac{1}{4}\Divdo \Divdo \DDd_2^* \di - 1\right)\mfc_2\right]_1^v\\
=& \frac{1}{4} \Divdo \left(2-\Divdo \DDd_2^*\right) \Divdo \left(\int_1^v \frac{1}{v'^4}\chihd dv'\right)+\int_1^v h_{\Db\ombd} dv'.
\end{aligned}\label{eq:repformDuomb}
\end{align}
where $h_{\Db\ombd}$ is a function of the source terms $\mfc_{\varphi}$.

\subsection{Representation formulas for higher-order sphere data}\label{sec:C3repformulas}

In this section, we derive representation formulas for  $\Db\abd$ and $\Db^2\ombd$. In order to derive the following representation formulas, consider some given sphere data $\mfx_{0,1}$ on $S_{0,1}$ and linearised characteristic seed $(\dot{\Om},\dot{\chih})$ on $\HH_{0,[1,2]}$. The representation formulas construct the null data along  $\HH_{0,[1,2]}$ used to solve the linearised characteristic gluing problem. We will also make use of the linearised null structure equations \eqref{eq:newlinearisedconstraints} as source terms in the null structure equations. We therefore set $\dot{\CC}_{\dot{\varphi}}=\mfc_{\dot{\varphi}}$. For ease of notation, we evaluate at $u=0$ and set $\HH=\HH_{0,[1,2]}$.

\subsubsection{Analysis of $\protect\Db\abd$.} Recall from Lemma \ref{lem:linearisationofconstraints} that the linearised null constraint equation for $\Db\abd$ is 
\begin{align}
vD\left(\frac{1}{v}\Db\abd\right)+\frac{1}{v^{2}}(2\DDd_{2}^{*}\Divdo-1)\abd -\frac{6}{v}\DDd_{2}^{*}\left(\frac{1}{v^{2}}\Divdo\chibhd-\frac{1}{2}\di\omtrchibd-\frac{1}{v}\etad\right)=\mfc_{\Db\abd}.\label{eq:linNSEDuab}
\end{align}
Using the linearised null structure equations, we obtain the following lemma.

\begin{lemma}[Representation formula of $\Db\abd$]\label{lem:repformDuab}
The null structure equation \eqref{eq:linNSEDuab} implies the following null transport equation for $\Db\abd$
\begin{align}
\begin{aligned}
&D\left(\frac{1}{v}\Db\abd +\frac{1}{v^{2}}(1-2\DDd_{2}^{*}\Divdo)\abd\right)\\
&+D\left(\frac{1}{v^{2}}\DD_{2}^{*}(2-\Divdo\DDd_{2}^{*})\left\{2\Divdo\QQ_\chibhd-\frac{2}{3v}\di\QQ_{\dot{(\Om\tr\chib)}} -2\di(\Ldo+2)\QQ_\phid\right. \right.\\
&\phantom{+D\Biggl(} -\left(\Divdo \DDd_2^* + 1 + \di\Divdo\right) \left(\etad+ \frac{v}{2}\di\left(\omtrchid-\frac{4}{v}\Omd\right)\right)\\
&\phantom{+D\Biggl(}\left.\left.+ \left(\Divdo \DDd_2^* + 1 + \di\Divdo\right)\Divdo\gdcd - \frac{1}{2v}\left(\Divdo \DDd_2^* + 1 + \di\Divdo\right)\di\mfc_{\dot{(\Om\tr\chi)}}\right\}\right)\\
&=\frac{1}{v^{4}}\DDd_{2}^{*}(2-\Divdo\DDd_{2}^{*}) \left(\Divdo \DDd_2^* + 1 + \di\Divdo\right)\Divdo\chihd + h_{\Db\abd},\label{eq:repformulaforDbab}
\end{aligned}
\end{align}
where $h_{\Db\abd}$ is the source term
\begin{align*}
h_{\Db\abd} =& \frac{1}{v}\mfc_{\Db\abd}+\frac{1}{v^{2}}(1-2\DDd_{2}^{*}\Divdo)\mfc_{\abd}\\
&+\frac{1}{v^{2}}(2-\Divdo\DDd_{2}^{*})\left(2\Divdo(D\QQ_\chibhd) -\frac{2}{3v}\di(D\QQ_{\dot{(\Om\tr\chib)}})-2\di(\Ldo+2)(D\QQ_\phid)\right)\\
&+\frac{1}{v^{2}}(2-\Divdo\DDd_{2}^{*})\left(\Divdo \DDd_2^* + 1 + \di\Divdo\right)\left(\frac{1}{v^{2}}\Divdo\mfc_{\chihd} + \frac{3}{2v^{2}}\di\mfc_{\dot{(\Om\tr\chi)}}-\mfc_{\etad}-\di\mfc_{\phid}\right)
\end{align*}
and the quantities $\QQ_\phid$, $\QQ_{\dot{(\Om\tr\chib)}}$ and $\QQ_\chibhd$ are defined in Appendix \ref{sec:C2QQ}.
Integrating the transport equation yields the representation formula
\begin{align}
\begin{aligned}
&\left[ \frac{1}{v}\Db\abd +\frac{1}{v^{2}}(1-2\DDd_{2}^{*}\Divdo)\abd\right]_{1}^{v} \\
&+\left[ \frac{1}{v^{2}}\DD_{2}^{*}(2-\Divdo\DDd_{2}^{*})\left\{2\Divdo\QQ_\chibhd-\frac{2}{3v}\di\QQ_{\dot{(\Om\tr\chib)}} -2\di(\Ldo+2)\QQ_\phid\right. \right.\\
&\phantom{+D\Biggl(} -\left(\Divdo \DDd_2^* + 1 + \di\Divdo\right) \left(\etad+ \frac{v}{2}\di\left(\omtrchid-\frac{4}{v}\Omd\right)\right)\\
&\phantom{+D\Biggl(}\left.\left.+ \left(\Divdo \DDd_2^* + 1 + \di\Divdo\right)\Divdo\gdcd - \frac{1}{2v}\left(\Divdo \DDd_2^* + 1 + \di\Divdo\right)\di\mfc_{\dot{(\Om\tr\chi)}}\right\} \right]^v_{1}\\
&=\DDd_{2}^{*}(2-\Divdo\DDd_{2}^{*}) \left(\Divdo \DDd_2^* + 1 + \di\Divdo\right)\Divdo\left(\int_{1}^{v}\frac{1}{v'^{4}}\chihd dv'\right)+ \int_{1}^{v} h_{\Db\abd} dv'.
\end{aligned}\label{eq:repformDuab}
\end{align}
The linearised null structure equation further implies the estimate 
\begin{equation*}
\Vert \Db\abd \Vert_{\HH^{2}_{4}}\les  \Vert \chihd \Vert_{{H^8_3(\HH)}}+\Vert \Omd \Vert_{{H^8_4(\HH)}}+\Vert \dot{\mfx}_1 \Vert_{\XX^8(S_1)}+  \Vert (\mfc_i)_{\varphi} \Vert_{\mathcal{Z}_{\check{\CC}}}
\end{equation*}
\end{lemma}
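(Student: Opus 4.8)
The statement packages three claims: the algebraic reduction of the linearised constraint \eqref{eq:linNSEDuab} to the transport equation \eqref{eq:repformulaforDbab}; the integration of the latter to the representation formula \eqref{eq:repformDuab}; and the $H^2_4(\HH)$ bound on $\Db\abd$. I would establish them in this order. Conceptually only the first requires any insight, and it mirrors the derivation of the $C^2$ representation formula \eqref{eq:repformDuomb} for $\Db\ombd$ in \cite{ACR2}, one step higher in the hierarchy.

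\textbf{Step 1: the transport equation.} Starting from \eqref{eq:linNSEDuab}, divide by $v$ to isolate $D(v^{-1}\Db\abd)$ against $-v^{-3}(2\DDd_2^*\Divdo-1)\abd$, the term $6v^{-2}\DDd_2^*\big(v^{-2}\Divdo\chibhd-\tfrac12\di\omtrchibd-v^{-1}\etad\big)$, and the source $v^{-1}\mfc_{\Db\abd}$. I then substitute, term by term, the linearised $C^2$ null structure equations of Appendix \ref{sec:linCC} — which express $D$ of a $v$-weighted version of each of $\abd,\chibhd,\omtrchibd,\etad$ in terms of lower-order hierarchy quantities plus a source $\mfc_\varphi$ — together with the defining relations of the conserved charges $\QQ_\phid$, $\QQ_{\dot{(\Om\tr\chib)}}$, $\QQ_\chibhd$ from Appendix \ref{sec:C2QQ} and the Gauss--Codazzi-type identities already used for \eqref{eq:introNSE2}. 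Since at Minkowski $D=\Lied_{\partial_v}$ acts componentwise on $S$-tangent tensors in the fixed angular frame and hence commutes with the $v$-independent round-metric operators $\DDd_2^*,\Divdo,\di$, every term can be brought under a single $D$; the point is to verify that what remains is precisely $D$ applied to the combination differentiated on the left-hand side of \eqref{eq:repformulaforDbab}, plus contributions that involve only the source functions $\mfc_\varphi$ and their $D$-derivatives (these assemble into $h_{\Db\abd}$), plus one $\chihd$-term. As emphasised in observations (1)--(2) of the introduction, the $v$-weights of the $\chihd$-contributions fed in from the $C^2$ representation formulas must cancel except for the weight $v^{-4}$, leaving exactly $v^{-4}\,\DDd_2^*(2-\Divdo\DDd_2^*)(\Divdo\DDd_2^*+1+\di\Divdo)\Divdo\chihd$; this cancellation is forced by the structure of the $C^2$ charges and is what makes \eqref{eq:repformulaforDbab} close.

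\textbf{Steps 2 and 3: integration and estimate.} Integrating \eqref{eq:repformulaforDbab} in $v$ from $1$ to $v$ replaces $D(\cdot)$ by $[\cdot]_1^v$ (the fundamental theorem of calculus, using the componentwise action of $D$), and commuting the $v$-independent operators past $\int_1^v dv'$ produces $\DDd_2^*(2-\Divdo\DDd_2^*)(\Divdo\DDd_2^*+1+\di\Divdo)\Divdo\big(\int_1^v v'^{-4}\chihd\,dv'\big)$ together with $\int_1^v h_{\Db\abd}\,dv'$; this is \eqref{eq:repformDuab}. For the bound, solve \eqref{eq:repformDuab} for $v^{-1}\Db\abd$: the boundary data at $v=1$ are controlled by $\Vert\dot\mfx_1\Vert_{X^8(S_1)}$ (every term of the bracket at $v=1$ is, or is algebraically determined by, $C^3$-sphere data on $S_1$, apart from a source value bounded in the $\mfc$-norm); the $C^2$ quantities $\abd,\chibhd,\omtrchibd,\etad,\gdcd$ and the charges along $\HH$ that appear are, via the $C^2$ representation formulas of \cite{ACR2} (see also Proposition \ref{prop:C2lingluing}), bounded by $\Vert\chihd\Vert_{H^8_3(\HH)}+\Vert\Omd\Vert_{H^8_4(\HH)}+\Vert\dot\mfx_1\Vert_{X^8(S_1)}+\Vert\mfc_\varphi\Vert_{\ZZ_{\check{\CC}}}$; the sixth-order angular operator hitting $\int_1^v v'^{-4}\chihd\,dv'$ accounts exactly for the $H^8\to H^2$ loss in $\chihd$; and $\int_1^v h_{\Db\abd}\,dv'$ is bounded using the derivative allowances of the space $\ZZ_{\check{\CC}}$. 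To recover the full $H^2_4(\HH)$ norm I apply up to three $D$-derivatives to \eqref{eq:linNSEDuab} (again using that $D$ commutes with the angular operators and that the $D$-derivatives of the $C^2$ quantities obey the same estimates), obtaining $D^j(v^{-1}\Db\abd)$ in $H^2(S_v)$ for $j=1,\dots,4$, integrate in $v$, and add the $j=0$ contribution from \eqref{eq:repformDuab}.

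\textbf{Main obstacle.} The only genuine work is Step 1: identifying the correct $v$-weighted combination (read off from the $C^2$ representation formulas, in particular \eqref{eq:repformDuomb}) that renders the equation a total $D$-derivative, and then checking the cancellation of all $\chihd$-weights other than $v^{-4}$ and the precise sixth-order angular operator that survives. This is a lengthy but mechanical substitution-and-regrouping computation dictated by the $v$-weight-matching principle; Steps 2 and 3 are then routine.
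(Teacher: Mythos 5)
Your proposal is correct and matches the intended derivation: the paper presents Lemma 5.4 with no explicit proof beyond the remark ``Using the linearised null structure equations, we obtain the following lemma,'' and the substitution-and-regrouping computation you describe --- divide \eqref{eq:linNSEDuab} by $v$, feed in the $\dot\CC_{\dot\varphi}$ equations and the $C^2$ charges $\QQ_\phid,\QQ_{\dot{(\Om\tr\chib)}},\QQ_\chibhd$, commute $D$ past the $v$-independent angular operators, verify the $\chihd$-weight cancellation leaving only $v^{-4}$, then integrate --- is precisely the mechanical check the paper leaves implicit. Your observation that the estimate follows by iterating the transport equation (to control $D^j\Db\abd$ for $j\le 4$) together with Proposition \ref{prop:C2lingluing} used as a black box is likewise the paper's intended argument.
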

\begin{remark}
The quantities $\QQ_\phid$, $\QQ_{\dot{(\Om\tr\chib)}}$ and $\QQ_\chibhd$ are themselves conserved charges appearing in the $C^2$-linearised null gluing problem. In the notation of \cite{ACR2}, they correspond to $\QQ_1$, $\QQ_2$ and $\QQ_3$, respectively.
\end{remark}

\subsubsection{Analysis of $\protect\Db^{2}\protect\omb$} Recall from Lemma \ref{lem:linearisationofconstraints} that the linearised null constraint equation for $\Db^{2}\ombd$ is
\begin{align}
\begin{aligned}
& D(\Db^{2}\dot{\om})-\frac{12}{v^{2}}\left(\Kd+\frac{1}{2v}\dot{(\Om\tr\chib)}-\frac{1}{2v}\dot{(\Om\tr\chi)} +\frac{2}{v^{2}}\dot{\Om}\right)\\
&= \frac{8}{v^{3}}\Divdo\left(\frac{1}{v^{2}}\Divdo\dot{\chibh}- \frac{1}{2}\di\dot{(\Om\tr\chib)}-\frac{1}{v}\dot{\et}\right)-\frac{1}{v^{4}}\Divdo\Divdo\dot{\ab} +\mfc_{\Db^2\ombd}
\end{aligned}\label{eq:linNSEDu2omb}
\end{align}

The null structure equations imply the following lemma. 
\begin{lemma}[Representation formula for $\Db^{2}\omb$]\label{lem:repformDu2omb}
For the linearised null structure equation \eqref{eq:linNSEDu2omb} for $\Db^{2}\ombd$, we have the following transport equation
\begin{align}
\begin{aligned}
&D\left(\Db^{2}\ombd - \frac{1}{2v^{3}}\Divdo\Divdo\abd + \frac{1}{3v^{3}}\Divdo(8-\Divdo\DDd_{2}^{*})\Divdo \QQ_\chibhd \right)\\
-&D\left(\frac{1}{4v^{4}}\left(2(\Ldo-3) +\frac{1}{4}\Ldo(\Ldo+2) \right)\QQ_{\dot{(\Om\tr\chib)}}\right)\\
-&D\left(\frac{1}{3v^{3}}\Divdo\left(8-\Divdo\DDd_{2}^{*}\right)\di(\Ldo+2)\QQ_\phid\right)\\
-&D\left(\frac{1}{5v^{5}}\Divdo\left(8\Divdo\DDd_{2}^{*}-4+8\di\Divdo+\frac{5}{3}\Divdo\DDd_{2}^{*}\left(\Divdo \DDd_2^* + 1 + \di\Divdo\right)\right)\dot{\mathfrak{B}}\right)\\
+&D\left(\frac{1}{6v^{3}}\Divdo(8-\Divdo\DDd_{2}^{*})\left(\Divdo \DDd_2^* + 1 + \di\Divdo\right)\Divdo\gdcd\right)\\
+&D\left(\frac{1}{5v^{4}}\Divdo\left(4\Divdo\DDd_{2}^{*}-2+4\di\Divdo+\frac{5}{6}\Divdo\DDd_{2}^{*}\left(\Divdo \DDd_2^* + 1 + \di\Divdo\right)\right)\di\mfc_{\dot{(\Om\tr\chi)}}\right)\\
=&\frac{1}{15}\Divdo\left(36+2(8-\Divdo\DDd_{2}^{*})\left(\Divdo \DDd_2^* + 1 + \di\Divdo\right)\right)\Divdo\left(\frac{1}{v^{5}}\chih\right) + h_{\Db^{2}\ombd}
\end{aligned}\label{eq:repDu2ombD}
\end{align}
where $\dot{\mathfrak{B}}$ denotes the quantity
\begin{equation*}
\dot{\mathfrak{B}}=v^2\etad+ \frac{v^3}{2}\di\left(\omtrchid-\frac{4}{v}\Omd\right)
\end{equation*}
and the source term $h_{\Db^{2}\ombd}$ is
\begin{align*}
&h_{\Db^{2}\ombd}\\
=&\mfc_{\Db^2\ombd}-\frac{1}{2v^{3}}\Divdo\Divdo\mfc_{\abd}+\frac{1}{3v^{3}}\Divdo(8-\Divdo\DDd_{2}^{*})\Divdo(D\QQ_\chibhd)\\
&-\frac{1}{4v^{4}}\left(2(\Ldo-3) +\frac{1}{4}\Ldo(\Ldo+2) \right)(D\QQ_{\dot{(\Om\tr\chib)}})-\frac{1}{3v^{3}}\Divdo\left(8-\Divdo\DDd_{2}^{*}\right)\di(\Ldo+2)(D\QQ_\phid)\\
&-\frac{1}{5v^{3}}\Divdo\left(8\Divdo\DDd_{2}^{*}-4+8\di\Divdo+\frac{5}{3}\Divdo\DDd_{2}^{*}\left(\Divdo \DDd_2^* + 1 + \di\Divdo\right)\right)\left(\mfc_{\etad}+\di\mfc_{\phid}+\frac{1}{2v^{2}}\di\mfc_{\dot{(\Om\tr\chi)}}\right)\\
&+\frac{1}{6v^{5}}\Divdo(8-\Divdo\DDd_{2}^{*})\left(\Divdo \DDd_2^* + 1 + \di\Divdo\right)\Divdo\mfc_{\chihd}
\end{align*}
Integrating \eqref{eq:repDu2ombD} yields the desired representation formula. The null structure equation \eqref{eq:linNSEDu2omb} implies that the following estimate holds.
\begin{equation*}
\Vert \Db^{2}\ombd \Vert_{\HH^{2}_{4}}\les  \Vert \chihd \Vert_{{H^8_3(\HH)}}+\Vert \Omd \Vert_{{H^8_4(\HH)}}+\Vert \dot{\mfx}_1 \Vert_{\XX^8(S_1)}+  \Vert (\mfc_i)_{\varphi} \Vert_{\mathcal{Z}_{\check{\CC}}}.
\end{equation*}\\
\end{lemma}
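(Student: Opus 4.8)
The plan is to reduce the lemma to the same telescoping structure that underlies Lemma~\ref{lem:repformDuab} for $\Db\abd$. The aim is to rewrite the linearised constraint \eqref{eq:linNSEDu2omb} as a pure transport equation in which $D$ applied to a master quantity equals a fixed divergence-type operator on the round sphere applied to $\frac{1}{v^{5}}\chihd$, plus a source term $h_{\Db^{2}\ombd}$ that is linear in the inhomogeneities $\mfc_{\varphi}$ (and vanishes in the homogeneous case). The master quantity will be a $v$-weighted combination of $\Db^{2}\ombd$, $\abd$, the $C^{2}$ conserved charges $\QQ_{\phid},\QQ_{\dot{(\Om\tr\chib)}},\QQ_{\chibhd}$ (the quantities $\QQ_{1},\QQ_{2},\QQ_{3}$ of \cite{ACR2}, recalled in Appendix~\ref{sec:C2QQ}), the partially conserved quantity $\dot{\mathfrak{B}}=v^{2}\etad+\frac{v^{3}}{2}\di(\omtrchid-\frac{4}{v}\Omd)$, and the remaining $C^{2}$-data $\gdcd$ and $\mfc_{\dot{(\Om\tr\chi)}}$. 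Once the transport equation \eqref{eq:repDu2ombD} is in place, integrating it in $v$ from $1$ gives the representation formula, and the estimate follows by standard transport and Sobolev book-keeping.

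The mechanism for \eqref{eq:repDu2ombD} is that each term on the right-hand side of \eqref{eq:linNSEDu2omb} other than the characteristic-seed quantities $\omtrchid,\Omd$ --- namely $\Kd$, $\dot{(\Om\tr\chib)}$, $\chibhd$, $\etad$ and $\abd$ --- is itself governed by a linearised null structure equation: the linearised Gauss equation for $\Kd$ together with $\dot{\CC}_{\dot{(\Om\tr\chib)}},\dot{\CC}_{\chibhd},\dot{\CC}_{\etad},\dot{\CC}_{\abd}$ from Appendix~\ref{sec:linCC} (cf.\ Section~2.7 of \cite{ACR2}), whose right-hand side is, modulo the source functions, a $v$-weighted divergence of $\chihd$. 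Integrating these subsidiary equations expresses each such quantity through its value at $v=1$, the charges above, an integral $\int_{1}^{v}(v')^{-k}(\mathrm{div})\chihd\,dv'$ carrying a definite $v$-weight, and source integrals; substituting these into \eqref{eq:linNSEDu2omb}, integrating once in $v$, and then differentiating back in $v$, one collects all terms into the form \eqref{eq:repDu2ombD}. In doing so, the boundary and charge contributions reassemble the master quantity; the various $\chihd$-integrals combine --- using that $D$ commutes with the $v$-independent round-sphere operators $\Divdo,\di,\DDd_{2}^{*},\Ldo$ up to explicit $v$-weights --- into the single term $\frac{1}{15}\Divdo\big(36+2(8-\Divdo\DDd_{2}^{*})(\Divdo\DDd_{2}^{*}+1+\di\Divdo)\big)\Divdo\big(\frac{1}{v^{5}}\chihd\big)$; and the terms proportional to the source functions (in particular $D\QQ_{\phid}$, $D\QQ_{\dot{(\Om\tr\chib)}}$, $D\QQ_{\chibhd}$, which vanish in the homogeneous case) assemble into $h_{\Db^{2}\ombd}$.

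For the representation formula I would integrate \eqref{eq:repDu2ombD} from $1$ to $v\in[1,2]$, the boundary term at $v=1$ being carried by $\Db^{2}\ombd(1)$, $\abd(1)$, the charges evaluated at $v=1$, and the source functions. For the estimate $\Vert\Db^{2}\ombd\Vert_{H^{2}_{4}(\HH)}\les\Vert\chihd\Vert_{H^{8}_{3}(\HH)}+\Vert\Omd\Vert_{H^{8}_{4}(\HH)}+\Vert\dot{\mfx}_{1}\Vert_{\XX^{8}(S_{1})}+\Vert(\mfc_{i})_{\varphi}\Vert_{\mathcal{Z}_{\check{\CC}}}$ I would argue directly from the representation formula on each sphere $S_{0,v}$ with $v\in[1,2]$: the $v=1$ boundary term is bounded by $\Vert\dot{\mfx}_{1}\Vert_{\XX^{8}(S_{1})}$ and source norms (the charges at $v=1$ are controlled by the $C^{2}$-part of $\dot{\mfx}_{1}$, and $\dot{\mathfrak{B}}(1)$ by $\etad(1),\omtrchid(1),\Omd$); the $\chihd$-integral by $\int_{1}^{v}(v')^{-5}\Vert\chihd\Vert_{H^{8}(S_{0,v'})}\,dv'\les\Vert\chihd\Vert_{H^{8}_{3}(\HH)}$; and $\int_{1}^{v}h_{\Db^{2}\ombd}\,dv'$ by the linearity of $h_{\Db^{2}\ombd}$ in the source functions, with $\Vert\Omd\Vert_{H^{8}_{4}(\HH)}$ entering through the $\Omd$- and $\chihd$-content of $\dot{\mathfrak{B}}$ and of $\mfc_{\dot{(\Om\tr\chi)}}$. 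To reach the top $D$-derivatives demanded by the $H^{2}_{4}(\HH)$ norm, one applies $D^{j}$, $j\leq3$, to \eqref{eq:repDu2ombD} before integrating in $v$ --- each extra $D$ costs one more $D$-derivative of $\chihd$ and of the source functions, which is exactly why $\Vert\chihd\Vert_{H^{8}_{3}(\HH)}$ and the $C^{3}$ source norm suffice --- and then integrates the resulting pointwise-in-$v$ bounds over $[1,2]$.

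The main obstacle is the exact book-keeping of the operator coefficients in \eqref{eq:repDu2ombD}, and in particular the verification that the $\chihd$-integrals produced by the five subsidiary equations assemble into precisely $\frac{1}{15}\Divdo\big(36+2(8-\Divdo\DDd_{2}^{*})(\Divdo\DDd_{2}^{*}+1+\di\Divdo)\big)\Divdo\big(\frac{1}{v^{5}}\chihd\big)$: this is the hierarchical $v$-weight matching on which the whole construction rests, and it is the one place where an error would feed into a wrong representation formula --- and hence an incorrect novel conserved charge. Everything else --- commuting $D$ through the round-sphere operators, the transport estimates, the Sobolev accounting, and the handling of the $l\leq1$ spherical-harmonic modes annihilated by $\Divdo$ --- is routine and structurally identical to the $\Db\abd$ analysis of Lemma~\ref{lem:repformDuab}.
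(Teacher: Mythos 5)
Your proposal follows essentially the same route as the paper: the transport equation \eqref{eq:repDu2ombD} is obtained by feeding the subsidiary linearised null structure equations for $\Kd$, $\omtrchibd$, $\chibhd$, $\etad$ and $\abd$ (equivalently the charges $\QQ_{\phid},\QQ_{\omtrchibd},\QQ_{\chibhd}$ and the quantity $\dot{\mathfrak{B}}$) into \eqref{eq:linNSEDu2omb} and collecting all the $\chihd$-contributions at the single weight $1/v'^{5}$, after which integration in $v$ gives the representation formula and the estimate follows by standard transport and Sobolev book-keeping. The paper likewise states the formula without displaying the coefficient computation, so the only thing separating your plan from a complete proof is the explicit operator book-keeping that you yourself flag; your identification of the ingredients (the hierarchical use of the subsidiary equations, the $1/v^{5}$ weight, the commutation of $D$ with the $v$-independent round-sphere operators, and the derivative count behind $H^{2}_{4}(\HH)$ versus $H^{8}_{3}(\HH)$) is accurate.
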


\subsection{Analysis of novel conservation laws}\label{sec:conservationlaws}

By analysing the kernels of the differential operators on the right-hand sides of the representation formulae \eqref{eq:repformulaforDbab} and \eqref{eq:repDu2ombD}, we can derive conservation laws along $\HH_{0,[1,2]}$. Moreover, quantities for which the $v'$-weight appearing in the $\chihd$ integrals of the right-hand sides of representation formula can be combined into conservation laws. Beginning with the right-hand side of \eqref{eq:repformulaforDbab}, performing a spherical harmonic decomposition on the operator $$2-\Divdo\DDd_2^*,$$ see \ref{sec:sphharmanalysis}, we obtain that the kernel consists of the set of vectorfields 
\begin{equation*}
\{X:X=X^{[2]}\}.
\end{equation*}
Projecting the representation formula \eqref{eq:repformulaforDbab} onto the $l=2$ spherical harmonic modes yields the following conservation law.
\begin{lemma}[Conservation law for $\Db\abd$]\label{lem:consvlawsQDuab}
The charge $\QQ_{\Db\abd^{[2]}}$ defined by
\begin{equation}
\QQ_{\Db\abd^{[2]}} := \frac{1}{v}\Db\abd^{[2]} +\frac{1}{v^{2}}(1-2\DDd_{2}^{*}\Divdo)\abd^{[2]},
\end{equation}
is conserved. It satisfies the equation
\begin{equation*}
D\QQ_{\Db\abd^{[2]}}  = \frac{1}{v}\mfc_{\Db\abd}^{[2]}+\frac{1}{v^{2}}(1-2\DDd_{2}^{*}\Divdo)\mfc_{\abd}^{[2]}.
\end{equation*}
\end{lemma}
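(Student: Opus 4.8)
The plan is to obtain the stated transport equation for $\QQ_{\Db\abd^{[2]}}$ by projecting the transport identity \eqref{eq:repformulaforDbab} of Lemma~\ref{lem:repformDuab} onto the $\ell=2$ spherical harmonic sector, exploiting that this sector is exactly the kernel of the operator $2-\Divdo\DDd_2^*$ which appears as an overall left factor in every term of \eqref{eq:repformulaforDbab} except the three ``good'' ones.

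First I would isolate the spectral input. On the round unit sphere, combining the curvature commutation $\Nabs^B\Nabs_A V_B = \Nabs_A\Divdo V + V_A$ (the Gauss curvature of $\gac$ equals $1$) with the definition of $\DDd_2^*$ gives $\Divdo\DDd_2^* V = -\frac{1}{2}(\Nabs^B\Nabs_B + 1)V$, so on a vectorfield $V$ supported in spherical mode $\ell$ one has $\Divdo\DDd_2^* V = \big(\frac{1}{2}\ell(\ell+1)-1\big)V$, which equals $2V$ precisely for $\ell=2$; hence $\ker(2-\Divdo\DDd_2^*) = \{X : X = X^{[2]}\}$ among $S$-tangential vectorfields, as already recorded above (see Appendix~\ref{sec:sphharmanalysis}). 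I would also note that, since $\gac$ is independent of $v$, the mode projection $[\,\cdot\,]^{[\ell]}$ commutes with $D$ — which at Minkowski acts as $\Lied_{\partial_v}$, i.e.\ componentwise $\partial_v$ — and with every $\gac$-covariant angular operator, in particular with $\Divdo$, $\DDd_2^*$, $\di$, $\Ldo$ and $1-2\DDd_2^*\Divdo$.

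Then I project \eqref{eq:repformulaforDbab} onto $\ell=2$. Every term apart from $D\big(\frac1v\Db\abd + \frac1{v^2}(1-2\DDd_2^*\Divdo)\abd\big)$ and the two source contributions $\frac1v\mfc_{\Db\abd}$, $\frac1{v^2}(1-2\DDd_2^*\Divdo)\mfc_{\abd}$ inside $h_{\Db\abd}$ carries an overall factor $\DDd_2^*(2-\Divdo\DDd_2^*)$ applied to an $S$-tangential vectorfield (the brace on the left-hand side; the vectorfield $(\Divdo\DDd_2^*+1+\di\Divdo)\Divdo\chihd$ on the right; and the analogous vectorfields inside $h_{\Db\abd}$). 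After projection these arguments are $\ell=2$ vectorfields, which $2-\Divdo\DDd_2^*$ annihilates, and the subsequent mode-preserving $\DDd_2^*$ keeps the result zero; so all of these terms drop. Using that $\abd$ and $\Db\abd$ are symmetric trace-free (hence carry only $\ell\geq2$ modes) and that projection commutes with $D$ and with $1-2\DDd_2^*\Divdo$, the surviving identity is precisely $D\big(\frac1v\Db\abd^{[2]} + \frac1{v^2}(1-2\DDd_2^*\Divdo)\abd^{[2]}\big) = \frac1v\mfc_{\Db\abd}^{[2]} + \frac1{v^2}(1-2\DDd_2^*\Divdo)\mfc_{\abd}^{[2]}$, i.e.\ $D\QQ_{\Db\abd^{[2]}}$ equals the asserted right-hand side; when the linearised null structure equations are homogeneous this vanishes and $\QQ_{\Db\abd^{[2]}}$ is constant along $\HH_{0,[1,2]}$, i.e.\ conserved. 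The only non-routine ingredient is the kernel identification for $2-\Divdo\DDd_2^*$; the rest is the bookkeeping of confirming that each term of \eqref{eq:repformulaforDbab} other than the three advertised ones factors through that kernel, the one point needing care being that $[\,\cdot\,]^{[2]}$ really does pass through the entire chain of angular operators and through $D$, which holds because $\gac$ does not depend on $v$.
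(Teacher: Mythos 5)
Your proposal is correct and follows essentially the same route as the paper: the paper also derives the lemma by projecting the transport identity \eqref{eq:repformulaforDbab} onto $\ell=2$, after identifying $\ker(2-\Divdo\DDd_2^*)$ as the $\ell=2$ vectorfields (Appendix \ref{sec:sphharmanalysis}). Your explicit verification of the eigenvalue $\Divdo\DDd_2^*=\tfrac12\ell(\ell+1)-1$ and the commutation of the mode projection with $D$ and the $\gac$-covariant operators supplies the justification the paper leaves implicit.
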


\begin{remark}
The charge $\QQ_{\Db\abd^{[2]}}$ is precisely the linearisation the combination of linearisations of $\mathbf{U}$ and $\mathbf{V}$, i.e. 
\begin{equation*}
\QQ_{\Db\abd^{[2]}} = \sum_{-2\leq m \leq 2} \mathbf{U}^{m}\psi^{2m}+\mathbf{V}^{m}\phi^{2m}
\end{equation*}
\end{remark}

 Comparing the representation formulas for $\Db\ombd$ from \eqref{eq:repformDuomb} and for $\Db\abd$ from \eqref{eq:repformulaforDbab}, we see that the $v'$-weight appearing in both representation formulas is $1/v'^4$. Thus, these representation formulas can be combined into a conservation law. The operator $2-\Divd\DDd_2^*$ on the right-hand side has a kernel, see Appendix \ref{sec:sphharmanalysis} and thus for the $l=2$ modes there exists the conserved charge\footnote{This is precisely the definition of $\QQ_7$ in \cite{ACR2}}

\begin{align}
\begin{aligned}
\QQ_{\Db\ombd^{[2]}} :=& \Db\ombd^{[2]} +\frac{3}{2v^3} \QQ_{\omtrchibd}^{[2]}+ \frac{1}{2v^2} \Divdo \Divdo \QQ_\chibhd^{[2]} -\frac{12}{v^2} \QQ_\phid^{[2]} \\
&+ \frac{3}{2v^2} \Divdo \left(\eta+ \frac{v}{2}\di\left(\omtrchid-\frac{4}{v}\Omd\right)\right)^{[2]} - {\frac{3}{4v^2} \Divdo \Divdo \gdcd }^{[2]}. 
\end{aligned}
\end{align}

Combining the representation formula \eqref{eq:repformulaforDbab} and \eqref{eq:repformDuomb} for modes $l\geq 3$ yields the following lemma.
\begin{lemma}[Conservation law for $\Db\ombd$ and $\Db\abd$]\label{lem:consvlawQDuombDuab}
The charge $\QQ_{\Db\abd^{[\geq3]}_\psi}$ defined by
\begin{align}
\begin{aligned}
\QQ_{\Db\abd^{[\geq3]}_\psi}:=&\frac{1}{v}\Db\abd_{\psi}^{[\geq 3]}+\frac{1}{v^{2}}(1-2\DDd_{2}^{*}\Divdo)\abd_{\psi}^{[\geq 3]}\\ &+2\DDd_{2}^{*}\left(\frac{3}{v}\left(\frac{1}{v^{2}}\Divdo\chibhd-\frac{1}{2}\di\dot{(\Om\tr\chib)}-\frac{1}{v}\dot{\et}\right)+\DDd_{1}^{*}(\Db\ombd,0)   \right)_{\psi}^{[\geq 3]}
\end{aligned}
\end{align}
is conserved. It satisfies the equations 
\begin{align*}
D\QQ_{\Db\abd^{[\geq3]}_\psi}  =& \frac{1}{v}(\mfc_{\Db\abd})_{\psi}^{[\geq 3]}+\di(\mfc_{\Db\ombd})_\psi^{[\geq 3]}+\frac{1}{v^{2}}(1-2\DDd_{2}^{*}\Divdo)(\mfc_{\abd})_{\psi}^{[\geq 3]}+\frac{3}{v^{3}}\Divdo(\mfc_{\chibhd})_{\psi}^{[\geq 3]}\\
& -\frac{3}{2v}\di(\mfc_{\dot{(\Om\tr\chib)}})_{\psi}^{[\geq 3]}-\frac{3}{v^{2}}(\mfc_{\etad})_{\psi}^{[\geq 3]}
\end{align*}
\end{lemma}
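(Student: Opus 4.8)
The plan is to obtain the conservation law by eliminating the common $\chihd$-term between the transport equation \eqref{eq:repformulaforDbab} for $\Db\abd$ (from Lemma~\ref{lem:repformDuab}) and the representation formula \eqref{eq:repformDuomb} for $\Db\ombd$ (from \cite{ACR2}), on the $l\geq 3$ spherical-harmonic modes. As noted in the text, the $\chihd$-integral on the right of \eqref{eq:repformulaforDbab} and the one on the right of \eqref{eq:repformDuomb} both carry the $v'$-weight $1/v'^4$, so a suitable combination of the two equations contains no $\chihd$ at all, and the remaining combination of quantities is then conserved up to source terms.

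Concretely, I would first write \eqref{eq:repformulaforDbab} as a transport equation $D\big(\frac{1}{v}\Db\abd+\frac{1}{v^2}(1-2\DDd_2^*\Divdo)\abd+B_{\abd}\big)=\frac{1}{v^4}\DDd_2^*(2-\Divdo\DDd_2^*)(\Divdo\DDd_2^*+1+\di\Divdo)\Divdo\chihd+h_{\Db\abd}$, where $B_{\abd}$ denotes the bracket of $C^2$-charges $\QQ_\phid,\QQ_{\omtrchibd},\QQ_\chibhd$, of $\gdcd$, and of the linearised quantities $\etad,\omtrchid,\Omd$ and source $\mfc_{\dot{(\Om\tr\chi)}}$ appearing in that equation. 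Differentiating \eqref{eq:repformDuomb} in $v$ — all Hodge operators there are built from the $v$-independent round metric $\gac$ and hence commute with $D$ — yields the scalar transport equation $D(\Db\ombd+B_{\ombd})=\frac{1}{4v^4}\Divdo(2-\Divdo\DDd_2^*)\Divdo\,\chihd+h_{\Db\ombd}$. I would then project both equations onto the $\psi$-part of the $l\geq 3$ modes. The crucial simplification is that this $\psi$-projection (the $\Db\abd$ equation is already valued in symmetric traceless $2$-tensors, and $\DDd_2^*\DDd_1^*(\,\cdot\,,0)$ applied to the $\Db\ombd$ equation produces such a tensor) sees only the electric channel; in that channel $\Divdo\DDd_2^*$, $\di\Divdo$ and $\Ldo$ are simultaneously diagonalised by spherical harmonics, so the two $\chihd$-terms become proportional with an explicit $l$-dependent ratio. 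Hence, applying $2\DDd_2^*\DDd_1^*(\,\cdot\,,0)$ to the $\Db\ombd$ equation — using $\DDd_1^*(\,\cdot\,,0)=-\di$ — and adding it to the $\Db\abd$ equation, with the mode-dependent factor that, being a function of $\Ldo$, is itself expressible through the Hodge operators, the $\chihd$-terms cancel on $l\geq 3$.

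The technical heart is then to check that what is left is exactly the stated lemma. On the left-hand side, the combination $B_{\abd}+2\DDd_2^*\DDd_1^*(B_{\ombd},0)$ must collapse so that the conserved quantity equals $\QQ_{\Db\abd^{[\geq3]}_\psi}$; this uses the definitions of $\QQ_\phid,\QQ_{\omtrchibd},\QQ_\chibhd$ from Appendix~\ref{sec:C2QQ}, the linearised null structure equations of Appendix~\ref{sec:linCC}, and the Hodge identities of Appendix~\ref{sec:sphharm} — the $2\DDd_2^*\DDd_1^*(\Db\ombd,0)$ piece coming from the leading $\Db\ombd$-term of the $\Db\ombd$ equation, and the $2\DDd_2^*\big(\frac{3}{v}(\frac{1}{v^2}\Divdo\chibhd-\frac{1}{2}\di\omtrchibd-\frac{1}{v}\etad)\big)$ piece emerging from the combined $\QQ_\chibhd,\QQ_{\omtrchibd},\QQ_\phid$ terms. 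On the right-hand side, the source $D\QQ_{\Db\abd^{[\geq3]}_\psi}$ is read off: the $D\QQ_\phid$-, $D\QQ_{\omtrchibd}$- and $D\QQ_\chibhd$-type pieces inside $h_{\Db\abd}$ and inside $2\DDd_2^*\DDd_1^*(h_{\Db\ombd},0)$ cancel in pairs, and invoking $\dot{\CC}_{\varphi}=\mfc_{\varphi}$ for $\varphi\in\{\Db\abd,\abd,\chibhd,\omtrchibd,\etad,\Db\ombd\}$ the remainder collapses to $\frac{1}{v}(\mfc_{\Db\abd})_\psi^{[\geq3]}+\di(\mfc_{\Db\ombd})_\psi^{[\geq3]}+\frac{1}{v^2}(1-2\DDd_2^*\Divdo)(\mfc_{\abd})_\psi^{[\geq3]}+\frac{3}{v^3}\Divdo(\mfc_{\chibhd})_\psi^{[\geq3]}-\frac{3}{2v}\di(\mfc_{\dot{(\Om\tr\chib)}})_\psi^{[\geq3]}-\frac{3}{v^2}(\mfc_{\etad})_\psi^{[\geq3]}$.

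The main obstacle is precisely this final bookkeeping — verifying that the combination collapses to the clean charge and clean source, with exactly the coefficients $3,-\frac{3}{2},-3$ of the statement. The delicate points are tracking the spherical-harmonic normalisations $\sqrt{l(l+1)}$ and $\sqrt{\frac{1}{2}l(l+1)-1}$ so that the $\chihd$-cancellation in the electric channel is exact and the mode-dependent factor is correctly recognised as a Hodge-operator expression, and confirming that the numerous residual $C^2$-conservation-law terms recombine with the right coefficients. Both are long but routine, and require no conceptual input beyond Appendices~\ref{sec:sphharm}, \ref{sec:linCC}, \ref{sec:C2QQ}, Lemma~\ref{lem:repformDuab} and formula \eqref{eq:repformDuomb}.
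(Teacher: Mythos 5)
Your strategy is the one the paper itself announces just before the lemma (matching the $1/v'^4$-weights of the $\chihd$-integrals in \eqref{eq:repformDuab} and \eqref{eq:repformDuomb} and eliminating $\chihd$ on the $l\geq3$ modes), so it is correct in substance; the paper's written proof takes the equivalent but slightly more economical route of differentiating the stated charge directly and substituting the individual linearised transport equations for $\Db\abd$, $\abd$, $\chibhd$, $\omtrchibd$, $\etad$ and $\Db\ombd$, finishing with the projection identity $(\Divdo\DDd_2^*+1+\tfrac12\di\Divdo)_\psi V=0$. Two remarks on your version. First, your hedge about a ``mode-dependent factor'' is unnecessary and slightly misleading: on the electric channel one has $\Divdo\DDd_2^*\mapsto\tfrac12 l(l+1)-1$ and $\di\Divdo\mapsto -l(l+1)$, so $(\Divdo\DDd_2^*+1+\di\Divdo)_\psi=\tfrac12(\di\Divdo)_\psi$, and the two $\chihd$-operators cancel exactly with the \emph{constant} operator $2\DDd_2^*\DDd_1^*(\cdot,0)=-2\DDd_2^*\di$ — which is forced anyway, since the stated charge contains $2\DDd_2^*\DDd_1^*(\Db\ombd,0)$ with a fixed coefficient and no nonlocal $l$-dependent weight. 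Second, your route carries an extra obligation that the paper's direct differentiation avoids: you must verify that $A_\psi^{[\geq3]}+2\DDd_2^*\DDd_1^*(B,0)_\psi^{[\geq3]}$ (with $A$, $B$ the bracketed quantities in \eqref{eq:repformDuab} and \eqref{eq:repformDuomb}) actually \emph{equals} the stated charge rather than differing from it by some other conserved expression — i.e.\ that all the $\gdcd$-, $\phid$-, $\Omd$- and $\omtrchid$-dependence hidden in $\QQ_\phid$, $\QQ_{\omtrchibd}$, $\QQ_\chibhd$ cancels out of the combination. This is exactly the ``final bookkeeping'' you flag, it does go through, but it should be stated as an identity to be checked, not only as a cancellation of source terms.
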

\begin{proof}
The conservation law for $\QQ_{\Db\abd^{[\geq3]}_\psi} $ is obtained by applying the null structure equations. Differentiating $\QQ_{\Db\abd^{[\geq3]}_\psi} $, we obtain (suppressing the notation $[\geq 3]$) 
\begin{align*}
\begin{aligned}
D\QQ_{\Db\abd^{[\geq3]}_\psi}  &= D\left(\frac{1}{v}\Db\abd_{\psi}+\frac{1}{v^{2}}(1-2\DDd_{2}^{*}\Divdo)\abd_{\psi}\right)\\
&+D\left(2\DDd_{2}^{*}\left(\frac{3}{v}\left(\frac{1}{v^{2}}\Divdo\chibhd-\frac{1}{2}\di\dot{(\Om\tr\chib)}-\frac{1}{v}\dot{\et}\right)-\di\Db\ombd\right)_{\psi}\right).
\end{aligned}
\end{align*}
Applying the null structure equations, we have that
\begin{align*}
&D\left(\frac{1}{v}\Db\abd_{\psi}+\frac{1}{v^{2}}(1-2\DDd_{2}^{*}\Divdo)\abd_{\psi}\right)\\
=&\frac{1}{v}(\mfc_{\Db\abd})_{\psi}+\frac{1}{v^{2}}(1-2\DDd_{2}^{*}\Divdo)(\mfc_{\abd})_{\psi}+\frac{4}{v^{2}}\DDd_{2}^{*}(2-\Divdo\DDd_{2}^{*})\left(\frac{1}{v^{2}}\Divdo\chibhd-\frac{1}{2}\di\dot{(\Om\tr\chib)}-\frac{1}{v}\dot{\et}\right),
\end{align*}
while
\begin{align*}
&D\left(2\DDd_{2}^{*}\left(\frac{3}{v}\left(\frac{1}{v^{2}}\Divdo\chibhd-\frac{1}{2v^{2}}\di v^{2}(\dot{(\Om\tr\chib)})-\frac{1}{v^{3}}(v^{2}\dot{\et})\right)-\di\Db\ombd\right)_{\psi}\right)\\
=&2\DDd_{2}^{*}\left\{\left(-\frac{3}{v^{2}}\left(\frac{1}{v^{2}}\Divdo\chibhd-\frac{1}{2}\di\dot{(\Om\tr\chib)}-\frac{1}{v}\dot{\et}\right)\right)_{\psi}\right.\\
&+\frac{3}{v}\left(\frac{3}{v^{2}}\dot{\et}+\frac{1}{v}\di\dot{(\Om\tr\chib)} -\frac{1}{v^3}\Divdo\chibhd \right)_{\psi}+\frac{3}{v^{2}}\Divdo\left(\frac{2}{v} \DDd_2^* \left(\etad -2 \di\Omd\right)+\frac{1}{v}\mfc_6\right)_{\psi}\\
&\left.-\frac{3}{2v^{3}}\di\left( - 2 \Divdo \left(\etad-2\di\Omd\right)+ v^2 \mfc_5\right)_{\psi}-\frac{3}{v^{4}}\left(\frac{v^2}{2} \di \left(\frac{4}{v}\Omd\right) +v\di\Omd + v^2 \mfc_4\right)_{\psi}\right\}\\
&-\di\left(\frac{3}{2v^{2}}\left(\omtrchibd\right)
+ \frac{1}{v^2} \Divdo \left(\frac{1}{v^2} \Divdo \chibhd - \frac{1}{v}\etad - \frac{1}{2} \di \omtrchibd\right)+ \mfc_9\right)_\psi.
\end{align*}
Collecting terms in this formula and using that 
\begin{equation*}
(\Divdo\DDd_{2}^{*}+1+\frac{1}{2}\di\Divdo)_{\psi}V = 0
\end{equation*}
for vectorfields $V$, yields the stated formula.
\end{proof}
Finally, the operator on the right-hand side of \eqref{eq:repDu2ombD} does not have a kernel for modes $l\geq2$, the related charge is obtained by projecting the representation formula onto the $l\leq 1$ spherical harmonic modes.
\begin{lemma}[Conservation law for $\Db^2\ombd$] \label{lem:consvlawsDu2ombd} Let $\QQ_{\Db^{2}\ombd^{[\leq1]}} $ denote the charge
\begin{align}
\QQ_{\Db^{2}\ombd^{[\leq1]}} =\Db^{2}\ombd^{[\leq 1]}-\frac{1}{2v^{4}}(\Ldo-3) \QQ_{2}^{[\leq 1]}+\frac{4}{v^{5}}\Divdo \QQ_{0}
\end{align}
Then, the charge satisfies the equation
\begin{align*}
D(\QQ_{\Db^{2}\ombd^{[\leq1]}}  +\frac{4}{v^{4}}\mfc_{\dot{(\Om\tr\chi)}})=h_{\Db^{2}\ombd}^{[\leq 1]}
\end{align*}
\end{lemma}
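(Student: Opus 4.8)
The plan is to obtain this conservation law by projecting the transport equation \eqref{eq:repDu2ombD} of Lemma \ref{lem:repformDu2omb} onto the $l\le 1$ spherical harmonic modes, in the same spirit as the proof of Lemma \ref{lem:consvlawQDuombDuab}. Since \eqref{eq:repDu2ombD} already has the shape $D(\,\cdots\,)=\tfrac{1}{15}\Divdo(\cdots)\Divdo\bigl(\tfrac{1}{v^{5}}\chihd\bigr)+h_{\Db^{2}\ombd}$, the whole task is to compute the $l\le1$ part of every term inside the outer bracket and of the right-hand side, and to recognise the pieces that survive.

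First I would record the elementary facts about the round unit sphere that drive the collapse, all following from the Hodge calculus of Appendix \ref{sec:sphharm}: $\DDd_{2}^{*}$ annihilates $l=1$ one-forms; the divergence of a symmetric trace-free $2$-tensor is supported in modes $l\ge2$, so $\Divdo\chihd$, $\Divdo\gdcd$, $\Divdo\QQ_\chibhd$ and $\Divdo\Divdo\abd$ have no $l\le1$ part; $\Divdo$ kills the magnetic part of any one-form and acts as $\Ldo=-2$ on the electric part $\di Y^{1m}$ of an $l=1$ one-form, so that $\di\Divdo$ acts as $-2$ there; $(\Ldo+2)$ annihilates $l=1$ scalars while $\di$ annihilates $l=0$ scalars, so $\di(\Ldo+2)$ kills all $l\le1$ content; and $\Ldo(\Ldo+2)$ vanishes on $l\le1$, whence $\bigl(2(\Ldo-3)+\tfrac{1}{4}\Ldo(\Ldo+2)\bigr)^{[\le1]}=2(\Ldo-3)^{[\le1]}$.

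With these in hand the $l\le1$ projection of \eqref{eq:repDu2ombD} is immediate. The $\abd$, $\QQ_\chibhd$ and $\gdcd$ terms (each carrying an outer $\Divdo\Divdo$ acting on a symmetric trace-free tensor) drop, the $\QQ_\phid$ term drops by its $\di(\Ldo+2)$ prefactor, and the $\chihd$ term on the right-hand side drops. The $\QQ_{\dot{(\Om\tr\chib)}}$ term survives with coefficient $-\tfrac{1}{4v^{4}}\cdot 2(\Ldo-3)=-\tfrac{1}{2v^{4}}(\Ldo-3)$ on $l\le1$; in the $\dot{\mathfrak B}$ term only the $(-4+8\di\Divdo)$ part of the operator acts non-trivially on $l=1$, and only on the electric part, where it acts as $-4+8\cdot(-2)=-20$, so this term contributes $+\tfrac{4}{v^{5}}\Divdo\dot{\mathfrak B}^{[1]}$ to the bracket; and the $\di\mfc_{\dot{(\Om\tr\chi)}}$ term similarly contributes $+\tfrac{4}{v^{4}}\mfc_{\dot{(\Om\tr\chi)}}$ on the $l\le1$ modes. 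Recognising $\QQ_{2}=\QQ_{\dot{(\Om\tr\chib)}}$ and $\QQ_{0}=\dot{\mathfrak B}^{[1]}$ (whose magnetic part is annihilated by $\Divdo$, so $\Divdo\dot{\mathfrak B}^{[1]}=\Divdo\QQ_{0}$) from Appendix \ref{sec:C2QQ}, the bracket collapses to $\QQ_{\Db^{2}\ombd^{[\le1]}}+\tfrac{4}{v^{4}}\mfc_{\dot{(\Om\tr\chi)}}$ and the right-hand side to $h_{\Db^{2}\ombd}^{[\le1]}$, which is the claimed identity. The main obstacle is purely the bookkeeping of the nested $\Divdo$, $\DDd_{2}^{*}$ and $\di$ on the $l=0$ and $l=1$ modes, where the electric and magnetic parts of one-forms must be tracked separately since only the electric part is seen by $\Divdo$; once that action has been tabulated the identification of the three surviving terms is forced.
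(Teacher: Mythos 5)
Your proposal is correct and follows exactly the route the paper indicates (the text immediately preceding the lemma states that the charge is obtained by projecting the representation formula \eqref{eq:repDu2ombD} onto the $l\leq 1$ modes, though no detailed computation is written out there). Your mode-by-mode bookkeeping — in particular the evaluation $-4+8\di\Divdo\mapsto-20$ on the electric $l=1$ part of $\dot{\mathfrak{B}}$ and the collapse $2(\Ldo-3)+\tfrac14\Ldo(\Ldo+2)\mapsto2(\Ldo-3)$ on $l\leq1$ — reproduces the stated coefficients, so the argument is complete.
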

The following lemma shows that the  novel charges are bounded.
\begin{lemma}[Charge estimate]\label{lem:chargebdQ} Let $\dot{\mfx}_{v}$ denote linearised $C^3$-sphere data on a sphere $S_{v}$ with $1\leq v \leq 2$. Then
\begin{align*}
 \Vert \QQ_{\Db\abd^{[2]}}  \Vert_{H^2(S_v)}+ \Vert \QQ_{\Db\abd^{[\geq3]}_\psi}  \Vert_{H^2(S_v)}+\Vert \QQ_{\Db^{2}\ombd^{[\leq1]}}  \Vert_{H^2(S_v)} \les \Vert \dot{\mfx}_{v}\Vert_{\XX^8(S_{v})}.\\
\end{align*}
\end{lemma}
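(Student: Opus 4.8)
The plan is to read each of the three charges directly off its definition — $\QQ_{\Db\abd^{[2]}}$ from Lemma \ref{lem:consvlawsQDuab}, $\QQ_{\Db\abd^{[\geq3]}_\psi}$ from Lemma \ref{lem:consvlawQDuombDuab}, and $\QQ_{\Db^{2}\ombd^{[\leq1]}}$ from Lemma \ref{lem:consvlawsDu2ombd} — and observe that each is a \emph{finite linear combination} of terms of the shape (a smooth function of $v$ on the compact interval $[1,2]$) $\times$ (a fixed differential operator built from $\Nabs$, the Hodge operators $\DDd_1^*,\DDd_2^*,\Divdo$ and $\Ldo$ on $(S_v,\gac)$, composed with an $L^2(\gac)$-orthogonal projection onto a band of spherical harmonic modes) $\times$ (one component of $\dot{\mfx}_v$). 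Since $1\le v\le 2$ all $v$-coefficients are bounded above and below; the projections $(\cdot)^{[2]}$, $(\cdot)^{[\geq3]}$, $(\cdot)^{[\leq1]}$, $(\cdot)_\psi$ are bounded on $H^s(S_v)$ for every $s$ (Appendix \ref{sec:sphharm}); and $\DDd_1^*,\DDd_2^*,\Divdo$ map $H^s(S_v)\to H^{s-1}(S_v)$ while $\Ldo$ maps $H^s(S_v)\to H^{s-2}(S_v)$. By the triangle inequality and linearity, it then suffices to check for each term that the number of $\Nabs$-derivatives hitting a component $\varphi$ of $\dot{\mfx}_v$ does not exceed the regularity at which $\Vert\dot{\mfx}_v\Vert_{\XX^8(S_v)}$ controls $\varphi$ minus $2$; all three maps $\dot{\mfx}_v\mapsto\QQ$ are then bounded linear maps into $H^2(S_v)$.

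The bookkeeping is a direct comparison against Definition \ref{def:C3spherenorm}. The two top-regularity quantities $\Db\abd$ and $\Db^{2}\ombd$ are controlled only in $H^2(S_v)$, but in all three charges they appear undifferentiated (carrying only a power of $v$ and, possibly, a projection), so they are already in $H^2(S_v)$. The tensor $\abd$ enters only through the second-order combination $(1-2\DDd_2^*\Divdo)\abd$ and is controlled in $H^{8-4}(S_v)=H^4(S_v)$; $\chibhd$ and $\omtrchibd$ enter through second-order operators and are controlled in $H^{8-2}(S_v)=H^6(S_v)$; $\etad$ enters through a first-order operator and is controlled in $H^{8-1}(S_v)=H^7(S_v)$; $\Db\ombd$ enters $\QQ_{\Db\abd^{[\geq3]}_\psi}$ through the second-order operator $\DDd_2^*\DDd_1^*$ and is controlled in $H^4(S_v)$; and the $C^2$-charges $\QQ_0,\QQ_2$ appearing inside $\QQ_{\Db^{2}\ombd^{[\leq1]}}$ are themselves explicit combinations of $\phid,\omtrchid,\Omd,\chibhd,\etad,\omtrchibd$ (Appendix \ref{sec:C2QQ}) carrying at most two derivatives on components controlled in $H^6(S_v)$ or better, so that the outermost operators $(\Ldo-3)$ and $\Divdo$ still leave them in $H^2(S_v)$. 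Summing these term-by-term bounds yields the stated inequality.

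The argument is essentially routine; the single point that genuinely uses the structure of the charges — and hence the only possible pitfall — is that the top-regularity quantities $\Db\abd$ and $\Db^{2}\ombd$, which saturate the available regularity, never occur under an angular derivative. This is guaranteed by the way the charges were extracted in Section \ref{sec:conservationlaws}: the corresponding terms in the representation formulas \eqref{eq:repformulaforDbab} and \eqref{eq:repDu2ombD} carry no angular derivatives on $\Db\abd$ or $\Db^{2}\ombd$, so neither do the conserved charges read off from them. En route one records precisely the continuity of $\dot{\mfx}_v\mapsto\QQ$ as a bounded linear map $\XX^8(S_v)\to H^2(S_v)$, which is exactly what is needed in the estimates of Section \ref{sec:lingluingproof}.
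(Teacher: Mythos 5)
Your proposal is correct and is exactly the routine term-by-term verification the paper intends (the paper omits the proof of Lemma \ref{lem:chargebdQ} as immediate from the definitions of the charges and of the $C^3$-sphere data norm). The derivative-counting against Definition \ref{def:C3spherenorm} checks out, and your observation that the top-regularity quantities $\Db\abd$ and $\Db^{2}\ombd$ appear only undifferentiated is indeed the one structural point the estimate relies on.
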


\subsection{Gauge dependence of conserved charges}\label{sec:Qgaugedependence}

In this section, we investigate the gauge-dependence of the novel charges of Lemmas \ref{lem:consvlawsQDuab}, \ref{lem:consvlawQDuombDuab} and \ref{lem:consvlawsDu2ombd}. That is, we prove the following proposition.

\begin{proposition}\label{prop:gaugedepnewQ}
The charge $\QQ_{\Db\abd^{[2]}}$ is gauge-invariant, i.e. for all linearised perturbation functions $\dot{f}$ and $\dot{q}$,
\begin{equation*}
\QQ_{\Db\abd^{[2]}}(\dot{\PP}_{f,q}(\dot{f},\dot{q}))=0.
\end{equation*}
Now, let  $(\QQ_{\Db^{2}\ombd^{[\leq1]}} )_{0}$ be a scalar field and $\left(\QQ_{\Db\abd^{[\geq3]}_\psi}\right)_{0}$ a symmetric tracefree $2$-tensor satisfying
\begin{align}\label{eq:modematching}
(\QQ_{\Db^{2}\ombd^{[\leq1]}} )_{0} = (\QQ_{\Db^{2}\ombd^{[\leq1]}} )_{0}^{[2]}, \qquad \left(\QQ_{\Db\abd^{[\geq3]}_\psi}\right)_{0}=\left(\left(\QQ_{\Db\abd^{[\geq3]}_\psi}\right)_{0}\right)_\psi^{[\geq3]}.
\end{align}
Then there exists linearised perturbation functions $\dot f$ and $\dot{q}$ at $S_2$ such that
\begin{align*} 
\QQ_{\Db^{2}\ombd^{[\leq1]}}\left(\dot{\PP}_{f,q}(\dot{f},\dot{q})\right)=(\QQ_{\Db^{2}\ombd^{[\leq1]}})_0 \qquad \left(\QQ_{\Db\abd^{[\geq3]}_\psi}\right)\left(\dot{\PP}_{f,q}(\dot{f},\dot{q}\right)=\left(\QQ_{\Db\abd^{[\geq3]}_\psi}\right)_0.
 \end{align*}
 Furthermore, under the assumption that  
 $$ \Vert (\QQ_{\Db\abd^{[\geq3]}_\psi} )_0 \Vert_{H^2(S_2)}
+\Vert (Q_{\Db^2\ombd})_{0} \Vert_{H^{2}(S_{2})}<\infty,$$
the following bound holds for $\pr_u^3\dot{f}$ and $\pr_u^4\dot{f}$
\begin{align}
\begin{aligned}
\Vert \pr_{u}^{3}\dot{f}^{[\geq3]} \Vert_{H^{4}(S_{2})}+\Vert \pr_{u}^{4}\dot{f}\Vert_{H^{2}(S_{2})}  \les& \Vert (\QQ_{\Db\abd^{[\geq3]}_\psi} )_0 \Vert_{H^2(S_2)}+\Vert (Q_{\Db^2\ombd})_{0} \Vert_{H^{2}(S_{2})}.
\end{aligned}\label{eq:higherfestimate}
\end{align}
\end{proposition}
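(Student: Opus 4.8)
The plan is to evaluate each of the three charges at $\dot\PP_{f,q}(\dot f,\dot q)$ by substituting the explicit formulas of Lemma~\ref{lem:pertlin}, using throughout that the curvature quantities $\abd$, $\Db\abd$, as well as $\chibhd$, are \emph{trivial} under $\dot\PP_{f,q}$, since none of them appears among the non-trivial components listed in Lemma~\ref{lem:pertlin}. With this in hand, the gauge-invariance of $\QQ_{\Db\abd^{[2]}}$ is immediate: by Lemma~\ref{lem:consvlawsQDuab} the charge $\QQ_{\Db\abd^{[2]}}$ is assembled only out of $\Db\abd^{[2]}$ and $\abd^{[2]}$, both of which vanish at $\dot\PP_{f,q}(\dot f,\dot q)$, so $\QQ_{\Db\abd^{[2]}}(\dot\PP_{f,q}(\dot f,\dot q))=0$ for every $\dot f$ and $\dot q$.

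Next I would compute $\QQ_{\Db\abd^{[\geq3]}_\psi}$ at $\dot\PP_{f,q}(\dot f,\dot q)$ from the definition in Lemma~\ref{lem:consvlawQDuombDuab}. The $\Db\abd$, $\abd$ and $\chibhd$ contributions drop out, and after substituting $\omtrchibd=-2\pr_u(f/v)$, $\etad=r\,\di(\pr_u(f/v))$ and evaluating at $u=0$, where $r=v$, the pair $-\tfrac12\di\omtrchibd$ and $-\tfrac1v\etad$ cancels identically; using also $\Db\ombd=\tfrac12\pr_u^3\dot f$, what remains is $\QQ_{\Db\abd^{[\geq3]}_\psi}(\dot\PP_{f,q}(\dot f,\dot q))=\DDd_2^*\DDd_1^*(\pr_u^3\dot f,0)_\psi^{[\geq3]}$, a quantity depending only on $\pr_u^3\dot f^{[\geq3]}$. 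For $\QQ_{\Db^2\ombd^{[\leq1]}}$ I would likewise substitute $\Db^2\ombd=\pr_u^3(\tfrac12\pr_u\dot f)=\tfrac12\pr_u^4\dot f$ into its definition in Lemma~\ref{lem:consvlawsDu2ombd}; since the $C^2$-charges $\QQ_0$ and $\QQ_2$ occurring there sit low in the hierarchy and thus depend on $\dot\PP_{f,q}$ only through $\dot f(0)$, $\pr_u\dot f(0)$ and $\dot q$, this gives $\QQ_{\Db^2\ombd^{[\leq1]}}(\dot\PP_{f,q}(\dot f,\dot q))=\tfrac12\pr_u^4\dot f^{[\leq1]}$ plus a term linear in $(\dot f(0),\pr_u\dot f(0),\dot q)$ supported on the $l\leq1$ modes.

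To realise the prescribed values I take $\dot q=0$ and $\dot f(0)=\pr_u\dot f(0)=\pr_u^2\dot f(0)=0$, which play no role for the $C^3$-charges and make the lower-order terms above vanish. I then solve $\DDd_2^*\DDd_1^*(\pr_u^3\dot f^{[\geq3]},0)=(\QQ_{\Db\abd^{[\geq3]}_\psi})_0$: by the Hodge theory of Appendix~\ref{sec:sphharm}, $h\mapsto\DDd_2^*\DDd_1^*(h,0)$ is an isomorphism from $l\geq3$ scalars onto $l\geq3$ electric ($\psi$-type) symmetric tracefree $2$-tensors, which by~\eqref{eq:modematching} is exactly the mode class of $(\QQ_{\Db\abd^{[\geq3]}_\psi})_0$, and I set $\pr_u^3\dot f^{[\leq2]}=0$. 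With $\pr_u^3\dot f$ fixed, I set $\pr_u^4\dot f^{[\leq1]}=2(\QQ_{\Db^2\ombd^{[\leq1]}})_0$ and $\pr_u^4\dot f^{[\geq2]}=0$; since the relation for $\QQ_{\Db^2\ombd^{[\leq1]}}$ is mode-diagonal and sees only $l\leq1$ data, and $\QQ_{\Db\abd^{[\geq3]}_\psi}$ only the $l\geq3$ data, these choices produce exactly the prescribed charge values.

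For the bound~\eqref{eq:higherfestimate}: since $\DDd_2^*\DDd_1^*(\cdot,0)$ is second order, inverting it gains two derivatives, giving $\Vert\pr_u^3\dot f^{[\geq3]}\Vert_{H^4(S_2)}\les\Vert(\QQ_{\Db\abd^{[\geq3]}_\psi})_0\Vert_{H^2(S_2)}$, and $\pr_u^4\dot f$ is a fixed multiple of $(\QQ_{\Db^2\ombd^{[\leq1]}})_0$ on the finite-dimensional space $l\leq1$ and zero elsewhere, giving $\Vert\pr_u^4\dot f\Vert_{H^2(S_2)}\les\Vert(\QQ_{\Db^2\ombd^{[\leq1]}})_0\Vert_{H^2(S_2)}$. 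I expect the main obstacle to be the explicit verification of the two substitutions above — in particular checking that every term of $\QQ_{\Db\abd^{[\geq3]}_\psi}(\dot\PP_{f,q})$ other than the one through $\DDd_1^*(\Db\ombd,0)$ cancels, and that $\QQ_{\Db^2\ombd^{[\leq1]}}(\dot\PP_{f,q})$ genuinely keeps $\tfrac12\pr_u^4\dot f^{[\leq1]}$ as its leading term with every other contribution of strictly lower order in $\dot f$ — which requires careful bookkeeping with the $C^2$-charge formulas of Appendix~\ref{sec:C2QQ}.
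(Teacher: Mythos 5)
Your proposal is correct and follows essentially the same route as the paper: direct substitution of the perturbation formulas of Lemma \ref{lem:pertlin} into the three charges, yielding $\QQ_{\Db\abd^{[2]}}=0$, $\QQ_{\Db\abd^{[\geq3]}_\psi}=\DDd_2^*\DDd_1^*(\pr_u^3\dot f,0)^{[\geq3]}_\psi$ and $\QQ_{\Db^{2}\ombd^{[\leq1]}}=\tfrac12\pr_u^4\dot f^{[\leq1]}$, followed by mode-by-mode inversion of $\DDd_2^*\DDd_1^*(\cdot,0)$ on $l\geq3$ and the trivial choice on $l\leq1$, with the two-derivative gain giving \eqref{eq:higherfestimate}. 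The only (harmless, for this statement) divergence is that you hedge on whether the $\QQ_0$, $\QQ_2$ contributions to $\QQ_{\Db^{2}\ombd^{[\leq1]}}$ vanish identically under the perturbation and instead kill them by setting $\dot f(0)=\pr_u\dot f(0)=\pr_u^2\dot f(0)=\dot q=0$, whereas the paper's computation \eqref{eq:newQperturbations} shows they cancel for arbitrary perturbation data — a fact that is actually needed later when these charges must be matched simultaneously with the $C^2$-charges, which do use the lower-order components of $\dot f$ and $\dot q$.
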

\begin{proof}
Direct substitution of the linearised perturbation map $\dot{\PP}_{f,q}(\dot{f},\dot{q})$ from Lemma \ref{lem:pertlin} into the charges $\QQ_{\Db\abd^{[2]}}$, $\QQ_{\Db\abd^{[\geq3]}_\psi}$ and $\QQ_{\Db^{2}\ombd^{[\leq1]}}$ yields
\begin{align}
\begin{aligned}
\QQ_{\Db\abd^{[2]}}=&0,\\
\QQ_{\Db\abd^{[\geq3]}_\psi}=&\DDd_{2}^{*}\DDd_{1}^{*}(\pr_{u}^{3}\dot{f},0)^{[\geq 3]}_\psi,\\
\QQ_{\Db^{2}\ombd^{[\leq1]}}=&\frac{1}{2}\pr_{u}^{4}\dot{f}^{[\leq 1]}.
\end{aligned}\label{eq:newQperturbations}
\end{align}
We see immediately that $\QQ_{\Db\abd^{[2]}}$ is gauge-invariant. Furthermore, by definition $\QQ_{\Db\abd^{[2]}}$ is an $l=2$ spherical harmonic mode consisting of an electric and magnetic part and therefore it is $10$-dimensional. Thus, this gives the $10$-dimensional space of novel gauge-invariant charges of the $C^3$ linearised gluing problem.
\\

It remains to show that for $\QQ_{\Db\abd^{[\geq3]}_\psi}$ and $\QQ_{\Db^{2}\ombd^{[\leq1]}}$ a suitable choice of $\dot{f}$ can be made. We proceed by constructing directly the functions $\pr_u^3\dot{f}^{[\geq3]}$ and $\pr_u^4\dot{f}$. For $\pr_{u}^{3}\dot{f}^{[\geq3]}$, we have the equations from \eqref{eq:newQperturbations}
\begin{align*}
 \DDd_{2}^{*}\DDd_{1}^{*}\left(\pr_u^3 \dot{f},0\right)^{[\geq 3]}_\psi=&\left(\QQ_{\Db\abd^{[\geq3]}_\psi}\right)_{0}.
\end{align*}
By \eqref{eq:modematching}, $\pr_{u}^{3}\dot{f}$ is well defined and performing a spherical harmonic decomposition yields
\begin{align*}
\left(\pr_u^3 \dot{f}\right)^{(lm)}=-\frac{1}{\sqrt{\frac{1}{2} l(l+1)-1}\sqrt{l(l+1)}}\left(\left(\QQ_{\Db\abd^{[\geq3]}_\psi}\right)_{0}\right)^{(lm)}.
\end{align*}
Putting this together we obtain the bound
\begin{align*}
\Vert \pr_{u}^{3}\dot{f} ^{[\geq3]}\Vert_{H^{4}(S_{2})}\les \Vert (\QQ_{\Db\abd^{[\geq3]}_\psi} )_0 \Vert_{H^2(S_2)}
\end{align*}
For $\pr_{u}^{4}\dot{f}$, we have the formula
\begin{align*}
\begin{aligned}
\pr_{u}^{4}\dot{f}^{[\leq 1]}=&2(Q_{\Db^2\ombd})_{0}.
\end{aligned}
\end{align*}
By letting $\pr_{u}^{4}\dot{f}^{[\geq 2]}=0$, $\pr_{u}^{4}\dot{f}$ is well defined and bounded by
\begin{equation*}
\Vert \pr_{u}^{4}\dot{f}\Vert_{H^{2}(S_{2})}\les\Vert (Q_{\Db^2\ombd})_{0} \Vert_{H^{2}(S_{2})}.
\end{equation*}
Thus, putting the two estimates together, we obtain the estimate \eqref{eq:higherfestimate}.
\end{proof}

\begin{remark}
Proposition \ref{prop:gaugedepnewQ} shows that the linearised perturbation function $\dot{f}$ is well defined for the novel $C^3$ charges. However, we also need to show that $\dot{f}$ is well-defined for the charges arising from the $C^2$-null gluing problem. The charges appearing in the $C^2$-null gluing can be found in Appendix \ref{sec:C2QQ}. In Section 4 of \cite{ACR2} it is found, for example, that
\begin{align}
\begin{aligned}
\QQ_{\Db\ombd^{[\leq1]}}=& \frac{1}{2} \pr_u^3 \dot{f}^{[\leq 1]} , \\
 \QQ_{\Db\ombd^{[2]}}=&\frac{1}{2} \pr_u^3 \dot{f}^{[2]}-3 \dot{f}^{[2]}.\\
\end{aligned}\label{eq:QDuombgauge}
\end{align}
Proposition 4.4 in \cite{ACR2} shows that $\dot{f}$ is well defined for the charges appearing for the $C^2$-sphere data. Since the $l\geq3$ modes of $\pr_u^3\dot{f}$ are not considered, this result can be combined with Proposition \ref{prop:gaugedepnewQ} and Lemma \ref{lem:estimatespfq} to obtain the estimate
 \begin{align}
\begin{aligned}
&\Vert \dot{f} \Vert_{\YY_{f^{(4)}}} + \Vert\dot{q}\Vert_{\YY_{q}}+ \Vert \dot{\PP}_{f,0}(\dot f) \Vert_{\mathcal{X}(S_2)} +\Vert \dot{\PP}_{0,q}(\dot{q}) \Vert_{\mathcal{X}(S_2)}\\
\les& \Vert (\QQ_\phid)_0 \Vert_{H^8(S_2)}+\Vert (\QQ_{\dot{(\Om\tr\chib)}})_0 \Vert_{H^6(S_2)}+\Vert (\QQ_\chibhd)_0 \Vert_{H^6(S_2)}+\Vert (\QQ_{\ab_\psi})_0\Vert_{H^4(S_2)}\\
&+\Vert (\QQ_{\ombd^{[\leq1]}})_0 \Vert_{H^6(S_2)}+\Vert (\QQ_{\Db\ombd^{[\leq1]}})_0 \Vert_{H^4(S_2)}+\Vert (\QQ_{\Db\ombd^{[2]}})_0\Vert_{H^4(S_2)}\\
&+\Vert (\QQ_{\Db\abd^{(\geq3)}_\psi} )_0 \Vert_{H^2(S_2)}+\Vert (\QQ_{\Db^{2}\ombd^{[\leq1]}} )_0 \Vert_{H^2(S_2)}.
\end{aligned}\label{eq:estimateforpertfunc}
\end{align}
where the charges on the right-hand side are defined analogously to \eqref{eq:modematching}.
\end{remark}

\subsection{Proof of Proposition \ref{prop:lingluing}}\label{sec:lingluingproof}

In this section, we use the results of sections \ref{sec:C3repformulas}, \ref{sec:conservationlaws} and \ref{sec:Qgaugedependence} to prove Proposition \ref{prop:lingluing} and hence prove Proposition \ref{prop:surjection}. Let 
\begin{align*}
\dot{\mfx}_{1}\in\XX(S_{1}),\qquad {\tilde{\mathfrak{M}}}_{2}\in\ZZ_\mathfrak{M}(\tilde{S}_{2}), \qquad \mfc_{\dot{\varphi}}\in\ZZ_{\CC},
\end{align*}
be given linearised $C^3$-sphere data on $S_{1}$, linearised $C^3$-matching data on $\tilde{S}_{2}\subset\HHb_{[-\delta,\delta],2}$ and linearised source functions on $\HH_{0,[1,2]}$, respectively. 
We prove Proposition \ref{prop:lingluing} by showing the following.
\begin{itemize}
\item In Section \ref{sec:matchingdata}, we show that the matching map $\mathfrak{M}_{u,v}$ contains the complement of $({\mathbf{E}},{\mathbf{P}},{\mathbf{L}},{\mathbf{G}},{\mathbf{U}},{\mathbf{V}})$ in the sphere data $\mathfrak{X}_{u,v}$.
\item In Section \ref{sec:Qgluing}, we apply Proposition \ref{prop:gaugedepnewQ} to match the gauge-dependent charges of Lemmas \ref{lem:consvlawQDuombDuab} and \ref{lem:consvlawsDu2ombd} at $S_2$.
\item In Section \ref{sec:lingluingsoln}, we construct conditions on the linearised characteristic seed $(\Omd,\chihd)$ such that the novel part of the $C^3$-null data $\mfx_{u,v}$ given by the representation formulae of Section \ref{sec:C3repformulas} satisfies \eqref{eq:spheredatamatching}.

\item Finally, in Section \ref{sec:prooflingluingest}, we prove the estimate \eqref{eq:estimatelingluing} and conclude the proof of Proposition \ref{prop:lingluing}.
\end{itemize}

\subsubsection{$\mathfrak{M}$ complements $({\mathbf{E}},{\mathbf{P}},{\mathbf{L}},{\mathbf{G}},{\mathbf{U}},{\mathbf{V}})$}\label{sec:matchingdata}

This section is concerned with the proof of the following lemma.

\begin{lemma}[Matching all gluable quantities] \label{lem:matchinggluable} Let $\mfx_{u,v}$ and $\mfx'_{u,v}$ be $C^3$-sphere data on $S_{u,v}$ such that for a real number $\varep>0$,
\begin{align*} 
\Vert \mfx_{u,v} -\mathfrak{m} \Vert_{\XX(S_{u,v})} + \Vert \mfx'_{u,v} -\mathfrak{m} \Vert_{\XX(S_{u,v})} < \varep,
\end{align*}
and
\begin{align} 
\MMf(\mfx_{u,v})=\MMf(\mfx'_{u,v}).
\label{eq:lemmatching1}
\end{align}
For $\varep>0$ sufficiently small, if the charges coincide, that is,
\begin{align} \label{eq:chargematchinglemma}
(\mathbf{E},\mathbf{P}, \mathbf{L}, \mathbf{G},\mathbf{U},\mathbf{V})(\mfx_{u,v}) = (\mathbf{E},\mathbf{P}, \mathbf{L}, \mathbf{G},\mathbf{U},\mathbf{V})(\mfx'_{u,v}),
\end{align}
then the following holds,
\begin{align*} 
\mfx_{u,v}= \mfx'_{u,v}.
\end{align*}
\end{lemma}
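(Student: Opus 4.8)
The plan is to prove that every component of $\mfx_{u,v}$ and $\mfx'_{u,v}$ coincides, extracting as much as possible directly from the matching map $\MMf$ and then using the charges for the remainder. First I would read off, componentwise, from $\MMf(\mfx_{u,v})=\MMf(\mfx'_{u,v})$ and Definition \ref{def:matchingmap}: the quantities $\Om$, $\phi$, $\gd$, $\Om\trchi$, $\chih$, $\chibh$, $\om$, $D\om$, $\a$, $\ab$, $D^{2}\om$ and $\Dh\a$ agree; the modes $(\Om\trchib)^{[\geq2]}$, $\eta^{[\geq2]}$, $\omb^{[\geq2]}$, $(\Db\omb)^{[\geq2]}$, $(\Db^{2}\omb)^{[\geq2]}$ and $(\Dbh\ab)^{[\geq3]}$ agree; and the combined quantities $\tilde{\QQ}_{\omb^{[\leq1]}}$, $\tilde{\QQ}_{\Db\omb^{[\leq1]}}$, $\tilde{\QQ}_{\Db^{2}\ombd^{[1]}}$ agree. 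By their definitions in Definition \ref{def:matchingmap}, the last three are $\omb^{[\leq1]}$, $(\Db\omb)^{[\leq1]}$ and $(\Db^{2}\omb)^{[\leq1]}$ corrected by expressions built solely from $\Om\trchib$, $\phi$, $\Om\trchi$, $\Om$ and $\eta$. Hence it suffices to match $(\Om\trchib)^{[\leq1]}$, $\eta^{[1]}$ and $(\Dbh\ab)^{[2]}$: once these agree, the already-matched $\tilde{\QQ}$'s force $\omb^{[\leq1]}$, $(\Db\omb)^{[\leq1]}$ and $(\Db^{2}\omb)^{[\leq1]}$ to agree, after which all components of the $C^{3}$-sphere data coincide.

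For the two $C^{2}$-quantities $(\Om\trchib)^{[\leq1]}$ and $\eta^{[1]}$ I would invoke the analogous matching statement for $C^{2}$-sphere data from \cite{ACR2}. The $C^{2}$-part of $\MMf$ is precisely the $C^{2}$-matching map $m_{u,v}$ of \eqref{eq:C2matching}, and the charges $(\mathbf{E},\mathbf{P},\mathbf{L},\mathbf{G})$ are functions of the $C^{2}$-sphere data alone (through the Gauss and Codazzi equations determining $\rh$ and $\be$) that agree by \eqref{eq:chargematchinglemma}; therefore $x_{u,v}=x'_{u,v}$ for $\varep$ small, so in particular $\Om\trchib$ and $\eta$ agree entirely. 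Unwinding that argument, $m_{u,v}$ already fixes every $C^{2}$-component except $(\Om\trchib)^{[\leq1]}$ and $\eta^{[1]}$, which are recovered, in that order, from $(\mathbf{L},\mathbf{G})$ and then $(\mathbf{E},\mathbf{P})$ by dividing out near-constant coefficients of the form $\phi^{3}\trchi$ and using that all other angular modes of $\eta$ and $\trchib$ are already matched (so that $\be$ is known before treating $(\mathbf{E},\mathbf{P})$).

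It remains to match $(\Dbh\ab)^{[2]}$, which is where $\mathbf{U}$ and $\mathbf{V}$ enter. By Definition \ref{def:chargeW}, $(\mathbf{U}^{m},\mathbf{V}^{m})_{-2\leq m\leq 2}$ are exactly the $l=2$ electric and magnetic tensor-harmonic coefficients of
\[
\frac{1}{\phi}\Dbh\ab+\frac{1}{\phi^{2}}\bigl(1+\phi^{2}\Nabs\widehat{\otimes}\Divd\bigr)\ab,
\]
and hence determine the full $l=2$ part of this symmetric tracefree $2$-tensor. The second summand involves only $\phi$ and $\ab$, both already matched, so the matching \eqref{eq:chargematchinglemma} of $\mathbf{U}$ and $\mathbf{V}$ forces $(\tfrac{1}{\phi}\Dbh\ab)^{[2]}$ to agree for $\mfx_{u,v}$ and $\mfx'_{u,v}$. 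Setting $W:=\Dbh\ab-(\Dbh\ab)'$ — which is symmetric tracefree and, since $(\Dbh\ab)^{[\geq3]}$ agrees, satisfies $W=W^{[2]}$ — this reads $(\tfrac{1}{\phi}W)^{[2]}=0$. Writing $\tfrac{1}{\phi}=\tfrac{1}{\phi_{0}}+b$ with $\phi_{0}$ the constant Minkowski value of $\phi$ and $\|b\|_{H^{7}(S_{u,v})}\les\varep$ (by closeness of $\phi$ to $\phi_{0}$ in $\XX(S_{u,v})$ together with the algebra property and Sobolev embedding on $S_{u,v}$), and using that the $l=2$ projection commutes with multiplication by the constant $1/\phi_{0}$ and fixes $W=W^{[2]}$, I obtain $\tfrac{1}{\phi_{0}}W=-(bW)^{[2]}$ and hence $\|W\|_{H^{2}(S_{u,v})}\les\varep\,\|W\|_{H^{2}(S_{u,v})}$, which forces $W=0$ for $\varep$ small. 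Combining the steps gives $\mfx_{u,v}=\mfx'_{u,v}$.

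The step I expect to be the real content is this family of near-identity inversions — the explicit one for $(\Dbh\ab)^{[2]}$ and the ones packaged into the $C^{2}$ statement of \cite{ACR2} that recover $(\Om\trchib)^{[\leq1]}$ and $\eta^{[1]}$. In each case one must verify that the relevant coefficient ($1/\phi$, $\phi^{3}\trchi$, and so on) is close to its constant Minkowski value in a norm strong enough — via the $H^{k}(S)$-algebra property for large $k$ and Sobolev embedding — that the perturbation of the corresponding invertible constant-coefficient operator is a strict contraction on the Sobolev space carrying $\Dbh\ab$, $\eta$ or $\Om\trchib$; this is exactly where ``$\varep$ small'' is used and why the conclusion is only local. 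A subsidiary point to confirm is that $(\Dbh\ab)^{[2]}$ is not coupled, through the term $\tfrac{1}{\phi^{2}}(1+\phi^{2}\Nabs\widehat{\otimes}\Divd)\ab$ in $\mathbf{U}$ and $\mathbf{V}$, to any data not already matched — but that term depends only on $\phi$ and $\ab$, so no such coupling occurs.
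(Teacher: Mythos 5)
Your proposal follows the same structure as the paper's proof: reduce to the $C^2$ case via the black-box reference to Lemma 2.12 of \cite{ACR2}, then observe that the matching map together with the charges $\mathbf{U},\mathbf{V}$ pins down the remaining $C^3$-components, with $(\Dbh\ab)^{[2]}$ extracted from $\mathbf{U},\mathbf{V}$ after the $\ab$- and $\phi$-dependent term cancels. Your near-identity contraction argument for deducing $W=0$ from $(\tfrac{1}{\phi}W)^{[2]}=0$ is a correct and welcome expansion of a step the paper compresses to ``directly''; otherwise the two proofs coincide.
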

\begin{remark} Before stating the proof, we make the following remarks.
\begin{enumerate}
\item The quantities $\tilde{\QQ}_{\omb}$, $\tilde{\QQ}_{\Db\omb}$ and $\tilde{\QQ}_{\Db^2\omb}$ appearing in the matching map, see Definition \ref{def:matchingmap}, linearise to the conserved charges $\QQ_{\ombd}$, $\QQ_{\Db\ombd}$ and $\QQ_{\Db^{2}\ombd^{[\leq1]}}$.
\item The proof for the $C^2$ part of the matching data $m_{u,v}$ is contained in Lemma 2.12 of \cite{ACR2} and therefore, we only need to prove Lemma \ref{lem:matchinggluable} for the novel $C^3$ part, $\check{m}$, consisting of the quantities 
$$(D^{2}\om,\Db^{2}\omb^{[\geq 2]}, \tilde{\QQ}_{\Db^{2}\ombd^{[1]}},\Dh\a,\Dbh\ab^{[\geq 3]}).$$
\end{enumerate}
\end{remark}
\begin{proof}[Proof of Lemma \ref{lem:matchinggluable}]
The matching of $\mathbf{U}$ and $\mathbf{V}$ in \eqref{eq:chargematchinglemma} can be combined to write
\begin{equation*}
\frac{1}{\phi}\Dbh\ab^{[2]}(\mfx_{u,v})+\frac{1}{\phi^{2}}(1+\Nabs\widehat{\otimes}\Divd)\ab^{[2]}(\mfx_{u,v})=\frac{1}{\phi}\Dbh\ab^{[2]}(\mfx'_{u,v})+\frac{1}{\phi^{2}}(1+\Nabs\widehat{\otimes}\Divd)\ab^{[2]}(\mfx'_{u,v}).
\end{equation*}
By the matching of $\ab$ and $\phi$ in \eqref{eq:lemmatching1}, we directly obtain that
\begin{equation*}
\Dbh\ab^{[2]}(\mfx_{u,v})=\Dbh\ab^{[2]}(\mfx'_{u,v}).
\end{equation*}
Consider now the quantity $\Db^{2}\omb$. On the one hand $\Db^{2}\omb^{[\geq 2]}$ is matched by \eqref{eq:lemmatching1}. On the other hand, the quantity 
\begin{align*}
\begin{aligned}
\tilde{\QQ}_{\Db^{2}\ombd}  :=& \left(\Db^{2}\omb\right)^{[\leq1]} -\frac{1}{2} (\Ldo-3) \left(\frac{1}{v^{2}}\Om\trchib- \frac{2}{v^4}(\Ldo+2)\phi\right)^{[\leq1]} \\
&+ \frac{1}{2v^{2}} \left(\Ldo\Ldo + 2\Ldo -3\right) \left(\Om\trchi-\frac{4}{v}\Om\right)^{[\leq1]} -\frac{1}{v^3} \Divdo \eta^{[1]}
\end{aligned}
\end{align*}
is matched by \eqref{eq:lemmatching1}. In turn, this implies that for modes $l\leq1$,
$$ \Db^{2}\omb^{[\leq 1]}(\mfx_{u,v})=  \Db^{2}\omb^{[\leq 1]}(\mfx'_{u,v}). $$
\end{proof}

\subsubsection{Gluing of gauge-dependent charges}\label{sec:Qgluing}

We proceed by showing that linearised perturbations of sphere data can be added to the charges 
$$(\QQ_{\Db\abd^{[\geq3]}_\psi}, \QQ_{\Db^{2}\ombd^{[\leq1]}})$$
in order to match the charges at $S_{2}$ with the charges coming from $S_{1}$. Firstly, let $^{(1)}\QQ_{\Db\abd^{[\geq3]}_\psi}$ and $^{(1)} \QQ_{\Db^{2}\ombd^{[\leq1]}}$ be solutions to transport equations from Lemmas \ref{lem:consvlawQDuombDuab} and \ref{lem:consvlawsDu2ombd} with initial data on $S_{1}$ calculated from the given sphere data $\mfx_{1}$. Moreover, the following estimate follows from Lemma \ref{lem:chargebdQ}
\begin{equation}\label{eq:QS1estimate}
 \Vert ^{(1)}\QQ_{\Db\abd^{[\geq3]}_\psi}  \Vert_{H^2(S_v)}+\Vert ^{(1)}\QQ_{\Db^{2}\ombd^{[\leq1]}}  \Vert_{H^2(S_v)} \les \Vert \dot{\mfx}_{1}\Vert_{\XX(S_{1})}+\Vert \mfc_{\dot{\varphi}}\Vert_{\ZZ_\CC}
\end{equation}
Secondly, from the given linearised matching data ${\tilde{\MMf}}_{2}\in\ZZ_{\MMf}(\tilde{S}_{2})$ define $^{(2)}\QQ_{\Db\abd^{[\geq3]}_\psi}$ and $^{(2)} \QQ_{\Db^{2}\ombd^{[\leq1]}}$ to be the charges calculated from ${\tilde{\MMf}}_{2}$. Analogously, from Lemma \ref{lem:chargebdQ}, we have the estimate 
\begin{equation}\label{eq:QS2estimate}
 \Vert ^{(2)}\QQ_{\Db\abd^{[\geq3]}_\psi}  \Vert_{H^2(\tilde{S}_{2})}+\Vert ^{(2)}\QQ_{\Db^{2}\ombd^{[\leq1]}}  \Vert_{H^2(\tilde{S}_{2})} \les \Vert \tilde{\MMf}_{2}\Vert_{\XX(\tilde{S}_{2})}.
 \end{equation}
Now defining 
$$(\QQ_{\Db^{2}\ombd^{[\leq1]}} )_{0} :=^{(1)}\QQ_{\Db\abd^{[\geq3]}_\psi} -^{(2)}\QQ_{\Db\abd^{[\geq3]}_\psi}, \qquad \left(\QQ_{\Db\abd^{[\geq3]}_\psi}\right)_{0}:=^{(1)}\QQ_{\Db^{2}\ombd^{[\leq1]}} -^{(2)} \QQ_{\Db^{2}\ombd^{[\leq1]}} $$ 
we have that for linearised perturbation functions $\dot{f}$ and $\dot{q}$
\begin{align}
\begin{aligned}
\QQ_{\Db\abd^{[\geq3]}_\psi} (\dot{\tilde{\mfx}}_{2}+\dot{\PP}_{f,q}(\dot{f},\dot{q}))=&^{(2)}\QQ_{\Db\abd^{[\geq3]}_\psi} +\QQ_{\Db\abd^{[\geq3]}_\psi} (\dot{\PP}_{f,q}(\dot{f},\dot{q}))\\
=&^{(2)}\QQ_{\Db\abd^{[\geq3]}_\psi} +\left(\QQ_{\Db\abd^{[\geq3]}_\psi}\right)_0\\
=&^{(1)}\QQ_{\Db\abd^{[\geq3]}_\psi}.
\end{aligned}
\end{align}
where $\dot{\tilde{\mfx}}_{2}$ is the sphere data that gives rise to ${\tilde{\MMf}}_{2}$. A similar calculation shows that 
\begin{equation}
\QQ_{\Db^{2}\ombd^{[\leq1]}}(\dot{\tilde{\mfx}}_{2}+\dot{\PP}_{f,q}(\dot{f},\dot{q}))=^{(1)}\QQ_{\Db^{2}\ombd^{[\leq1]}}.
\end{equation}
Moreover, combining the estimates \eqref{eq:QS1estimate} and \eqref{eq:QS2estimate} yields the estimate 
\begin{equation}\label{eq:novelQestimate}
 \Vert (\QQ_{\Db\abd^{[\geq3]}_\psi})_0  \Vert_{H^2(\tilde{S}_{0,2})}+\Vert (\QQ_{\Db^{2}\ombd^{[\leq1]}})_0  \Vert_{H^2(\tilde{S}_{0,2})} \les \Vert \dot{\mfx}_{1}\Vert_{\XX^(S_{1})}+ \Vert \tilde{\MMf}_{2}\Vert_{\XX(\tilde{S}_{2})}+\Vert \mfc_{\dot{\varphi}}\Vert_{\ZZ_\CC}.
\end{equation}

\subsubsection{Constructing $\Omd$ and $\chihd$}\label{sec:lingluingsoln}

In this section, we utilise the matching of perturbation dependent charges from the previous section, the representation formulas from Section \ref{sec:C3repformulas} and the freedom to prescribe $\Omd$ and $\chihd$ along $\HH_{0,[1,2]}$ to construct a solution $\dot{\mfx}$ to the linearised constraint equations, $C_{\dot{\varphi}}(\dot{x}) =\mfc_{\dot{\varphi}}$, satisfying
\begin{align} 
\begin{aligned} 
\dot{\mfx}\vert_{S_1} =& \mfx_{1}, \\
{\MMf}\left(\dot{\mfx}\vert_{S_2}- \dot{\PP}_{f,q}(\dot f, \dot q)\right) =& {\tilde{\MMf}}_{2} \text{ on } S_2,
\end{aligned} \label{eq:bdrymatchingdata}
\end{align}
We prove that we can choose the characteristic seed ($\Omd,\chihd)$ to satisfy the integral formulas of Section \ref{sec:C3repformulas}. We do this first by constructing such a characteristic seed for $\check{x}$ in Proposition \ref{prop:C3lin2} and apply the results for the $C^2$-sphere data from Proposition \ref{prop:C2lingluing}.\\

\ni\textit{(1) Gluing of $D\ad$ and $D^2\omd$.}\\

\ni Recall the linearised null structure equation for $\ad$
\begin{align*} 
\ad + D\chihd=\mfc_{\ad}.
\end{align*}
Differentiating this equation, we obtain the linearised null structure equation for $D\ad$ 
\begin{align*} 
D\ad + D^{2}\chihd=D\mfc_{\ad}.
\end{align*}
Therefore, we glue $D\ad$ at $S_{2}$ by prescribing
\begin{align} 
D^{2}\chihd(1)= D\mfc_{10}(1) - D\ad(1), \qquad D^{2}\chihd(2) = D\mfc_{10}(2) - D\ad(2).
\label{eq:prescribechihdD2}
\end{align}

\ni For $D^2\omd$, recall that $$\omd = D\Omd.$$ Then to glue $D^{2}\omb$ at $S_{2}$ we write
\begin{align} 
\begin{aligned} 
D^{3}\Omd(1)=& D^{2}\omd(1), & D^{3}\Omd(2)=& D^{2}\omd(2).
\end{aligned} \label{eq:prescribeOmD1}
\end{align}
Thus, we have the estimate 
\begin{align} 
\begin{aligned} 
&\Vert D^{2}\chihd(1) \Vert_{H^8(S_1)}+\Vert D^{2}\chihd(2) \Vert_{H^8(S_2)}+\Vert D^3\Omd(1) \Vert_{H^8(S_1)}+\Vert D^3\Omd(2) \Vert_{H^8(S_2)} \\
 \les& \Vert \mfc_{\ad} \Vert_{H^8(S_1)}+\Vert \mfc_{\ad} \Vert_{H^8(S_2)} + \Vert \xdmf_{1}\Vert_{\XX(S_1)} + \Vert \MMf_{2} \Vert_{\ZZ_\MMf(S_2)}.\\
\end{aligned} \label{eq:estimateaDaD2om}
\end{align}

\ni \textit{(2) Gluing of $\Db\abd$ and $\Db^2\ombd$} \\

\ni Recall the representation formula for $\Db\abd$, \eqref{eq:repformulaforDbab},
\begin{align}
\begin{aligned}
&\left[ \frac{1}{v}\Db\abd +\frac{1}{v^{2}}(1-2\DDd_{2}^{*}\Divdo)\abd\right]_{1}^{v} \\
&+\left[ \frac{1}{v^{2}}\DD_{2}^{*}(2-\Divdo\DDd_{2}^{*})\left\{2\Divdo\QQ_{3}-\frac{2}{3v}\di\QQ_{2} -2\di(\Ldo+2)\QQ_{1}\right. \right.\\
&\phantom{+D\Biggl(} -\left(\Divdo \DDd_2^* + 1 + \di\Divdo\right) \left(\etad+ \frac{v}{2}\di\left(\omtrchid-\frac{4}{v}\Omd\right)\right)\\
&\phantom{+D\Biggl(}\left.\left.+ \left(\Divdo \DDd_2^* + 1 + \di\Divdo\right)\Divdo\gdcd - \frac{1}{2v}\left(\Divdo \DDd_2^* + 1 + \di\Divdo\right)\di\mfc_{\dot{(\Om\tr\chib}}\right\} \right]^v_{1}\\
&=\DDd_{2}^{*}(2-\Divdo\DDd_{2}^{*}) \left(\Divdo \DDd_2^* + 1 + \di\Divdo\right)\Divdo\left(\int_{1}^{v}\frac{1}{v'^{4}}\chihd dv'\right)+ \int_{1}^{v} h_{\Db\abd} dv'.
\end{aligned}\label{eq:prescribechih4}
\end{align}
For modes $l\geq3$, the operator $$\DDd_{2}^{*}(2-\Divdo\DDd_{2}^{*}) \left(\Divdo \DDd_2^* + 1 + \di\Divdo\right)\Divdo$$ has trivial kernel and is elliptic. More specifically, prescribing that the $\chihd$-integral vanishes on the $l=2$ mode, we get the estimate
\begin{align} 
\begin{aligned} 
\left\Vert \int_1^2 \frac{1}{v'^4}\chihd dv' \right\Vert_{H^8(S_1)} \les& \Vert \xdmf_1 \Vert_{\mathcal{X}(S_1)} + \Vert \MMf_2 \Vert_{\ZZ_\MMf(S_2)} + \Vert (\mfc_{\dot{\varphi}}) \Vert_{\mathcal{Z}_\CC}. 
\end{aligned} \label{eq:chihestimate4}
\end{align}

Similarly, integrating the representation formula for$\Db^{2}\ombd$, \eqref{eq:repDu2ombD} yields
\begin{align}
\begin{aligned}
&\left[\Db^{2}\ombd - \frac{1}{2v^{3}}\Divdo\Divdo\ab + \frac{1}{3v^{3}}\Divdo(8-\Divdo\DDd_{2}^{*})\Divdo \QQ_{3} \right]_{1}^{v}\\
-&\left[\frac{1}{4v^{4}}\left(2(\Ldo-3) +\frac{1}{4}\Ldo(\Ldo+2) \right)\QQ_{2}+\frac{1}{3v^{3}}\Divdo\left(8-\Divdo\DDd_{2}^{*}\right)\di(\Ldo+2)\QQ_{1}\right]_{1}^{v}\\
-&\left[\frac{1}{5v^{5}}\Divdo\left(8\Divdo\DDd_{2}^{*}-4+8\di\Divdo+\frac{5}{3}\Divdo\DDd_{2}^{*}\left(\Divdo \DDd_2^* + 1 + \di\Divdo\right)\right)\dot{\mathfrak{B}}\right]_{1}^{v}\\
+&\left[\frac{1}{6v^{3}}\Divdo(8-\Divdo\DDd_{2}^{*})\left(\Divdo \DDd_2^* + 1 + \di\Divdo\right)\Divdo\gdcd\right]_{1}^{v}\\
+&\left[\frac{1}{5v^{4}}\Divdo\left(4\Divdo\DDd_{2}^{*}-2+4\di\Divdo+\frac{5}{6}\Divdo\DDd_{2}^{*}\left(\Divdo \DDd_2^* + 1 + \di\Divdo\right)\right)\di\mfc_{\dot{(\Om\tr\chib)}}\right]_{1}^{v}\\
=&\Divdo\left(36+2(8-\Divdo\DDd_{2}^{*})\left(\Divdo \DDd_2^* + 1 + \di\Divdo\right)\right)\Divdo\left(\int_{1}^{v}\frac{1}{15v'^{5}}\chih dv'\right)\\
& + \int_{1}^{v} h_{\Db^{2}\ombd} dv'
\end{aligned}\label{eq:prescribechih5}
\end{align}
We have that $\Db^{2}\ombd$ is glued by prescribing the integral over $\frac{1}{v^5} \chihd$. On modes $l\geq2$, the operator 
$$\Divdo\left(36+2(8-\Divdo\DDd_{2}^{*})\left(\Divdo \DDd_2^* + 1 + \di\Divdo\right)\right)\Divdo$$  has trivial kernel and is elliptic, therefore we have the following estimate
\begin{align} 
\begin{aligned} 
\left\Vert \int_1^2 \frac{1}{v'^5}\chihd dv' \right\Vert_{H^8(S_1)} \les& \Vert \xdmf_1 \Vert_{\mathcal{X}(S_1)} + \Vert \MMf_2 \Vert_{\ZZ_\MMf(S_2)} + \Vert (\mfc_{\dot{\varphi}})\Vert_{\mathcal{Z}_\CC}. 
\end{aligned} \label{eq:chihestimate5}
\end{align} 

\ni\newline\ni This finishes the construction of novel components of the $C^3$-null data such that at $S_{1}$ and $S_{2}$ the matching \eqref{eq:spheredatamatchingcx} holds. Along with the gluing \eqref{eq:C2spheredatamatching} for the $C^2$-null data, we can glue up to the $20$-dimensional space of charges all components of the $C^3$-sphere data.  Next, we need to ensure that we can prescribe $\Omd$ and $\chihd$ such that the integral conditions \eqref{eq:prescribeOmD1}, \eqref{eq:prescribechih4} and \eqref{eq:prescribechih5} are satisfied.

\begin{remark}
In fact, due to the existence of the conservation law $\QQ_{\Db\abd^{[\geq3]}_\psi}$ innate to the $C^3$-null gluing problem, the gluing of $\Db\omb^{[\geq3]}$ in the $C^3$ setting is done through the matching of $\QQ_{\Db\abd^{[\geq3]}_\psi}$ rather than through the representation formula, \eqref{eq:repformDuomb} as it is in the $C^2$-null gluing problem.
\end{remark}

\ni The proof of the next lemma follows by an orthogonality argument and combines the integrals appearing in the representation formulas of the $C^2$-part of the linearised sphere data $\xd$.

\begin{lemma}[Existence of $\Omd$ and $\chihd$.] \label{lem:existenceofOmdchihd} Consider a scalar function $h$ and symmetric tracefree $2$-tensors $I_k$ for $0\leq k\leq 5$ on $\HH_{0,[1,2]}$ defined by 
\begin{align*}
h=\int_1^2\Omd dv', \qquad I_k=\int_1^2\frac{1}{v'^k}\chihd dv', \qquad 0\leq k \leq 5.
\end{align*}
The integrals $h$ and $I_k$ are orthogonal for all $0\leq k\leq 5$. Moreover, the integrals are independent of the prescription of the quantities 
\begin{align*}
&D^i\Omd(1), \qquad D^i\Omd(2),  \qquad 0\leq i\leq3\\
&D^j\chihd(1), \qquad D^j\chihd(2),  \qquad 0\leq j \leq2.
\end{align*}
\end{lemma}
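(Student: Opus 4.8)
The plan is to reduce the statement to an elementary one-variable interpolation--moment problem, solve it with fixed explicitly chosen $v'$-profiles, and read off the orthogonality and independence directly from that construction. First I would note that at Minkowski $D=\Lied_{\partial_v}$, so every operator entering the endpoint functionals and the integrals acts only in the $v'$-direction and commutes with the angular structure; moreover $h$ depends only on $\Omd$, the $I_k$ depend only on $\chihd$, and $(\Omd,\chihd)$ is freely prescribable characteristic seed data on $\HH_{0,[1,2]}$. Hence the $\Omd$-part and the $\chihd$-part decouple, and each reduces to a statement about functions of $v'\in[1,2]$ valued in the (angular) tensor bundle, with the angular variable entering only as a parameter.

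For $\chihd$ the endpoint data $\{D^j\chihd(1),\,D^j\chihd(2):0\le j\le 2\}$ constitute six jet conditions and $I_0,\dots,I_5$ six weighted-moment conditions. I would write $\chihd=\chihd_\partial+\chihd_{\mathrm{int}}$, where $\chihd_\partial$ is a sum of cutoff functions localized near $v'=1$ and near $v'=2$ multiplied by the Taylor polynomials realizing the prescribed jets, so that $\chihd_\partial$ reproduces exactly the endpoint data. The key observation is that the six functionals $\chihd\mapsto I_k=\int_1^2 v'^{-k}\chihd\,dv'$, restricted to fields supported in a fixed compact subset of $(1,2)$, are linearly independent: a relation $\sum_k a_k I_k\equiv 0$ on all such fields forces $\sum_k a_k v'^{-k}\equiv 0$ on a subinterval, and multiplying by $v'^5$ turns this into a polynomial identity, so all $a_k$ vanish. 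Since linearly independent functionals admit a dual family, there are fixed scalar profiles $\rho_0,\dots,\rho_5\in C^\infty_c((1,2))$ with $\int_1^2 v'^{-j}\rho_k\,dv'=\delta_{jk}$; writing the prescribed values of the integrals as $(I_j)_0$ and setting $\chihd_{\mathrm{int}}=\sum_j\big((I_j)_0-\int_1^2 v'^{-j}\chihd_\partial\,dv'\big)\,\rho_j$ (applied componentwise) leaves the endpoint jets untouched, the $\rho_j$ vanishing to infinite order at $v'=1,2$, while realizing the six integrals. This simultaneously shows that $I_0,\dots,I_5$ are mutually independent and independent of $D^j\chihd(1),D^j\chihd(2)$.

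For $\Omd$ the argument is lighter: with eight endpoint jet conditions (orders $\le 3$ at $v'=1,2$) and the single integral $h=\int_1^2\Omd\,dv'$, I would take $\Omd=\Omd_\partial+c\,\rho$ with $\rho\in C^\infty_c((1,2))$, $\int_1^2\rho\,dv'=1$, and $c=h_0-\int_1^2\Omd_\partial\,dv'$; the functional $\Omd\mapsto h$ is nonzero on interior-supported functions, hence independent of $D^i\Omd(1),D^i\Omd(2)$ for $0\le i\le 3$. Combining the two constructions yields the asserted orthogonality: $h$ and the $I_k$ sit in different pieces of free data, so they can be prescribed independently of one another and, by the two localizations above, independently of all the listed endpoint quantities.

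Finally, since every object produced above is a fixed $v'$-profile multiplied by the angular target field, the construction is linear and its output is controlled in the relevant $H^8$-type norms by the targets with a universal constant, which feeds directly into the estimates of Section \ref{sec:prooflingluingest}. I expect the main obstacle, though an entirely routine one, to be the bookkeeping of these Sobolev norms and of their uniformity in the angular variable, together with verifying that the finite list of endpoint data handled here matches precisely the one prescribed in \eqref{eq:prescribechihdD2}, \eqref{eq:prescribeOmD1} and in the imported $C^2$-analysis of \cite{ACR2}.
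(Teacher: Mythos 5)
Your proposal is correct and is essentially the argument the paper intends: the paper offers no detailed proof, only the remark that the lemma ``follows by an orthogonality argument,'' and your fleshing-out — linear independence of the weights $1,v'^{-1},\dots,v'^{-5}$ on $[1,2]$ (via the polynomial identity after multiplying by $v'^5$), a dual family of compactly supported interior profiles, and cutoff Taylor corrections at $v'=1,2$ to decouple the endpoint jets — is precisely the standard construction used for the analogous $C^2$ statement in \cite{ACR2}. No gaps.
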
 

\ni This finishes the proof of Proposition \ref{prop:C3lin2}. Letting the $C^3$-null data be the combination $$\dot{\mfx}=\xd \times \check{x}$$ and applying the gluing \eqref{eq:C2spheredatamatching} proves the gluing \eqref{eq:spheredatamatching}.

\subsubsection{Proof of estimate \eqref{eq:estimatelingluing}}\label{sec:prooflingluingest}

Finally, in this section we prove the estimate \eqref{eq:estimatelingluing},
\begin{align*}
&\Vert\dot\mfx\Vert_{\XX(\HH_{[1,2]})} + \Vert \dot{f}\Vert_{\YY_{f^{(4)}}}+\Vert\dot{q}\Vert_{\YY_q} + \Vert \PP_{f,q}(\dot{f},\dot{q})\Vert_{\XX(S_{2})},\\
 \les& \Vert\dot\mfx_{1}\Vert_{\XX(S_{1})} + \Vert{\tilde{\mathfrak{M}}}_{2}\Vert_{\ZZ_{\mathfrak{M}}(\tilde{S}_{2})} + \Vert \mfc_{\dot\varphi}\Vert_{\ZZ_{\CC}}.
\end{align*}

\ni Combining the estimate \eqref{eq:estimateforpertfunc} and \eqref{eq:novelQestimate} with the estimate \eqref{eq:C2estimatelingluing}, we obtain
 \begin{align}
\begin{aligned}\label{eq:finalest1}
&\Vert \dot{f} \Vert_{\YY_{f^{(4)}}} + \Vert\dot{q}\Vert_{\YY_{q}}+ \Vert \dot{\PP}_{f,q}(\dot f,\dot q) \Vert_{\mathcal{X}(S_2)}\\
\les &\Vert \dot{\mfx}_{1}\Vert_{\XX(S_{1})}+ \Vert {{\MMf}}_{2}\Vert_{\XX({S}_{2})}+\Vert \mfc_{\dot{\varphi}}\Vert_{\ZZ_\CC}.
\end{aligned}
\end{align}

\ni From Lemmas \ref{lem:repformDuab} and \ref{lem:repformDu2omb} along with the estimate  \eqref{eq:C2estimatelingluing} we have the estimate
\begin{align}\label{eq:finalest2}
\Vert \dot{\mfx} \Vert_{\XX(\HH_{0,[1,2]})}\les \Vert \xdmf_1 \Vert_{\mathcal{X}(S_1)} + \Vert \MMf_2 \Vert_{\ZZ_\MMf(S_2)} + \Vert \mfc_{\dot{\varphi}}\Vert_{\ZZ_{\CC}}.
\end{align}
Combining \eqref{eq:finalest1} and \eqref{eq:finalest2} yields 
 \begin{align*}
&\Vert \dot{\mfx} \Vert_{\XX(\HH_{0,[1,2]})}+\Vert \dot{f} \Vert_{\YY_{f^{(4)}}} + \Vert\dot{q}\Vert_{\YY_{q}}+ \Vert \dot{\PP}_{f,0}(\dot f) \Vert_{\mathcal{X}(S_2)} +\Vert \dot{\PP}_{0,q}(\dot{q}) \Vert_{\mathcal{X}(S_2)}\\
\les &\Vert \dot{\mfx}_{0,1}\Vert_{\XX^(S_{1})}+ \Vert \dot{{\MMf}}_{2}\Vert_{\XX({S}_{2})}+\Vert \mfc_{\dot{\varphi}}\Vert_{\ZZ_\CC}
\end{align*}
as required. This completes the proof of Proposition \ref{prop:lingluing} and solves the linearised $C^3$-null gluing problem.

\appendix
\addtocontents{toc}{\protect\setcounter{tocdepth}{1}}

\section{Double null coordinates}\label{sec:doublenullcoords}

\ni In this section, we outline the double null coordinate formalism used throughout the paper, see \cite{ChrFormationBlackHoles} for more details.

\subsection{Double null foliation}

We begin by constructing the double null coordinate system $(u,v,\th^1,\th^2)$ on a spacetime $(\MM,\bf g)$. Let $\D$ denote the covariant derivative and $\Rbf$ the Riemann curvature tensor of $(\MM,\g)$ and consider two optical functions defined to satisfy the \emph{eikonal equations}, $$|\D u|^2_{\g}=0,\qquad |\D v|^2_{\g}=0.$$ Locally,  $(\MM,\g)$ can be foliated by the level sets of $u$ and $v$. That is, the level sets of $u$ are outgoing null hypersurfaces denoted by $\HH$ and the level sets of $v$ are ingoing null hypersurfaces $\HHb$. The null hypersurfaces $\HH$ and $\HHb$ intersect at $2$-dimensional spacelike spheres denoted by $S$. For two real numbers $v_0>u_0$, define 
$$S_{u_0,v_0}:=\{u=u_0,v=v_0\}.$$
Similarly, $\HH_{u_0}:=\{u=u_0\}$ and $\HHb_{v_0}:=\{v=v_0\}$. The angular coordinates $(\th^1,\th^2)$ are first defined as angular coordinates on $S_{u_0,v_0}$ and extended to the entire foliation by propagating them along the generators of $\HH_{u_0}$ and  $\HHb_v$. This completes the construction of the double null coordinate system $(u,v,\th^1,\th^2)$. In this coordinate system the metric $\bf g$ can be written as 

$${\bf g} = -4 \Om^2 du dv + \gd_{AB} \left(d\th^A + b^A dv\right)\left(d\th^B + b^B dv\right)$$
where the scalar function $\Om$ is the \emph{null lapse} and the $S_{u,v}$-tangential vectorfield $b=b^A \pr_A$ is the \emph{shift vector} and by construction satisfies
$$b=0 \text{ on } \HH_{u_0}.$$
Each $S_{u,v}$ is a $2$-dimensional manifold with induced Riemannian metric $\gd$ and covariant derivative $\Nabs$. Moreover, we let $K$ denote the Gauss curvature of each $(S_{0,v},\gd)$. Using the coordinates $(\th^1,\th^2)$, define the round unit metric by 
\begin{equation}\label{eq:roundmetric}
\gac:= \left(d\th^1\right)^2 + \sin^2 \th^1 \left(d\th^2\right)^2.
\end{equation}
Moreover, we consider the conformal decomposition of $\gd$,
\begin{equation}\label{eq:confgc}
\gd=\phi^2\gd_c
\end{equation}
where the conformal factor $\phi$ and $\gd_c$ are determined by $\mathrm{det}\gd_c=\mathrm{det}\gac$.
We now decompose the connection and Riemann tensor $\Rbf$ of $(\MM,\g)$ into the \emph{Ricci coefficients} and \emph{null curvature components} as in \cite{ChrFormationBlackHoles}. First, we define the vectorfields 
\begin{equation}\label{eq:defL}
(L,\Lb)=(\Om\widehat{L},\Om\widehat{\Lb}):=(\pr_v+b^A\pr_{\th^A},\pr_u)
\end{equation}
where $(\widehat{L},\widehat{\Lb})$ are \emph{normalised} in the sense that they satisfy  $\gd\left(\widehat{L},\widehat{\Lb}\right)=-2$. Furthermore, for a general $S_{u,v}$-tangent tensorfield $W$ define $D$ and $\Db$ to be the Lie derivatives
\begin{align}
DW:= \Lied_L W, \qquad \Db W:= \Lied_\Lb W, \label{eq:defDDu}
\end{align}
where $\Lied$ denotes the projection of the Lie derivative on $(\MM,\g)$ onto the tangent space of $S_{u,v}$. For $S_{u,v}$-tangent vectorfields $X$ and $Y$, define the \emph{Ricci coefficients} by
\begin{align}
 \begin{aligned}
\chi(X,Y) :=& \g(\D_X \widehat{L},Y), & \chib(X,Y) :=& \g(\D_X \widehat{\Lb},Y), &
\zeta(X) :=& \frac{1}{2} \g(\D_X \widehat{L}, \widehat{\Lb}), \\
\eta :=& \zeta + \di \log \Om ,&\om :=& D \log \Om, & \omb :=& \Db \log \Om,
\end{aligned}\label{eq:riccicoeffs}
 \end{align}
where $\di$ denotes the extrinsic derivative of $S_{u,v}$. It holds that 
\begin{align} 
\begin{aligned} 
\underline{\zeta}=-\zeta, \,\, \etab=-\eta+2\di\log\Om.
\end{aligned} \label{eq:riccirelationetazeta} 
\end{align}

For $S_{u,v}$-tangent vectorfields $X$ and $Y$, define the \emph{null curvature components} by
\begin{align} 
\begin{aligned}
\alpha(X,Y) :=& \Rbf(X,\widehat{L}, Y, \widehat{L}), &\!\! \beta(X) :=& \frac{1}{2} \Rbf(X, \widehat{L},\widehat{\Lb},\widehat{L}), 
 & \!\! \rh :=& \frac{1}{4} \Rbf(\widehat{\Lb}, \widehat{L}, \widehat{\Lb}, \widehat{L}), \hspace{2 cm} \\
  \ab(X,Y) :=& \Rbf(X,\widehat{\Lb}, Y, \widehat{\Lb}), &\!\!  \beb(X) :=& \frac{1}{2} \Rbf(X, \widehat{\Lb},\widehat{\Lb},\widehat{L})
 , & \!\! \sigma \iin(X,Y) :=& \frac{1}{2} \Rbf(X,Y,\widehat{\Lb}, \widehat{L}), \raisetag{2\baselineskip}
 \end{aligned}\label{eq:nullcurvaturecomp}
\end{align}
where $ \iin$ denotes the area form of $(S_{u,v},\gd)$. Finally, consider the splitting of $\chi$ and $\chib$ into a trace part (with respect to $\gd$) and tracefree part denoted by $\chih$ and $\chibh$,
$$\chi=\chih+\frac{1}{2}\tr\chi\gd, \quad  \chib=\chibh+\frac{1}{2}\tr\chib\gd.$$

\subsection{Null structure equations}

Writing the vacuum Einstein field equations \eqref{eq:EVE} in terms of the Ricci coefficients and null curvature components imply \emph{null structure equations}. We first introduce notation as in chapter 1 of \cite{ChrFormationBlackHoles}.
 
For two $S_{u,v}$-tangential $1$-forms $X$, and $Y$, we define
 \begin{align}
\begin{aligned}
 \Divd X :=& \Nabs^A X_A, &  (X,Y):=& \gd(X,Y), \\
 \Curld X :=& \in^{AB}\Nabs_A X_B, & (X \widehat{\otimes} Y)_{AB} :=& X_A Y_B + X_B Y_A - (X \cdot Y)\gd_{AB},\\
 ({}^* X)_A :=& \in_{AB}X^B,&  (\Nabs \widehat{\otimes} Y)_{AB} :=& \Nabs_A Y_B + \Nabs_B Y_A - (\Divd Y)\gd_{AB}
\end{aligned}
\end{align}
 Similarly, for a symmetric $S_{u,v}$-tangential $2$-tensor $V$ let the divergence, trace and the tracefree part of $V$ be
\begin{align*}
\begin{aligned}
\left(\Divd V\right)_A := \Nabs^B V_{BA,}
\quad \tr V := \gd^{AB} V_{AB},
\quad \widehat{V} := V - \frac{1}{2} \tr V \gd,
\end{aligned}
\end{align*}
and for a symmetric $S_{u,v}$-tangential $2$-tensor $V$ and a $1$-form $X$,
\begin{align*} 
\begin{aligned} 
(V \cdot X)_A := V_{AB}X^B.
\end{aligned} 
\end{align*}
For a symmetric $S_{u,v}$-tangential tensor $W$, let $\widehat{D}W$ and $\widehat{\Db}W$ denote the tracefree parts of $DW$ and $\Db W$ with respect to $\gd$, respectively.

We now state the subset of the null structure equations associated with the transport equations along an outgoing null hypersurface $\HH$.
We have the \emph{first variation equation},
\begin{align}  \label{eq:firstvariationeq}
D \gd =& 2 \Om \chi,
\end{align}
which, by the decomposition \eqref{eq:confgc}, implies that
\begin{align} 
D \phi = \frac{\Om \tr \chi \phi}{2}, 
\label{eq:relationtoDphi}
\end{align}
the \emph{Raychauduri equations}, 
\begin{align}
D \trchi +\frac{\Om}{2} (\trchi)^2 -\om\tr\chi =& - \Om \vert \chih \vert^2_{\gd},
\label{eq:raychauduri}\end{align}
and the following null transport equations for the Ricci coefficients
\begin{align} \begin{aligned}
D\chih =& \Om \vert \chih \vert^2 \gd + \om \chih - \Om \a, \\
D \eta =& \Om (\chi \cdot \etab - \beta),  \\
D \omb =& \Om^2(2 (\eta, \etab) - \vert \eta \vert^2 -\rh), \\
D\etab =& - \Om (\chi \cdot \etab -\be) + 2 \di \om, 
\end{aligned} \label{eq:transporteqriccicoeffs}\end{align}
as well as the transport equation for $\Db\omb$
\begin{align} 
\begin{aligned} 
D\Db\omb =& -12\Om^{2}(\et-\di\log\Om,\di\omb)+2\Om^{2}\om((\et,-3\et+4\di\log\Om)-\rh)\\
&+12\Om^{3}\chib(\et,\et-\di\log\Om)+\Om^{3}(\beb,7\et-3\di\log\Om)\\
&+\frac{3}{2}\Om^{3}\tr\chib\rh+\Om^{3}\Divd\bb+\frac{\Om^{3}}{2}(\chih,\ab)
\end{aligned} \label{eq:DDuom}
\end{align}
Furthermore, we have the \emph{Gauss equation}, 
\begin{align} \label{eq:gauss}
K + \frac{1}{4} \tr \chi \tr \chib - \frac{1}{2} (\chih,\chibh) = - \rh,
\end{align}
where $K$ denotes the Gauss curvature of $S_{u,v}$ and the \emph{Gauss-Codazzi equations}
\begin{align} 
\begin{aligned}
\Divd \chih -\frac{1}{2} \di \tr \chi + \chih \cdot \zeta - \frac{1}{2} \trchi \zeta =& - \beta,\\
\Divd \chibh - \frac{1}{2} \di \trchib -\chibh \cdot \zeta +\frac{1}{2} \trchib \zeta =& \beb.
\end{aligned}\label{eq:gausscodazzi}
\end{align}

We have the following higher-order null structure equation for $\Dbh\ab$ which follows from equation (12.214) in \cite{ChrFormationBlackHoles}.
 \begin{align}
\begin{aligned}
\Dh\Dbh\ab =& \frac{1}{2}\Om\tr\chi\widehat{\Db}\ab-\frac{1}{4}(\Om\tr\chi)(\Om\tr\chib)\ab+\Om \Nabs\widehat{\otimes} \left( \frac{3}{2} \Om \tr \chib \beb + \Om \Divd \aa\right)\\
& -2\om\widehat{\Db}\ab-4\widehat{\LIE}_{\Om^{2}\ze}\ab+\frac{1}{2}\Om\chibh(\Om\chih,\ab)-\frac{1}{2}\Om\chih(\Om\chibh,\ab)\\
&+2\Om^{2}(2(\et,\et-2\di\log\Om)+|\et-2\di\log\Om|^{2}+\rho)\ab\\
&+ \frac{1}{2}\Om^{2}\ab(2\Divd\et+2|\et|^{2}-(\chih,\chibh)+2\rh)\\
&-\Om\omb ( \Nabs \widehat{\otimes} \beb - (5\et-9\di\log\Om) \widehat{\otimes} \beb + 3 \chibh \rh -3 {}^* \chibh \si) \\
&+\Om \bigg\{ \Nabs\widehat{\otimes} (\Om \chibh \cdot \beb +\omb \beb +\Om (\et-2\di\log\Om) \cdot\aa)- 2\Om\chibh\Divd\bb\\
&- 2\widehat{\Db\Gammad}\cdot\bb+(5\Om(\chib\cdot\et+\bb)+\di\om)\widehat{\otimes}\bb\\
&+(5\et-9\di\log\Om)\widehat{\otimes} \left(\frac{3}{2} \Om \tr \chib \beb - \Om \chibh \cdot \beb -\omb \beb + \Om \left( \Divd \aa - (\et-2\di\log\Om) \cdot \aa\right)\right) \\
&+2\Om(\bb,5\et-9\di\log\Om)\chibh+3(\om\chibh-\Om\ab)\rh-3(\om{}^{*}\chibh-\Om{}^{*}\ab)\si\\
&+3\chibh\left(- \frac{3}{2} \Om \tr \chib \rh- \Om \left( \Divd \beb + (\et+\di\log\Om, \beb)+ \frac{1}{2} (\chih,\aa) \right)\right)\\
&-3^{*}\chibh\left(- \frac{3}{2} \Om \tr \chib \si - \Om \left( \Curld \beb + (\et+\di\log\Om, {}^* \beb) + \frac{1}{2}\chih \wedge \aa \right)\right)\bigg\}, \raisetag{9\baselineskip}
\end{aligned}\label{eq:Duhab}
\end{align}
where $\widehat{\Db\Gammad}$ is the trace-free part of $\Db\Gammad$ with
\begin{equation*}
(\Db\Gammad)^C_{AB}=(\gd^{-1})^{CD}(\Nabs_A(\Omega\chib)_{BD}+\Nabs_B(\Omega\chib)_{AD}-\Nabs_D(\Omega\chib)_{AB}).
\end{equation*}
We also have the following higher-order null structure equation for $\Db^2\omb$ which follows from the transport equations \eqref{eq:transporteqriccicoeffs} and \eqref{eq:DDuom}.
\begin{align}
 \begin{aligned}
 D\Db^{2}\omb=& -4\Om^{3}({\Om\tr\chib}){\Divd\bb}-3\Om^{2}(\Om\tr\chib)^{2}{\rh}-\Om^{3}\Divd(\Om{\Divd\ab})-12\Om^{2}|\di\omb|^{2}\\
 &-7\Om^{4}|\bb|^{2} -16\Om^{2}(\et-\di\log\Om,\di\Db\omb)-40\Om^{2}\omb(\et-\di\log\Om,\di\omb)\\
 &+24\Om^{2}\chib(3\et-2\di\log\Om,\di\omb)+2\Om^{2}(\omb^{2}+\Db\omb)((\et,-3\et+4\di\log\Om)-\rh)\\
 &+\Om^{3}\left(\beb,12\di\omb-\frac{3}{2}\di\Om\tr\chib\right)+72\Om^{3}\omb\chib(\et,\et-\di\log\Om)+6\Om^{3}\omb{\Divd\bb}\\
 &+6\Om^{3}\omb(\beb,7\et-3\di\log\Om)-\Om^{4}\chib(\bb,38\et-15\di\log\Om)+\frac{1}{2} \Om^{3}(\chibh,\Db\ab) \\
 &-12\Om^{4}(\chib\times\chib)(\et,5\et-4\di\log\Om)-\frac{1}{2}\Om^{4}\tr\chib(\bb,31\et-9\di\log\Om)\\
 &+\Om^{2}\left(9\omb\Om\tr\chib-\frac{3}{2}\Om^{2}|\chih|^{2}\right){\rh}-12\Om^{4}\ab(\et,\et-\di\log\Om)\\
 &-\Om^{4}(\Divd\ab-(\et-2\di\log\Om)\ab,7\et-3\di\log\Om)\\
 &-\Om^{3}\Divd(\Om(\chibh\cdot\bb-(\et-2\di\log\Om)\cdot\ab))-2\Om^{4}(\chib,(\chih\times\ab))\\
 &+\frac{1}{2}\Om^{3}\left(4\omb\chih-\Om\tr\chib\chih+\Om(\Nabs\widehat{\otimes}\et+\et\widehat{\otimes}\et)-\frac{\Om}{2}\tr\chi\chibh,\ab\right).
  \end{aligned}\label{eq:Du2omb}
 \end{align}

Along with the null structure equation there exists the null Bianchi equations, see Proposition 1.2 of \cite{ChrFormationBlackHoles}, the only relevant null Bianchi equation to this paper is 
\begin{equation}\label{eq:nullBianchiab}
\widehat{D} \aa - \frac{1}{2} \Om \tr \chi \aa + 2 \om \aa + \Om \left( \Nabs \widehat{\otimes} \beb + (4 \etab - \zeta) \widehat{\otimes} \beb + 3 \chibh \rh -3 {}^* \chibh \si \right) =0.
\end{equation}

\section{The linearisation of null structure equations and conservation laws}\label{sec:linsec}

\subsection{The linearised hierarchy of null structure equations}\label{sec:linCC}

In this section, we present the explicit formulas for the linearised hierarchy of transport equations from Section \ref{sec:charseed} first given in \cite{ACR2}. Using the linearisation procedure, \eqref{eq:formallin}, we have the following linearisations.
\begin{align*} 
\begin{aligned} 
\dot{\gd} = 2r \phid \gac + v^2 \gdcd, \,\, \omd = D\Omd, \,\, \ombd = \Db\Omd, \,\, \etabd =& - \etad + 2 \di \Omd.
\end{aligned}
\end{align*}
By equation (242) in \cite{DHR} the linearisation of the Gauss curvature is
\begin{align*}
\Kd =& \frac{1}{2v^2} \Divdo \Divdo \gdcd - \frac{1}{v^3} (\Ldo +2) \phid.
\end{align*}

The linearised hierarchy of null structure equations is as follows.

\begin{align} 
\begin{aligned} 
\dot{\CC}_\phid=& D^2 \phid - 2 \omd = D(D\phid-2\Omd), \\
\dot{\CC}_{\omtrchid}=& v^2 \left(2D\left(\frac{\phid}{v}\right)-\omtrchid\right) ,\\
\dot{\CC}_\chihd=& v^2 D\gdcd - 2 \chihd, \\
\dot{\CC}_\etad=& \frac{1}{v^2} D\left(v^2 \etad\right) - \frac{4}{v} \di \Omd - \frac{1}{v^2} \Divdo \chihd + \frac{1}{2} \di \omtrchid,\\
\dot{\CC}_{\omtrchibd}=& \frac{1}{v^2} D\left(v^2 \omtrchibd\right) - \frac{2}{v} \omtrchid + \frac{2}{v^2} \Divdo \left(\etad-2\di \Omd\right) +2 \Kd + \frac{4}{v^2} \Omd,\\
\dot{\CC}_\chibhd=& v D\left(\frac{\chibhd}{v}\right) - 2 \DDd_2^* \left(\etad-2\di\Omd\right) -\frac{1}{v} \chihd ,\\
\dot{\CC}_\ombd=& D\ombd - \Kd -\frac{1}{2v}\omtrchibd + \frac{1}{2v} \omtrchid - \frac{2}{v^2} \Omd, \\
\dot{\CC}_\abd=& v D\left(\frac{\abd}{v}\right) - 2 \DDd_2^* \left(\frac{1}{v^2} \Divdo \chibhd - \frac{1}{2} \di \omtrchibd - \frac{1}{v} \etad\right), \\
\dot{\CC}_{\Db\ombd}=& D \left(\Db \ombd\right) -\frac{3}{v} \left(\Kd +\frac{1}{2v}\omtrchibd - \frac{1}{2v} \omtrchid+ \frac{2}{v^2} \Omd\right)\\
&- \frac{1}{v^2} \Divdo \left(\frac{1}{v^2} \Divdo \chibhd - \frac{1}{2} \di \omtrchibd - \frac{1}{v} \etad\right),\\
\dot{\CC}_{\a}=& \ad + D\chihd.\\
\end{aligned}\label{eq:linearisedconstraints}
\end{align}

\subsection{Conserved charges from the linearised $C^2$-null gluing problem}\label{sec:C2QQ}

In this section, we present the conserved charges appearing in the analysis of the linearised $C^2$-null gluing problem. These charges are conserved along $\HH_{[1,2]}$ and are useful in the derivation of representation formulas for higher-order quantities in Section \ref{sec:C3repformulas}. See Section 4 \cite{ACR2} for their derivation and analysis.

\begin{align} 
\begin{aligned} 
\QQ_{\etad^{[1]}}:=& v^2 \etad^{[1]} + \frac{v^3}{2}\di \left(\omtrchid^{[1]}-\frac{4}{v}\Omd^{[1]}\right), \\
\QQ_\phid :=&\frac{v}{2} \left(\omtrchid-\frac{4}{v}\Omd\right) + \frac{\phid}{v}, \\
\QQ_{\omtrchibd} :=& v^2 \omtrchibd -\frac{2}{v}\Divdo \left(v^2\etad+\frac{v^3}{2}\di\left(\omtrchid-\frac{4}{v}\Omd\right)\right) \\
&-v^2 \left(\omtrchid-\frac{4}{v}\Omd\right) +2v^3 \Kd, \\
\QQ_\chibhd :=& \frac{\chibhd}{v} -\frac{1}{2} \left( \DDd_2^* \Divdo +1\right) \gdcd + \DDd_2^* \left( \etad + \frac{v}{2}\di \left(\omtrchid-\frac{4}{v}\Omd\right)\right) \\
&- v \DDd_2^* \di \left(\omtrchid-\frac{4}{v}\Omd\right), \\
\QQ_{\abd_\psi} :=& \frac{\abd_\psi}{v} + 2 \DDd_2^* \left(\frac{1}{v^2} \Divdo \chibhd - \frac{1}{v} \etad- \frac{1}{2} \di \omtrchid + \DDd_1^* \left(\ombd,0\right)\right)_{\psi},\\
\QQ_{\ombd^{[\leq1]}}:=& \ombd^{[\leq1]} +\frac{1}{4v^2}\QQ_{\omtrchibd}^{[\leq1]} +\frac{1}{3v^3} \Divdo\QQ_{\etad^{[1]}}, \\
\QQ_{\Db\ombd^{[\leq1]}} :=& \Db\ombd^{[\leq1]} - \frac{1}{6v^3} (\Ldo-3) \QQ_{\omtrchibd}^{[\leq1]} +\frac{1}{v^4} \Divdo \QQ_{\etad^{[1]}}, \\
\QQ_{\Db\ombd^{[2]}} :=& \Db\ombd^{[2]} +\frac{3}{2v^3} \QQ_{\omtrchibd}^{[2]}+ \frac{1}{2v^2} \Divdo \Divdo \QQ_\chibhd^{[2]} -\frac{12}{v^2} \QQ_\phid^{[2]} \\
&+ \frac{3}{2v^2} \Divdo \left(\eta+ \frac{v}{2}\di\left(\omtrchid-\frac{4}{v}\Omd\right) \right)^{[2]} - {\frac{3}{4v^2} \Divdo \Divdo \gdcd }^{[2]}.
\end{aligned} \label{eq:C2charges}
\end{align}

\section{Setup of transversal sphere perturbations}\label{sec:transversalspherepert}

We briefly outline the setup of the transversal sphere perturbations here, see also Appendix A in \cite{ACR2}. Let $\SSt$ be a two dimensional sphere in $(\MM, \g)$ with sphere data $\tilde x_{0,2}$. Let $(\ut,\vt,\tht^{1},\tht^{2})$ be a local double null coordinate system about $\SSt$ with $\ut$ and $\vt$ such that $\SSt=\{\ut=0,\vt=2\}$. The metric takes the form
\begin{align*} 
\g = - 4 \tilde{\Om}^2 d\tilde{u} d\tilde{v} + \tilde{\gd}_{CD} (d\tilde{\th}^C - \tilde{b}^C d\tilde{v})(d\tilde{\th}^D - \tilde{b}^D d\tilde{v}).
\end{align*}
  The null geometry then follows along the same lines as Appendix \ref{sec:doublenullcoords}. We define the following vectorfields in the $(\ut,\vt,\tht^{1},\tht^{2})$ coordinates.
\begin{itemize}
\item The \emph{normalised null vectorfields} are defined by $\widehat{\Lbt}:=-2\Omt\D\vt$ and $\widehat{\Lt}:=-2\Omt\D\ut$,
\item The \emph{equivariant null vectorfields} are defined by $\Lbt:=-2\Omt^{2}\D\vt$ and $\Lt:=-2\Omt^{2}\D\ut$.
\end{itemize}
The equivariant null vectorfields can be expressed in the coordinate basis as
\begin{align*}\begin{aligned}
\Lt=\pr_{\vt}+\tilde{b}^{A}\pr_{\tht^{A}}, \, \, \Lbt=\pr_{\ut}.
\end{aligned}\end{align*}
The Ricci coefficients and null curvature components are then defined in the usual way
\begin{align*} 
\begin{aligned} 
\tilde{\chi}_{AB} :=& \g(\D_{\tilde A} \widehat{\tilde L},\pr_{\tilde{B}}), & \tilde{\chib}_{AB} :=& \g(\D_{\tilde{A}} \widehat{\tilde \Lb},\pr_{\tilde{B}}), & \tilde{\zeta}_A :=& \frac{1}{2} \g(\D_{\tilde{A}} \widehat{\tilde L},\widehat{\tilde \Lb}), \\
 \tilde\eta :=& \tilde \zeta + \tilde\di \log \tilde\Om, & \tilde\om :=& \tilde L \log \tilde \Om, & \tilde{\omb} :=& \tilde{\Lb} \log \tilde\Om,\\
\tilde{\alpha}_{AB} :=& \Rbf(\pr_{\tilde A},\widehat{\Lt}, \pr_{\tilde B}, \widehat{\Lt}), & \tilde\beta_A :=& \frac{1}{2} \Rbf(\pr_{\tilde A}, \widehat{\Lt},\widehat{\Lbt},\widehat{\Lt}), & \tilde{\rh} :=& \frac{1}{4} \Rbf(\widehat{\Lbt}, \widehat{\Lt}, \widehat{\Lbt}, \widehat{\Lt}), \\
 \tilde{\sigma} \tilde{\iin}_{AB} :=& \frac{1}{2} \Rbf(\pr_{\tilde A},\pr_{\tilde B},\widehat{\Lbt}, \widehat{\Lt}), & 
\tilde\beb_A :=& \frac{1}{2} \Rbf(\pr_{\tilde A}, \widehat{\Lbt},\widehat{\Lbt},\widehat{\Lt}), & \tilde\ab_{AB} :=& \Rbf(\pr_{\tilde A},\widehat{\Lbt}, \pr_{\tilde B}, \widehat{\Lbt}).
\end{aligned}
\end{align*}
where $\tilde{\di}$ and $\tilde{\iin}_{AB}$ denote the exterior derivative on and area element of spheres $\tilde{S}_{u,v}$, respectively.

Now, we make the transformation $\ut\rightarrow u$ on the level set $\tilde\HHb_{2}:=\{\vt=2\}$. This is done through the introduction of a scalar function $f(u,\th^{1},\th^{2})$ which will become our transversal perturbation function. Define $(u,v,\th^{1},\th^{2})$ on $\tilde\HHb_{2}$ by
\begin{align} 
\begin{aligned} 
\tilde{u}=u+f(u,\th^1,\th^2), \,\, \vt=v, \,\, \tht^1=\th^1, \,\, \tht^2=\th^2.
\end{aligned} \label{eq:defucoord}
\end{align}
For $f$ sufficiently small $(u,v=2,\th^{1},\th^{2})$ is a coordinate system on $\tilde\HHb_{2}$. We have the following relations, proved in \cite{ACR2}.
\begin{align}\begin{aligned}
\pr_{u} =& (1+\pr_u f) \pr_{\ut}, &
\pr_{\th^A} =& \pr_{\tht^A} + (\pr_{\th^A} f) \pr_{\tilde u},\\ 
\Om^2 =& \tilde{\Om}^2 (1+\pr_u f), &
\gd^{AB} =& \tilde{\gd}^{AB}.
\end{aligned}\label{eq:relationsforu}
\end{align}
Define $\Lb:=\pr_{u}$ and let $L$ be the unique vectorfield such that
\begin{align*} 
\begin{aligned} 
\g(L,\Lb) = -2 \Om^2, \,\, \g(L,\pr_{\th^1})=\g(L,\pr_{\th^2})=0.
\end{aligned}
\end{align*}
Finally, define $\widehat{L} := \Om^{-1} L$ and $\widehat{\Lb} := \Om^{-1} \Lb$.

\subsection{Ricci coefficients and null curvature components on $\tilde\HHb_{2}$} \label{sec:perturbationrelations} Using the vectorfields $(\widehat{L},\widehat{\Lb})$, we have the Ricci coefficients and null curvature components as defined in Section \ref{sec:doublenullcoords}, $(\chi, \chib, \ze, \eta, \om, D\om, \omb, \Db\omb,\\ D^{2}\om,\Db^{2}\omb)$. Here, we explore how they transform under the coordinate transform \eqref{eq:defucoord}. Transformation formulae for the quantities $(\chi, \chib, \ze, \eta, \om, D\om, \omb, \Db\omb)$ can be found in Appendix A of \cite{ACR2}. Here, we derive the formulae for $D^{2}\om$ and $\Db^{2}\omb$. \\

{\bf Gauge transformation of $\Db^{2}\omb$.} Recall by definition that $\omb=\Lb \log \Om$, then
\begin{align}\begin{aligned}
\Db^{2}\omb=&\pr_{u}^{2}(\pr_{u}\log\Om)\\
=&\tilde\Db^{2}\tilde\omb(1+\pr_{u}f)^{3}+3\tilde\Db\tilde\omb(1+\pr_{u}f)\pr_{u}^{2}f+\tilde\omb\pr_{u}^{3}f \\
&-\frac{3(\pr_{u}^{3}f)(\pr_{u}^{2}f)}{2(1+\pr_{u}f)^{2}} + \frac{(\pr_{u}^{2}f)^{3}}{(1+\pr_{u}f)^{3}}+\frac{\pr_{u}^{4}f}{2(1+\pr_{u}f)}.
\end{aligned}\label{eq:Du2ombcoordu}\end{align}

{\bf Gauge transformation of $D^{2}\om$.} Firstly, recall that the vector $L$ under the coordinate transform \eqref{eq:defucoord} is given by
\begin{equation*}
L = \left( \tilde{\Om}^2\vert \Nabs f \vert_\gd^2 \right) \Lbt+ \Lt + \left(2\tilde{\Om}^2 \tilde{\gd}^{AC}\pr_C f \right) \pr_{\tht^A},
\end{equation*}
the coordinate $v=\vt$ satisfies the equation $L(\vt)=1$ or equivalently, $L'(\vt)=\Om^{-2}$. Then
\begin{align}\begin{aligned}
D^{2}\om = L(L(L\log\Om)) =&\Om^{2}L'\left(\Om^{2}L' \left(-\frac{1}{2}\Om^{4}L'\left(L'\left(\vt\right)\right)\right)\right)\\
=&\Om^{2}L'\left(\left(\Om^{4}L'\left(L'\left(\vt\right)\right)\right)^{2}-\frac{\Om^{6}}{2}L'\left(L'\left( L'\left(\vt\right)\right)\right)\right)\\
=& 14\om D\om - 56\om^{3}-16\Om^{4}\om^{2}-\frac{1}{2}\D_{L}\D_{L}\D_{L}\D_{L}\vt,
\end{aligned}\label{eq:D2ombcoordu}\end{align}
where we have used the geodesic equation $\D_{L'}L'=0$.
We now analyse the null curvature components $\{\a,\widehat{D}\a,\ab,\widehat{\Db}\ab\}$.\\

{\bf Gauge transformation of $\ab$.} By definition,
\begin{align}\begin{aligned}
\ab_{AB}:=&\Rbf(\pr_{ A},\widehat{\Lb}, \pr_{B}, \widehat{\Lb})\\
=&\Rbf( \pr_{\tht^A} + (\pr_{\th^A} f) \pr_{\tilde u},\Om^{-1}\Lb, \pr_{\tht^B} + (\pr_{\th^B} f) \pr_{\tilde u},\Om^{-1}\Lb)\\
=&(1+\pr_u f)\Rbf(\pr_{\tht^A},\widehat{\Lbt},\pr_{\tht^B},\widehat{\Lbt})\\
=&(1+\pr_u f)\tilde\ab_{AB},
\end{aligned}\label{eq:abcoordu}\end{align}
where we have used that $\Rbf$ is antisymmetric in its first two arguments.\\

{\bf Gauge transformation of $\Dbh\ab$.} Using the identity $\Dbh\ab = \Db\ab-\Om(\chibh,\ab)\gd$. Apply the transformation formulae for $\ab$ and $\chibh$, we have that
\begin{align}\begin{aligned}
\Dbh\ab =& \pr_{u}\ab - \Om(\chibh,\ab)\gd\\
=&(1+\pr_{u}f)^{2}\pr_{\ut}\tilde\ab+(\pr_{u}^{2}f)\tilde\ab-\tilde\Om(1+\pr_{u}f)^{2}(\tilde\chibh,\tilde\ab)\tilde\gd\\
=&(1+\pr_{u}f)^{2}\tilde\Dbh\tilde\ab+(\pr_{u}^{2}f)\tilde\ab.
\end{aligned}\label{eq:Duabcoordu}\end{align}
Where we have used \eqref{eq:relationsforu} and that $\chibh$ transforms as
\begin{equation*}
\chibh_{AB} = \frac{\tilde{\Om}}{\Om} (1+\pr_u f) \tilde{\chibh}_{AB}.
\end{equation*}
 For transversal perturbations, we have that the linearisations are $\Db^{2}\ombd = \partial_{u}^{3}(\frac{1}{2}\partial_{u}\dot{f})$ taking the $u$-derivative of the expression for $\Db\ombd$. Since $\abd=0$, $\Db\abd= 0$. 
 
 \begin{remark}
 The two remaining curvature components to consider are $\a,\Dh\a$.  At most, they depend on $\pr_{u}^{2}f$ and both of their linearisations vanish.
 \end{remark}

\section{Fourier analysis on the round unit sphere}\label{sec:sphharm}

In this section, we define Hodge operators on spheres as defined in \cite{ChKl14} and spherical harmonic decompositions. We also analyse the operators appearing in Section \ref{sec:conservationlaws}.

\subsection{Hodge operators and spherical harmonics}\label{sec:defhodge}

We first define the following Hodge operators on sphere $S_{u,v}$.
\begin{definition}[Hodge operators]\label{def:Hodgeoperators} Let $(S, \gd)$ be a $2$-sphere. Let $V$ be a $1$-form and $W$ a $2$-tensor then define the operators
\begin{align*}
\DDd_1(V) := (\Divd V, \Curld V), \qquad 
\DDd_2(W) := \Divd W.
\end{align*}
 Their adjoints are given by
\begin{align}
\begin{aligned}
\DDd_1^{*}(f_1,f_2):= -\di f_1 + {}^*\di f_2, \qquad 
\DDd_2^{*}(V)_{AB} := -\frac{1}{2} \left( \Nabs_A V_B + \Nabs_B V_A - (\Divd V) \gd_{AB} \right).
\end{aligned}\label{eq:hodgeops}
\end{align}
\end{definition}

Using the Hodge operators, we define spherical harmonics on the round unit sphere as follows, see e.g. \cite{Cz16}.

\begin{definition}[Spherical harmonics] The following are defined with respect to the coordinates  $(\th^1,\th^2)$ on the round unit sphere.
\begin{itemize}
\item Denote by $Y^{(lm)}$ be the standard (real-valued) spherical harmonics on the round unit sphere for integers $l\geq0$, $-l \leq m \leq l$,
\item For $l\geq1$, $-l \leq m \leq l$, define the \emph{electric} and \emph{magnetic} vectorfields by
\begin{align*} 
E^{(lm)} := \frac{1}{\sqrt{l(l+1)}} \DDd_1^* (Y^{(lm)},0), \,\, H^{(lm)} :=\frac{1}{\sqrt{l(l+1)}} \DDd_1^* (0, Y^{(lm)}).
\end{align*}
\item For $l\geq2$, $-l \leq m \leq l$, define the  \emph{electric} and \emph{magnetic} tracefree symmetric $2$-tensors by
\begin{align*} 
\psi^{(lm)}:= \frac{1}{\sqrt{\frac{1}{2} l(l+1)-1}} \DDd_2^* \left(E^{(lm)}\right), \,\, \phi^{(lm)} := \frac{1}{\sqrt{\frac{1}{2} l(l+1)-1}} \DDd_2^* \left(H^{(lm)}\right).
\end{align*}
\end{itemize}
\end{definition}

\ni \textbf{ Notation.} Throughout this paper, we make use of the following notation.
 On the round unit sphere, the decomposition of $L^2$-integrable functions $f$, vectorfields $V$ and tracefree symmetric $2$-tensors $W$ is expressed as follows.
\begin{align*}
f =& \sum\limits_{l\geq0} \sum\limits_{-l\leq m \leq l} f^{lm} Y^{(lm)}, \\
V=&  \sum\limits_{l\geq1} \sum\limits_{-l\leq m \leq l} V^{lm}_E E^{(lm)} + V^{lm}_H H^{(lm)},\\
W =&  \sum\limits_{l\geq2} \sum\limits_{-l\leq m \leq l} W^{lm}_\psi \psi^{(lm)} +W^{lm}_\phi \phi^{(lm)},
\end{align*}
where 
\begin{align*} 
\begin{aligned} 
f^{(lm)} :=& \int_{\SSS^2} f Y^{(lm)} d\mu_{\gac}, &&\\
V_E^{(lm)} :=& \int_{\SSS^2} V \cdot E^{(lm)} d\mu_{\gac}, &
V_H^{(lm)} :=& \int_{\SSS^2} V \cdot H^{(lm)} d\mu_{\gac}, \\
W_\psi^{(lm)} :=& \int_{\SSS^2} W \cdot \psi^{(lm)} d\mu_{\gac}, &
W_\phi^{(lm)} :=& \int_{\SSS^2} W \cdot \phi^{(lm)} d\mu_{\gac}.
\end{aligned} 
\end{align*}
$d\mu_\gac$ is the volume element of the round unit metric on $\SSS^2$ and $\cdot$ denotes the inner product with respect to $\gac$. Moreover, for a scalar function $f$ we denote, for integers $l' \geq0$,
\begin{align*} 
\begin{aligned} 
f^{[l']} = \sum\limits_{l=l'} \sum\limits_{-l\leq m \leq l} f^{lm} Y^{(lm)}, \,\, f^{[\geq l']} = \sum\limits_{l\geq l'} \sum\limits_{-l\leq m \leq l} f^{lm} Y^{(lm)}, \,\,f^{[\leq l']} = \sum\limits_{0\leq l\leq l'} \sum\limits_{-l\leq m \leq l} f^{lm} Y^{(lm)}
\end{aligned} 
\end{align*}
and similarly for $V$ and $W$.\\

\begin{lemma}[Decomposition of differential operators] We use spherical harmonics to  decompose the following operators acting on scalar fields $f$, vectorfields $V$ and tracefree symmetric $2$-tensors $W$ .
For $l\geq1$, 
\begin{align}
 \begin{aligned}
(\di f)_E^{(lm)}  =& - \sqrt{l(l+1)} f^{(lm)}, &(\di f)_H^{(lm)}  =& 0,\\
(\DDd_1^*(0,f))_E^{(lm)}  =& 0, &
(\DDd_1^*(0,f))_H^{(lm)}  =& \sqrt{l(l+1)} f^{(lm)},\\
(\Divdo V)^{(lm)} =& \sqrt{l(l+1)} V_E^{(lm)}, &&
\end{aligned}
\end{align}
and for $l\geq2$,
\begin{align} 
\begin{aligned}
\DDd_2^*(V)_\psi^{(lm)} =& \sqrt{\frac{1}{2} l(l+1)-1} \, V_E^{(lm)}, & \DDd_2^*(V)_\phi^{(lm)} =& \sqrt{\frac{1}{2} l(l+1)-1} \, V_H^{(lm)}, \\
(\Divdo W)_E^{(lm)}=& \sqrt{\frac{1}{2} l(l+1)-1} \, W_\psi^{(lm)}, &
(\Divdo W)_H^{(lm)}=& \sqrt{\frac{1}{2} l(l+1)-1} \, W_\phi^{(lm)}.\\
\end{aligned}
\end{align}
\end{lemma}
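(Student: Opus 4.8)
The plan is to reduce every identity to the orthonormality of the scalar spherical harmonics and to the standard commutation formulas between the Hodge operators $\DDd_1,\DDd_2$ and their formal adjoints $\DDd_1^*,\DDd_2^*$ on the round unit sphere. The three ingredients I would set up first are: (i) the $Y^{(lm)}$ form an $L^2(\gac)$-orthonormal family of eigenfunctions of $\Ld$ with $\Ld Y^{(lm)}=-l(l+1)Y^{(lm)}$; (ii) the composition identities $\DDd_1\DDd_1^*(f_1,f_2)=(-\Ld f_1,-\Ld f_2)$ (using $\Divd\di=\Ld$ and $\Curld\di\equiv0$), and on the unit sphere $\DDd_2\DDd_2^*V=-\tfrac12(\Ld V+V)$ (the Bochner identity on $\SSS^2$, equivalently $\DDd_2^*\DDd_2+\DDd_1^*\DDd_1=-\Ld+1$ together with $\DDd_1\DDd_2^*=0$); (iii) the Hodge-orthogonality statements that $\int_{\SSS^2}E^{(lm)}\cdot H^{(l'm')}\,d\mu_\gac=0$ and $\int_{\SSS^2}\psi^{(lm)}\cdot\phi^{(l'm')}\,d\mu_\gac=0$, which follow from (ii) by integrating by parts ($\int E\cdot H=\tfrac1{l(l+1)}\int \DDd_1^*(Y,0)\cdot\DDd_1^*(0,Y')=\tfrac1{l(l+1)}\int(Y,0)\cdot(-\Ld)(0,Y')=0$, and similarly for the $2$-tensors).

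Next I would record the normalization consequences of (ii): $\int_{\SSS^2}E^{(lm)}\cdot E^{(l'm')}\,d\mu_\gac=\delta_{ll'}\delta_{mm'}$ and likewise for $H^{(lm)}$, by the same integration by parts as above with $\DDd_1\DDd_1^*=-\Ld$; and $\int_{\SSS^2}\psi^{(lm)}\cdot\psi^{(l'm')}\,d\mu_\gac=\delta_{ll'}\delta_{mm'}$, using $\DDd_2\DDd_2^*E^{(lm)}=-\tfrac12(\Ld+1)E^{(lm)}=\tfrac12\big(l(l+1)-1\big)E^{(lm)}$ together with the electric normalization just obtained; and the analogous statement for $\phi^{(lm)}$. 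So $\{E^{(lm)},H^{(lm)}\}_{l\ge1}$ is an orthonormal basis of $L^2$ vectorfields and $\{\psi^{(lm)},\phi^{(lm)}\}_{l\ge2}$ an orthonormal basis of $L^2$ tracefree symmetric $2$-tensors.

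With this in hand, each formula in the lemma is a one-line pairing. For the first block: $\di f=-\DDd_1^*(f,0)=-\sum_{l,m}f^{(lm)}\DDd_1^*(Y^{(lm)},0)=-\sum_{l,m}\sqrt{l(l+1)}\,f^{(lm)}E^{(lm)}$, so pairing against $E^{(l'm')}$ resp.\ $H^{(l'm')}$ and using orthonormality gives $(\di f)_E^{(lm)}=-\sqrt{l(l+1)}f^{(lm)}$ and $(\di f)_H^{(lm)}=0$; similarly $\DDd_1^*(0,f)=\sum\sqrt{l(l+1)}f^{(lm)}H^{(lm)}$ gives the next two lines. For $\Divdo$, use that $\DDd_1$ is the $L^2$-adjoint of $\DDd_1^*$: for a vector $V$, $(\Divdo V)^{(lm)}=\int V\cdot\di Y^{(lm)}\,d\mu_\gac=-\int V\cdot\DDd_1^*(Y^{(lm)},0)=-\int\DDd_1 V\cdot(Y^{(lm)},0)$, and then insert $\di Y^{(lm)}=-\sqrt{l(l+1)}E^{(lm)}$ to get $(\Divdo V)^{(lm)}=\sqrt{l(l+1)}V_E^{(lm)}$ (alternatively, directly from $\Divdo E^{(lm)}=\tfrac1{\sqrt{l(l+1)}}\Divdo\DDd_1^*(Y^{(lm)},0)=\tfrac1{\sqrt{l(l+1)}}(-\Ld Y^{(lm)})=\sqrt{l(l+1)}Y^{(lm)}$ and $\Divdo H^{(lm)}=0$). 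For the $l\ge2$ block: $\DDd_2^*E^{(lm)}=\sqrt{\tfrac12 l(l+1)-1}\,\psi^{(lm)}$ and $\DDd_2^*H^{(lm)}=\sqrt{\tfrac12 l(l+1)-1}\,\phi^{(lm)}$ by definition of $\psi^{(lm)},\phi^{(lm)}$, which after pairing gives $\DDd_2^*(V)_\psi^{(lm)}=\sqrt{\tfrac12 l(l+1)-1}\,V_E^{(lm)}$ and $\DDd_2^*(V)_\phi^{(lm)}=\sqrt{\tfrac12 l(l+1)-1}\,V_H^{(lm)}$ by expanding $V$ in the $E,H$ basis; and the $\Divdo$ formulas on $2$-tensors follow by the same adjointness argument, $(\Divdo W)_E^{(lm)}=\int W\cdot\DDd_2^*E^{(lm)}=\sqrt{\tfrac12 l(l+1)-1}\,W_\psi^{(lm)}$, and analogously with $H,\phi$.

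I do not expect a genuine obstacle here; the lemma is a bookkeeping statement. The only point requiring care is getting the Bochner constant right so that the normalization factor $\sqrt{\tfrac12 l(l+1)-1}$ comes out exactly — i.e.\ verifying $\DDd_2\DDd_2^*=-\tfrac12(\Ld+1)$ on the unit sphere (equivalently $\Delta\mkern-12mu/\,_{\text{Hodge}}$ acting on co-closed vectors has eigenvalue $l(l+1)-1$), which is where the Gauss curvature $K=1$ of the round unit metric enters. Everything else is linear algebra on the orthonormal frames.
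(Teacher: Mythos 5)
The paper states this lemma without proof---it is standard Hodge--Fourier bookkeeping on $\SSS^2$, implicitly deferred to \cite{ChKl14} and \cite{Cz16}---so there is no in-paper argument to compare against; your proposal supplies exactly the expected one (orthonormality of the $Y^{(lm)}$, the composition identities for $\DDd_1\DDd_1^*$ and $\DDd_2\DDd_2^*$, and adjointness of $\Divdo$ against $\di$ resp.\ $\DDd_2^*$), and it is correct in structure and in all the final formulas. Two small slips worth fixing before relying on the write-up: first, the parenthetical ``$\DDd_2^*\DDd_2+\DDd_1^*\DDd_1=-\Ld+1$'' is not well-formed (the two operators act on tensors of different rank); the identity you actually need and correctly use is $\DDd_2\DDd_2^*=-\tfrac12(\Ld+1)$ on vectorfields of the unit sphere, equivalently $2\DDd_2\DDd_2^*+\DDd_1^*\DDd_1=-2\Ld$. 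Second, integration by parts gives $(\Divdo V)^{(lm)}=-\int V\cdot\di Y^{(lm)}\,d\mu_{\gac}$, not $+\int V\cdot\di Y^{(lm)}\,d\mu_{\gac}$; with the corrected sign and $\di Y^{(lm)}=-\sqrt{l(l+1)}\,E^{(lm)}$ you recover $(\Divdo V)^{(lm)}=\sqrt{l(l+1)}\,V_E^{(lm)}$, consistent with the alternative derivation via $\Divdo E^{(lm)}=\sqrt{l(l+1)}\,Y^{(lm)}$ that you also give. Neither issue affects the validity of the overall argument.
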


\subsection{Analysis of elliptic operators}\label{sec:sphharmanalysis}

In this section, we use the spherical harmonic decomposition outlined in Section \ref{sec:defhodge} in order to analyse the kernel of specific elliptic operators that appear in Section \ref{sec:conservationlaws}.
Consider first the operator on the right-hand side of the representation formula \eqref{eq:repformulaforDbab}
\begin{equation*}
 \left(2-\Divdo\DD_2^*\right).
\end{equation*}
Let $V$ be a vectorfield with 
$$V=  \sum\limits_{l\geq1} \sum\limits_{-l\leq m \leq l} V^{lm}_E E^{(lm)} + V^{lm}_H H^{(lm)},\\$$

 Then the we can decompose the operator $2-\Divdo\DD_2^*$ acting on $X$ as
 \begin{align*}
 \left(2-\Divdo\DD_2^*\right)V =&  \sum\limits_{l\geq1} \sum\limits_{-l\leq m \leq l}\left(2-\left(\frac{1}{2} l(l+1)-1\right)\right)\left( V^{lm}_E E^{(lm)} + X^{lm}_H H^{(lm)}\right)\\
  =& \sum\limits_{l\geq1} \sum\limits_{-l\leq m \leq l}\left(3-\frac{1}{2} l(l+1)\right)\left( V^{lm}_E E^{(lm)} + V^{lm}_H H^{(lm)}\right).
 \end{align*}
Hence, when $l=2$ the righthand side vanishes and we have that the kernel of the operator, $2-\Divdo\DD_2^*$ consists of the vectorfields 
\begin{equation*}
\{V:V=V^{[2]}\}.
\end{equation*}
Now consider the operator on the right-hand side of the representation formula \eqref{eq:repDu2ombD}, 
\begin{equation*}
(8-\Divdo\DDd_{2}^{*})
\end{equation*}
acting on a vectorfield $V$. Then
 \begin{align*}
 \left(8-\Divdo\DD_2^*\right)V =&  \sum\limits_{l\geq1} \sum\limits_{-l\leq m \leq l}\left(8-\left(\frac{1}{2} l(l+1)-1\right)\right)\left( V^{lm}_E E^{(lm)} + X^{lm}_H H^{(lm)}\right)\\
  =& \sum\limits_{l\geq1} \sum\limits_{-l\leq m \leq l}\left(9-\frac{1}{2} l(l+1)\right)\left( V^{lm}_E E^{(lm)} + V^{lm}_H H^{(lm)}\right).
 \end{align*}
 Since $(9-\frac{1}{2} l(l+1))\neq0$ for integers $l$, the operator has no kernel.

\bibliographystyle{abbrv}

\end{document}